\newcommand*\colvec[3][]{
	\begin{pmatrix}\ifx\relax#1\relax\else#1\\\fi#2\\#3\end{pmatrix}
}
\newcommand*\colvecfive[5][]{
	\begin{pmatrix}\ifx\relax#1\relax\else#1\\\fi#2\\#3 \\#4\\#5\end{pmatrix}
}
\newtheorem{remark}{Remark}
\theoremstyle{remark}
\newtheorem{definition}{Definition}
\theoremstyle{definition}
\newtheorem{proposition}{Proposition}
\newtheorem{example}{Example}
\numberwithin{table}{section}
\numberwithin{figure}{section}
\journal{}
\begin{document}
	
\begin{frontmatter}
\title{Global Balance and Systemic Risk in Financial Correlation Networks}
		
\author[Unimi]{Paolo Bartesaghi}
\author[Ifisc]{Fernando Diaz-Diaz}
\author[Unimib]{Rosanna Grassi} 
\author[Unimib]{Pierpaolo Uberti}

\address[Unimi]{University of Milano, Via Conservatorio 7, 20122 Milano, Italy}
\address[Ifisc]{Institute of Cross-Disciplinary Physics and Complex Systems, IFISC (UIB-CSIC), 07122 Palma de Mallorca, Spain}
\address[Unimib]{University of Milano - Bicocca, Via Bicocca degli Arcimboldi 8, 20126 Milano, Italy}
		
\begin{abstract}
\added{The global balance index is used in the network literature to quantify how balanced a signed network is. In this paper we show that the global balance index of financial correlation networks can be used as a systemic risk measure. We define the global balance index of a network starting from a diffusive process that describes how the information spreads across nodes in a network, providing an alternative derivation to the usual combinatorial one.} The steady state of this process is the solution of a linear system governed by the exponential of the replication matrix of the process. We provide a bridge between the numerical stability of this linear system, measured by the condition number in an opportune norm, and the structural predictability of the underlying signed network. The link between the condition number and related systemic risk measures, such as the market rank indicators, allows the global balance index to be interpreted as a new systemic risk measure. A comprehensive empirical application to real financial data finally confirms that the global balance index of financial correlation networks represents a valuable and effective systemic risk indicator.
\end{abstract}
		
		\begin{keyword}
			Networks, Signed networks, Global balance, Systemic risk measures 
		\end{keyword}
	\end{frontmatter}

\section{Introduction}

The financial market is a striking example of a complex system that exhibits rich cooperative dynamics and collective behaviors. The correlation structure between the returns of different assets can change dramatically during major events such as crashes and bubbles. For example, during a crash, all stocks behave similarly, implying that the entire market acts as a single, highly synchronized community. On the contrary, during a bubble, a particular sector may over-perform, widening the gap between sectors or communities. 

Starting from the 2008 global financial crisis, frequent and severe crises affected the economic system. This fact renewed the interest of practitioners and \added{academics} for systemic risk measures. A unique definition of systemic risk is missing in the literature, probably for its complexity and for the multitude of potential triggering events: financial crises, sovereign debt crises, pandemics, wars and so on. Therefore, systemic risk has been studied from many different and alternative \added{points of view}: for example, \cite{31.giglio} proposes a macroeconomic approach; \cite{26.diebold2014network} and \cite{23.demirer} applies a network connectedness analysis; \cite{adrian2011covar}, \cite{36.laeven2016bank}, \cite{9.Bernal} and \cite{37.lopez2012CoVaR}, among others, focus on conditional value at risk as a measure of systemic risk; and \cite{aldasoro2018multiplex} investigates the inter-bank financial network as a source of systemic risk. 
An extensive review of the literature on the topic is \added{beyond} the scope of the present research and for more details we refer, for instance, to \cite{14.Bisias}, \cite{45.rodriguez2013systemic} and \cite{47.silva2017analysis}.

In \cite{Billio2012}, the authors introduce two econometric measures of systemic risk that capture the interconnectedness among the monthly returns of hedge funds, banks, brokers, and insurance companies. They implement principal components analysis and Granger causality tests for the purpose. Similar methodologies have been explored by \cite{53.zheng2012changes} and \cite{52.zhang2020global}. Our contribution is framed in this context since we propose to interpret as a systemic risk measure an index computed from the matrix of returns correlations and based on the spectral properties of the corresponding network. The fact that our index depends on the eigenvalues of the correlation matrix overcomes the principal issue of the correlation, i.e. its intrinsic pairwise structure. This issue is underlined in \cite{26.diebold2014network}, “Correlation-based measures remain widespread, yet they measure only pairwise association and are largely wed to linear, Gaussian thinking, making them of limited value in financial-market contexts.”


\added{In the literature on financial networks, correlation coefficients are often mapped onto the Euclidean space by introducing suitable distances. The most widely used methodology is based on the nonlinear transformation $d(i,j)=\sqrt{2(1-c_{ij})}$, introduced in \cite{Mantegna1999}, where $c_{ij}$ is the Pearson correlation between $i$ and $j$. Other transformations have been proposed to overcome some drawbacks of the previous one, such as the nonmetric distance $d(i,j)=-\log c_{ij}^{2}$ in \cite{Hasse2020}. While they both allow the construction of weighted undirected networks and the application of classical network theory tools, the former forbids indirect links to be shorter than direct links, and the latter fails to exploit the information contained in the correlation sign.}

From a network theory perspective, if we do not employ the transformations mentioned above, the assets in a market (or in a portfolio) can be modeled as \textit{nodes}, while pairs of assets are connected via \textit{edges} with an associated weight $w_{ij}\in [-1,1]$, which corresponds to the Pearson correlation between them. The result is a signed, weighted, undirected network with self-loops. Importantly, the presence of negative weights creates novel phenomena that are absent in unsigned networks, like structural balance \cite{Harary1953, Cartwright1956}. A network is said to be structurally balanced if the nodes can be split into two subsets, such that connections between nodes of the same subset are positive and connections between nodes of different subsets are negative \cite{Harary1953}. While perfect balance is a mathematical idealization, most empirical networks display statistical properties reminiscent of balanced networks \cite{Kirkley2019,Facchetti2011}, like a high proportion of balanced triads \cite{Gallo2023}. Because of this, several balance indices have been proposed to measure how close a given network is to a structurally balanced one \cite{Kunegis2010, Estrada2014, Facchetti2011, Aref2018, Kirkley2019, Talaga2023}. The network balance significantly influences its dynamics across various models, including linear consensus \cite{Altafini2013}, the voter model \cite{Li2015}, synchronization \cite{Schaub2016}, and contagion models \cite{Fan2012,Lee2023,DiazDiaz2025}. It also plays a key role in the emergence of polarization in social systems \cite{Neal2020,Talaga2023,Fraxanet2023,Estrada2025}, compatibility of social relationships \cite{Ruiz-Garcia2023} and conflict in history \cite{bartesaghi2024d}.

The theory of signed networks \cite{Zaslavsky1982} and their balance can provide interesting insight into the subtle relationship between predictability and overall systemic risk associated with groups of stocks or the entire network. A key insight, due to Harary, is that the level of structural balance impacts also on the performance of an investment portfolio.
In fact, to the best of our knowledge, the first paper that establishes a connection between signed networks and risk in financial networks is the one by Harary \textit{et al.} \cite{Harary2002}. In this paper, the authors address the question of how to design an effective hedging strategy for portfolio selection. They analyze the network of correlations between securities and conclude that a hedging strategy is effective, and thus the portfolio less risky, if there is at least one negative link and if the signed network is 
structurally balanced. However, the authors do not deal with systemic risk measures in the proper sense and do not study the relationship between these measures and structural balance indices. Indeed, the global balance index has never been interpreted as an indicator of systemic risk.

In this research, we show how to move from these structural indicators for signed networks to systemic risk measures as usually interpreted in the literature.
The logical path of the paper can be described as follows. We start from a non-conservative diffusive process on networks governed by the adjacency matrix, interpreted as a replication matrix of the information transmitted by the nodes. We interpret the local version of the structural balance as a measure of the amount of information that leaves one node and returns to the same node when the diffusion process has reached the steady state. From the local measure we move to the global balance index, which is known and widely used in the literature to quantify structural balance in signed networks. Then, we show that this global indicator is naturally related to a ratio of the condition numbers of the two linear systems defining the steady state of the diffusive process on the signed network and on the corresponding unsigned network, that is the network with all positive edges. Thus, a natural bridge is created between the structural balance in correlation networks and a measure of the sensitivity of the network to an external perturbation of the state of the nodes. The link to known algebraic measures of conditioning provides the further connection to systemic risk measures. The condition number of the correlation matrix and some suitable generalizations called market rank indicators (MRI) have already been proposed in the literature as systemic risk measures (see \cite{Uberti2020a}). We then compare the global balance of the signed correlation network with these systemic risk measures, which are part of the broader class of so-called proper measures of connectedness. This comparison leads us to explore the ability of the global balance to discriminate systemic events. Finally, we look at how such a global indicator performs as an effective measure of systemic risk, testing its performance on real financial data.

The paper is structured as follows. In Section \ref{Sect. Preliminaries} we recall the mathematical definitions. In Section \ref{Sect. Balance}
we show how the structural balance can be obtained  from a discrete diffusive process on the network. In Section \ref{Set. Risk measures} we outline a conceptual path from the balance indices to systemic risk measures. In Section \ref{Empirical analysis} we perform an empirical analysis to prove the effectiveness of the balance index as systemic risk measure. In Section \ref{Sect. Conclusion} we draw conclusions and outlines for future research.

\section{Preliminaries}
\label{Sect. Preliminaries}
\subsection{Signed networks and structural balance}
\added{A weighted undirected network is represented by a graph $G = (V,E,\mathbf{W})$, where $V$ is the set of $N$ nodes, and $E$ is the set of undirected edges. The weighted adjacency matrix $\mathbf{W}=[w_{ij}]$ of order $N \times N$, with $w_{ij}\in \mathbb{R}^{+}$, encodes all the adjacency relationships between nodes. Specifically, $w_{ij}>0$ if there is an edge between nodes $i$ and $j$, $w_{ij}=0$ otherwise. A weighted undirected network is said to be \textit{signed} if the edge weights can be positive or negative, that is $w_{ij}\in \mathbb{R}$. The edge signs depend on the nature of the relationship between the pair of nodes.}
It may relate to activating or inhibiting functions in biological systems \cite{McDonald2008}, trusting or distrusting links in social or political networks \cite{Galam1996}, cooperative or antagonistic relationships in economic settings \cite{Pandey2021}, correlated and anticorrelated behavior in the time series of asset returns in the financial context \cite{Estrada2021}. 

A binary adjacency matrix $\mathbf{A}=[a_{ij}]$ can be associated with the signed network $G$ by setting all non-null weights equal to $1$ or $-1$ according to the sign of the entries $w_{ij}$.
In the paper we will also need to refer to the underlying unsigned network, that is the graph obtained from $G$ by neglecting the edges sign. The matrices associated with this network are $|{\bf W}|$ and $|{\bf A}|$. The strength of a node $i$, in the signed network, is then defined as $s_{i}=\sum_{j}|w_{ij}|$ and the degree as $d_{i}=\sum_{j} |a_{ij}|$.

A signed graph $G$ is \textit{structurally balanced} if there are no negative cycles, that is, cycles with an odd number of negative edges (see \cite{Harary1953,Cartwright1956}). A balanced network is characterized by the balance theorem as a graph that admits a bipartition of the vertex set $V$ such that every edge between the two subsets is negative, while every edge within each subset is positive (see \cite{Harary1953}). 

A signed graph $G$ is called \textit{antibalanced} if $-G$ is balanced, that is, if the graph with opposite edge signs with respect to $G$ is balanced (see \cite{Harary1957}). An antibalanced graph does not contain cycles with an odd number of positive edges, that is all even cycles are positive and all odd cycles are negative. Equivalently, there is a bipartition of $V$ into two subsets, such that an edge is positive if and only if it has its endpoints in different subsets (see \cite{Harary1957,Zaslavsky2013}).

Finally, a signed graph $G$ is \textit{strictly unbalanced} if $G$ is neither balanced nor antibalanced. Structural balance can also be framed in terms of closed walks instead of cycles. In a network, a walk is a sequence of vertices where each adjacent pair is connected by an edge, and a closed walk is a walk that starts and ends at the same vertex.  The sign of a walk is the product of the signs of its edges. Unlike cycles, closed walks can revisit vertices and edges, making them simpler to compute. Indeed, while enumerating cycles is an NP-complete problem, the enumeration of walks can be solved in a polynomial time by computing the powers of the adjacency matrix. Thus, by utilizing closed walks, the analysis of structural balance becomes more computationally efficient. The use of walks also provides connections to the theory of random walks and Markov chains, as we prove in the next sections.

\subsection{Correlation networks from asset returns}

Let us consider the time series of the log-returns of $N$ assets represented by the random variables $X_{it}=\log \frac{P_{i,t+1}}{P_{i,t}}$,
where $P_{i,t}$ is the price at time $t$, and the corresponding standardized variables $\tilde{X}_{it}$, with $i=1,...,N$ and $t=1,...,T$.  Given the matrix of the standardized returns $\mathbf{\tilde{X}}\in {\mathbb R}^{N\times T}$, the correlation matrix is
$\mathbf{C}=\frac{1}{T} \mathbf{\tilde{X} }\mathbf{\tilde{X}}^T.$
We denote by $(\lambda_{i},{\bf {\phi}}_{i})$, $i=1 \dots N$, the eigenpairs of the matrix $\mathbf{C}$. Recall that $\sqrt{T\cdot \lambda_{i}}$ are the singular values of the matrix $ \mathbf{\tilde{X}}$, and
the number of nonzero singular values of $\mathbf{\tilde{X}}$ defines its rank.
Since correlations are inherently signed, neglecting the diagonal elements -- that are all equal to $1$ -- the matrix $\bf C$ can be viewed as the weighted adjacency matrix of a signed network $G_C = (V,E,\mathbf{\mathbf{C}})$, where the nodes are the assets and the signed weighted edges are the correlations between them. We will call $G_C$ the corresponding correlation network.

\section{From dynamics to balance indices}
\label{Sect. Balance}

Our aim is to quantify the level of reliability and structural predictability of the network and derive the local and global balance indices in an original way. We start from a discrete non-Markovian non-conservative diffusive process on the network. This process is then applied to the correlation network where, thanks to the features of the correlation matrix, it is possible to prove some important properties on the stability of the related dynamical problem.

\subsection{A non-conservative information diffusion process on signed networks}
\label{Model}

In \cite{Lambiotte2024} the authors propose a dynamical process based on the stochastic transition matrix for two walkers of opposite sign and nature. Conversely, we propose  a non-Markovian non-conservative diffusion process based on the adjacency matrix ${\bf A}$. We note that signed networks are useful for the study of non-conservative processes, as the presence of negative connections typically causes standard dynamics like linear consensus to become non-conservative (see \cite{Altafini2013}).

We assume that the state of a node in the network is described by a variable that can take positive or negative values. A high absolute value of this variable reflects a high information content. A positive (negative) value on a node is interpreted as a favorable (unfavorable) information content for that node. An edge transfers information between the two incident nodes. Negative edges flip the sign of walkers crossing them, while their sign remains unchanged when crossing positive edges. In other words, a positive walker becomes negative after crossing a negative edge, and is received by the destination node as the opposite of that in the source node, while it preserves its sign when crossing a positive edge. For instance, in a correlation network, if a piece of information is favorable to a given security, it remains favorable to all the positive correlated securities and it becomes unfavorable to the securities negatively correlated with the former. On the opposite, if a piece of information is unfavorable to a given security, it remains unfavorable to all the positively correlated securities, while it becomes favorable to the securities negatively correlated. An overall neutral information content on a node is represented by a state equal to zero. In general, although we refer to a discrete time process, the state variable on each node is assumed to be continuous.

Given these premises, two possible non-conservative diffusion processes can be designed. In the first process, a node replicates and transfers its entire information content to all its neighbors. In the second one, the node does not transfer its entire information content, but only the information received in the previous step of the process.

Let $x_{i}(t)$ be the information content of node $i$ at a discrete time $t\geq 0$.
In the first case, when the node transmits all its content to its neighbors, the process is described by the recursive relation $x_{i}(t)=x_{i}(t-1)+\sum_{j=1}^{N}A_{ij}x_{j}(t-1)$, and the state vector ${\bf x}(t)$ evolves according to ${\bf x}(t)={\bf x}(t-1)+{\bf A} {\bf x}(t-1)$. The adjacency matrix $\bf A$, called  \textit{replication} matrix in the framework of diffusive processes, captures how the original information is copied and transferred from node $i$ to node $j$. In a discrete setting, this process produces a new configuration of the whole network at the time step $t$, given by ${\bf x}(t)=({\bf I}+{\bf A})^{t}{\bf x}_{0}$, where ${\bf x}_{0}$ is the initial state vector.

In the process described above, a node retains its own information and diffuses all its content to its neighbors. 
More interestingly, let us suppose that, at time $t$, 
a node retains all what it has received up to time $t-1$
from its neighbors and it does not transmit its entire information content but only the portion \textit{just received} from its neighbors in the previous step $t-1$. In words, the node does not share all of its information content but only the part that \textit{it has just learned} from its neighbors. This is because what it knew before has already been sent in the previous steps of the diffusion process. Moreover, to ensure a greater generality, we assume that each node transmits only a fraction $\alpha(t)$ ($0<\alpha(t)\leq1$) of \added{the received information}. In summary, the dynamics is expressed by the following iterative relation, for $t\geq 0$,
\begin{equation}
	x_{i}(t+1)=x_{i}(t)+\alpha(t)\sum_{j=1}^{N}a_{ij}\left[x_{j}(t)-x_{j}(t-1) \right]
	\label{diffusion1}
\end{equation}
where $x_{i}(-1)\equiv 0$ and $\alpha(t)$ is the penalization factor. Setting $\Delta {\bf x}(t)={\bf x}(t)-{\bf x}(t-1)$, and rewriting in matrix form, it equals
\begin{equation}
	\Delta 	{\bf x}(t+1)={\alpha}(t){\bf A} \Delta {\bf x}(t)
	\label{diffusion2}
\end{equation}
where $\Delta {\bf x}(0)={\bf x}(0)= {\bf x}_{0}$. Simple computations show that Eq. \eqref{diffusion2} is equivalent to ${\bf x}(t)={\bf x}(t-1)+\left( \prod_{k=0}^{t-1}\alpha(k) \right) {\bf A}^{t} {\bf x}_{0}$. This allows to express the state vector ${\bf x}(t)$ at time $t$ in terms of the initial state vector ${\bf x}_{0}$ as
\begin{equation}
	{\bf x}(t)={\bf G}(t) {\bf x}_{0}
\end{equation}
where the matrix function ${\bf G}(t)$ is
\begin{equation*}
	{\bf G}(t)={\bf I}+\sum_{\tau=1}^{t}\left( \prod_{k=0}^{\tau-1}\alpha(k) \right) {\bf A}^{\tau}.
\end{equation*}
In particular, by setting $\alpha(k)=\frac{1}{k+1}$, for $k\geq 0$, the matrix function becomes ${\bf G}(t)=\sum_{\tau=0}^{t} \frac{1}{\tau!} {\bf A}^{\tau}$, so that
\begin{equation}
	{\bf x}(t)=\left[ \sum_{\tau=0}^{t} \frac{1}{\tau!} {\bf A}^{\tau}\right] {\bf x}_{0}.
	\label{state}
\end{equation}
Since $\lim_{t \to \infty}{\bf G}(t)=e^{\bf A}$, the asymptotic state is
\begin{equation}
	{\bf x}_{\infty}=e^{\bf A} {\bf x}_{0}.
	\label{asymptoticstate}
\end{equation}
\begin{remark}
The above derivation provides an alternative interpretation of the exponential operator. In the literature (see \cite{bartesaghi2024d}), the exponential operator usually measures the information diffusion in a Markovian context at $t=1$. We have shown that the same operator can also be interpreted as the equilibrium distribution of a non-Markovian process. This overcomes the issue of arbitrary tuning the time parameter at $t=1$. \added{In the Supplementary Material, Sec. 1, we show another possible choice for the coefficient $\alpha(k)$ and the resulting matrix function. However, for the remainder of the paper, we will always refer to the choice in Eq. \ref{state} for two main reasons: first, it allows for analytical derivations; second, the resulting operator, the matrix exponential, has been thoroughly studied in the mathematical literature, and many optimized algorithms exist to compute it efficiently.}
\end{remark}

\subsection{Balance indices}
Let us consider the process in Eq. \eqref{state}. We assume that the diffusion starts from a single node in the network. Let us set ${\bf x}_{0}={\bf e}_{i}$, where ${\bf e}_{i}$ is the $i-th$ vector of the standard basis in $\mathbb{R}^{N}$. 
Then the asymptotic state is 
\begin{equation}
	{\bf x}_{\infty}=[e^{\bf A}]_{i},
\end{equation}
where $[e^{\bf A}]_{i}$ denotes the $i-$th column of the exponential matrix. The amount of information originating from node $i$ that returns to the same node after the diffusion process reaches a steady state is then given by the diagonal element
\begin{equation}
	{x}_{\infty,i}=[e^{\bf A}]_{ii}.
\end{equation}
This description does not depend on the sign of the weights and can be repeated for both the matrices $\mathbf{A}$ and $|{\mathbf A}|$.
Due to the presence of signed edges, the amount of information lost in a non-conservative diffusion process through the network  can be measured by
\begin{equation}
	\kappa_{i}(G):=\frac{[e^{\bf A}]_{ii}}{[e^{|{\bf A}|}]_{ii}}.
	\label{Localbalance}
\end{equation}
This quantity, called \textit{local balance} of the node $i$, has been introduced in \cite{bartesaghi2024d} by both a combinatoric perspective and a diffusive approach based on the Lerman-Gosh Laplacian \cite{Lerman2024}. In this context, it compares the fraction of information originating from $i$ that comes back to $i$ in the actual network and in the balanced versions of the same network. The numerator in Eq. \eqref{Localbalance} represents the \textit{polarization} (see \cite{Lambiotte2024}) at each node, defined as the difference between the number of positive and negative walkers; the denominator is the total number of walkers coming back to that node.\footnote{Similarly, the term $(e^A)_{ij}$ captures the net influence that an initial shock in node $i$ exerts on node $j$. This object, known as \textit{signed communicability}, has recently been proposed in \cite{Estrada2025} and could be used to define an edge balance index: $\kappa_{ij} := \frac{(e^A)_{ij}}{(e^{|A|})_{ij}}$. Such an index captures the effect of the signed edges and the unbalance on the transmission of information between nodes.}

\begin{definition}
	The \textit{global balance} index of the signed network $G$ is defined as:
	\begin{equation}
		\kappa(G)
		=\frac{\sum_{i=1}^{N}[e^{\bf A}]_{ii}}{\sum_{i=1}^{N}[e^{|{\bf A}|}]_{ii}}
		=\frac{{\rm tr}[e^{\bf A}]}{{\rm tr}[e^{|{\bf A}|}]}
		=\frac{\sum_{i=1}^{N}e^{\lambda_{i}}}{\sum_{i=1}^{N}e^{\overline{\lambda}_{i}}}
		\label{Globalbalance}
	\end{equation}
	where $\lambda_{i}$ and $\overline{\lambda}_{i}$ are respectively the eigenvalues of ${\bf A}$ and $|{\bf A}|$.
\end{definition}

Let us recall the Acharya's theorem (see \cite{Acharya1980}), which proves that ${\bf A}$ and $|{\bf A}|$ have the same spectrum if and only if the network is balanced. Hence, a network is balanced if and only if $\kappa_{i}(G)=1, \forall i,$ or, equivalently, $\kappa(G)=1$ (see \cite{bartesaghi2024d}).

Finally, we observe that the above result still holds if we replace the binary matrix $\bf A$ with the weighted matrix $\bf W$. Moreover, the presence of identical loops in the network does not affect the computation of the balance indices. In other words, if the diagonal elements of the matrix are all equal, they do not contribute to the global balance.

More specifically, let $\bf A$ be a square matrix with real, possibly negative, entries and diagonal elements equal to $\chi \in {\mathbb R}^{+}$. Let $\tilde{\bf A}={\bf A}-\chi {\bf I}$, where ${\bf I}$ is the identity matrix, and $G$ and $\tilde G$ the corresponding graphs. Then $\kappa_{v}(G)=\kappa_{v}({\tilde G})$. In fact,

\begin{equation*}
		\kappa_{v}(G):=\frac{[e^{\bf A}]_{vv}}{[e^{|{\bf A}|}]_{vv}}
		=\frac{[e^{\tilde{\bf A}+\chi{\bf I}}]_{vv}}{[e^{|{\tilde{\bf A}}|+\chi{\bf I}}]_{vv}}
		=\frac{[e^{\tilde{\bf A}}\cdot e^{\chi{\bf I}}]_{vv}}{[e^{|{\tilde{\bf A}}|}\cdot e^{\chi{\bf I}}]_{vv}}
		=\frac{[e^{\tilde{\bf A}}]_{vv}}{[e^{|{\tilde{\bf A}}|}]_{vv}}=\kappa_{v}({\tilde G})
\end{equation*}

This result allows us to apply the described process to a correlation network $G_c$, using the correlation matrix $\bf {C}$, with diagonal elements $\chi=1$, as the replication matrix. 

\section{From balance indices to systemic risk measures}
\label{Set. Risk measures}
We now present the conceptual path from the illustrated balance indices to systemic risk measures on correlation networks. The first step is an assessment on the structural predictability of the diffusion model as a function of the signed network on which it operates. The second step is an analysis of the conditioning of the linear problem associated with the asymptotic state. This will naturally unveil a connection between the balance indices and a class of systemic risk measures, the market rank indicators, which fall within the broader set of so-called proper measures of connectedness.

\subsection{Interpretation of the balance indices in terms of network structural reliability/predictability}
\label{predictability}

We aim to highlight a crucial aspect of network balance, which is related to the degree of structural predictability of the dynamic processes occurring in the network and to the reliability of the latter in transmitting information. It is widely accepted that structurally balanced networks make the behavior of the dynamics largely predictable (see \cite{Lambiotte2024}). In particular, we discuss how the notion of \textit{structural predictability} should be interpreted. 
We begin by considering the toy examples of three and four nodes, whose modeling signed graphs can be exhaustively listed (see Fig. \ref{fig1}).

\begin{figure}[H]
	\centering
	\captionsetup[subfloat]{labelformat=empty}
	\subfloat[$K_3$]{\includegraphics[width=0.50\textwidth]{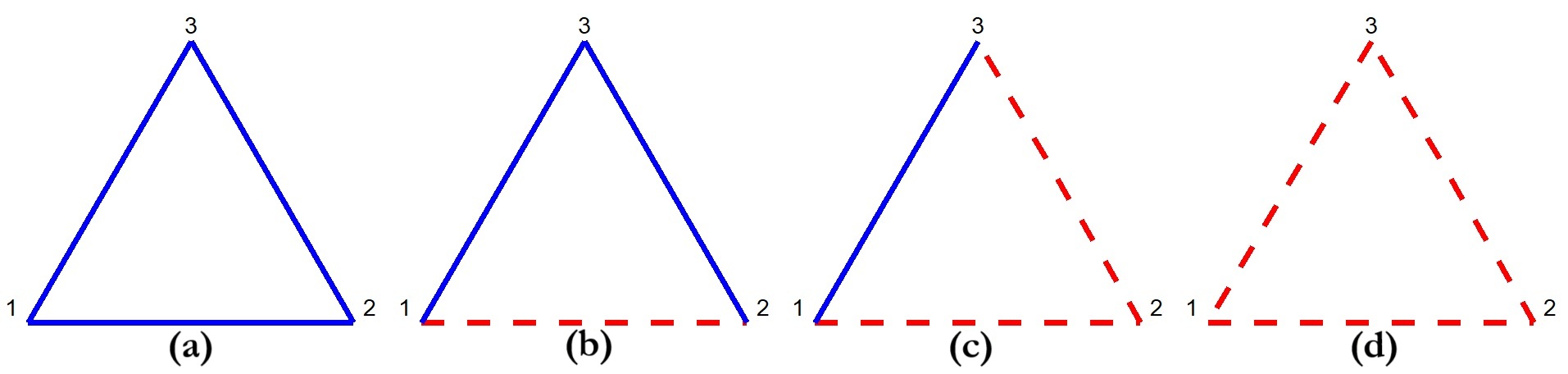}}\\
	\subfloat[$K_4$ balanced]{\includegraphics[width=0.50\textwidth]{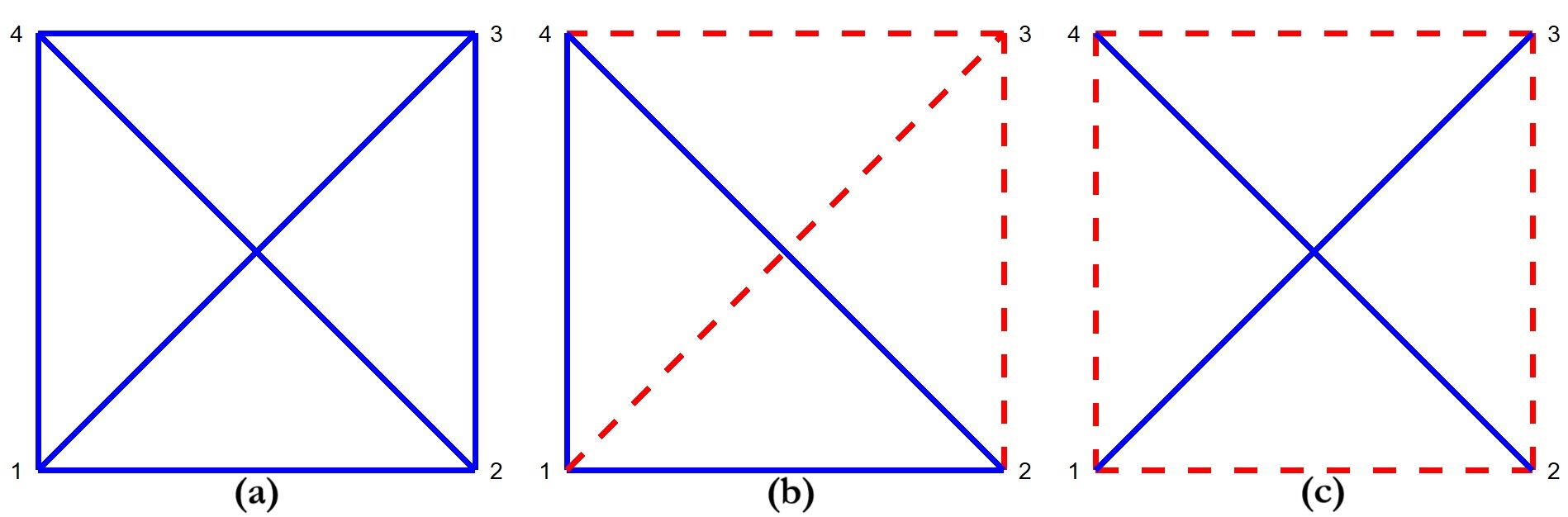}}\\
	\subfloat[$K_4$ unbalanced]{\includegraphics[width=0.50\textwidth]{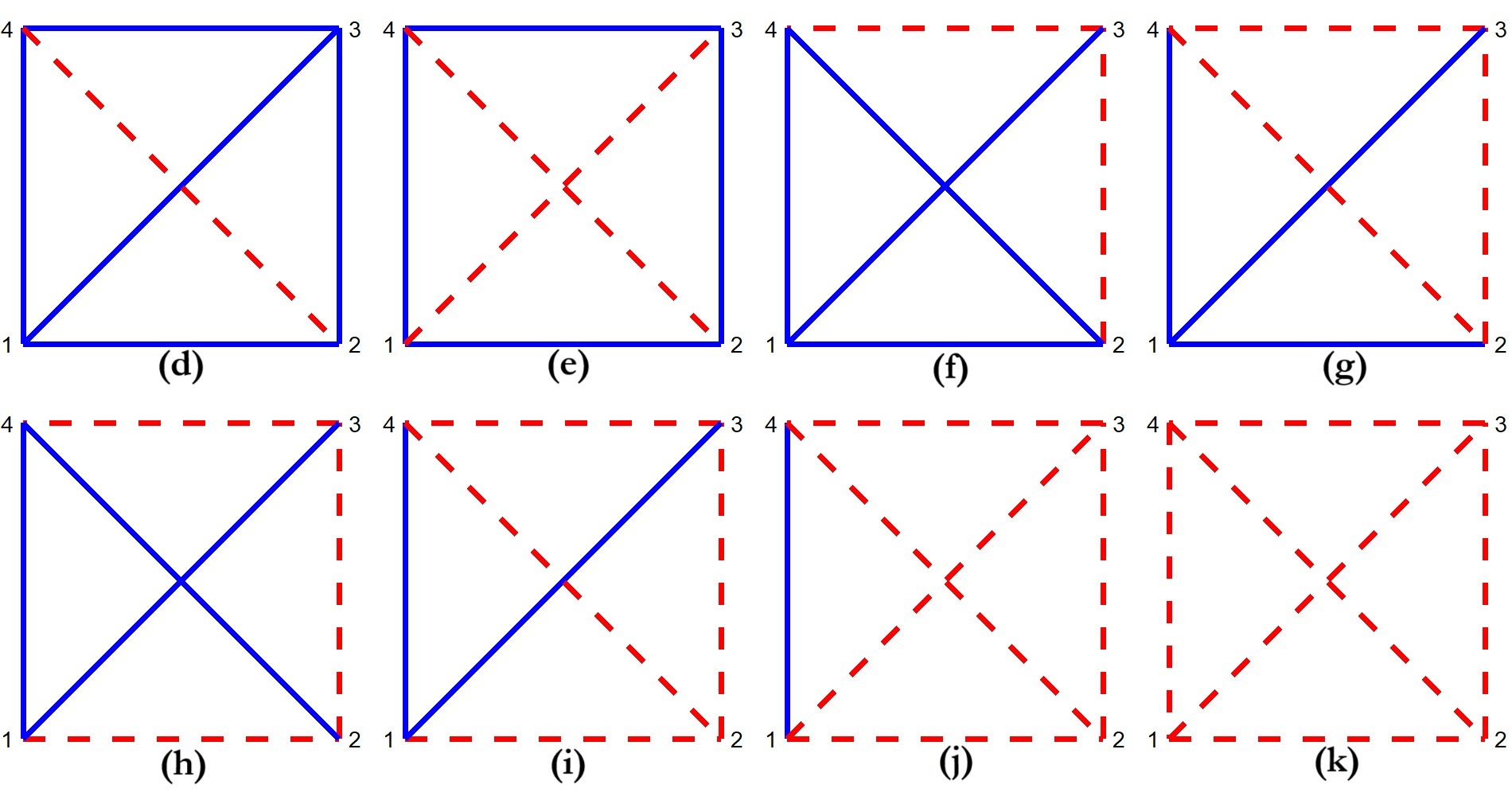}}
	\caption{All possible $K_3$ and $K_4$ signed graphs, up to isomorphism. In the first line, balanced triangles, (a) and (c), and unbalanced triangles, (b) and (d). In the second line, the three balanced $K_4$ graphs. In the third and fourth line, the eight unbalanced $K_4$ graphs, among which (e), (g) and (k) are antibalanced and the others are strictly unbalanced.}
	\label{fig1}
\end{figure}

With three nodes, four complete signed graphs $K_{3}$ are possible. The monotone answer of one node can be uniquely decided from that of the other two in the balanced triangles (a) and (c) in Fig. \ref{fig1}, first line, while this is not possible in the unbalanced triangles (b) and (d). In fact, in the last two cases, starting from the behavior of a node, however the triangle is traversed, it leads to the opposite behavior at the same node, showing the inherent instability and unpredictability of the structure.

To reinforce this idea, let us consider the case of a complete signed graph of four nodes. Three of them are balanced, namely (a), (b) and (c), see Fig. \ref{fig1}, second line. The other eight are unbalanced, see the third and the fourth line in Fig. \ref{fig1}; more precisely, three are antibalanced, (e), (g) and (k), and the remaining are strictly unbalanced.

For each graph, we apply the non-conservative model in Eq. \eqref{state}. We refer, for instance, to the four nodes graphs (b), (f), (k)  and we compare their behavior with the fully positive graph (a).

\begin{example}
	Suppose each node is in an initial state represented by the value of its attribute or endowment at time $t=0$. Graph $K_4$ (b) is balanced, so that $\kappa(G)=1$. If node $1$ increases its attribute, nodes $2$ and $4$ increase their own attributes as well, and the attribute of node $3$ can only decrease. For instance, starting from an initial uniform state ${\bf x}_{0}=[1,1,1,1]^{T}$, the asymptotic state is ${\bf x}_{\infty}=[10.227,10.227, -9.491,10.227]^{T}$. If we increase by $1$ unit the state of node $1$, that is ${\bf x}_{0}=[2,1,1,1]^{T}$, we get ${\bf x}_{\infty}=[15.524,15.156,-14.420,15.156]^{T}$. Note that, in this case, the local balance of each node is $\kappa_{i}(G)=1, \ \forall i=1,2,3,4$. 
\end{example}

\begin{example}
	We now focus on graph $K_4$, (f) that is strictly unbalanced with $\kappa(G)=0.592$. If node $1$ increases its attribute, nodes $2$ and $4$ should increase their attribute. Now, however, node $3$ should both increase its state due to its connection with node $1$ and decrease its state due to the effect of its connections with nodes $2$ and $4$. Suppose again that the initial state is ${\bf x}_{0}=[1,1,1,1]^{T}$, so that the asymptotic state is ${\bf x}_{\infty}=[6.855,6.800,-1.418,6.800,]^{T}$. If the attribute of node $1$ increases, for instance with ${\bf x}_{0}=[18,1,1,1]^{T}$, the final asymptotic state becomes ${\bf x}_{\infty}=[52.599,41.961,-0.952, 41.961]^{T}$, and the state of node $3$ increases by an amount of $0.466$.
	If, on the other hand, along with that of node $1$, the state of nodes $2$ and $4$ also increases slightly, for example starting from an initial state of ${\bf x}_{0}=[20,1.2,1,1.2]^{T}$, we obtain ${\bf x}_{\infty}=[58.808,47.457,-1.724,47.457]^{T}$. The state of node $3$ decreases. In this case, the local balance of the nodes are  $\kappa_{1}(G)=\kappa_{3}(G)=0.508$ and $\kappa_{2}(G)=\kappa_{4}(G)=0.677$. Thus nodes $1$ and $3$ contribute to unbalance the network more than nodes $2$ and $4$.
\end{example}

\begin{example}
	Graph $K_4$, (k) is a complete graph with all negative edges and so it is the most unbalanced in the set, with $\kappa(G)=0.387$. As expected, if node $1$ increases its attribute, all the other nodes have the opposite behavior. For instance, starting from ${\bf x}_{0}=[1.2,1,1,1]^{T}$, the final state becomes ${\bf x}_{\infty}=[0.460,-0.084,-0.084,-0.084]^{T}$.
	However, we can make node $3$ increase its state while increasing the state value of node $1$ even more by acting on the state values of nodes $2$ and $4$. For example, by reducing them as in the initial state ${\bf x}_{0}=[1.4,0.7,1,0.7]^{T}$, we obtain ${\bf x}_{\infty}=[1.271,-0.632,0.183,-0.632]^{T}$.
\end{example}

These examples illustrate two important points. First, even within a fully deterministic diffusive model, the response of a node to state perturbations of at least two other nodes exhibits non-monotonic behaviors in an unbalanced network. Second, in a balanced network, there is a more pronounced dependence on the fluctuations in the attributes of some nodes than in an unbalanced network.

\subsection{Global balance and condition number}
\label{Global balance and condition number}
We show how the global balance of the network is related to the condition number of the matrix $e^{\bf A}$. Specifically, we prove that the global balance can be used as an indicator of the conditioning of the linear system ${\bf x}_{\infty}=e^{\bf A} {\bf x}_{0}$.

Let us start with a preliminary observation. As in the toy examples in Sec. \ref{predictability}, we perturb the initial state ${\bf x}_{0}$ at node $i$ as follows:
\begin{equation}
	{\bf x}_{0}^{\star}={\bf x}_{0}+\varepsilon {\bf e}_{i}, \ \varepsilon \in \mathbb{R}.
\end{equation}
The asymptotic state of the diffusive model becomes
\begin{equation}
	{\bf x}_{\infty}^{\star}={\bf x}_{\infty}+\varepsilon [e^{\bf A}]_{i}.
\end{equation}
In particular, the $i-$th component of ${\bf x}_{\infty}^{\star}$ is:
\begin{equation}
	{x}_{\infty,i}^{\star}={x}_{\infty,i}+\varepsilon [e^{\bf A}]_{ii}.
\end{equation}
The diagonal element $[e^{\bf A}]_{ii}$ measures how a perturbation of the initial state at node $i$ is amplified in the final state of the \textit{same} node. Therefore, the local balance $\kappa_{i}(G)$ of node $i$, defined in Eq. \eqref{Localbalance}, quantifies how less effective the propagation of the perturbation starting from node $i$ is in the signed network, compared to the same perturbation from the same node propagating in the unsigned network. 

In general, in the linear problem ${\bf x}_{\infty}=e^{\bf A} {\bf x}_{0}$, let us perturb the initial state ${\bf x}_{0}$ and evaluate the effect on the asymptotic state ${\bf x}_{\infty}$. This is equivalent to considering the inverse problem $e^{-{\bf A}}{\bf x}_{\infty}={\bf x}_{0}$ and quantifying the variation in the solution ${\bf x}_{\infty}$ due to a perturbation on ${\bf x}_{0}$. Let ${\bf e}\in {\mathbb R}^{n}$ be the general error or perturbation on ${\bf x}_{0}$. Then the condition number $\mathscr{K}$ of the linear problem is the ratio between the relative errors in ${\bf x}_{\infty}$ and in ${\bf x}_{0}$ (see, for instance, \cite{Golub2013}):
\begin{equation}
	\mathscr{K}(e^{-{\bf A}})=\lim_{\varepsilon \to 0} \sup_{||{\bf e}||<\varepsilon}\frac{||e^{\bf A}{\bf e}||}{||{\bf e}||}\frac{||{\bf x}_{0}||}{||e^{\bf A}{{\bf x}_{0}}||}=||e^{-{\bf A}}|| \cdot || e^{{\bf A}}||,
\end{equation}
where  $||\cdot||$ is any matrix norm. In particular, using the \textit{trace} norm (Schatten $p-$norm with $p=1$), it can be expressed as
\begin{equation}
	\mathscr{K}(e^{-{\bf A}})=\left( \sum_{j=1}^{N}e^{-\lambda_{j}} \right)\left( \sum_{j=1}^{N}e^{\lambda_{j}} \right).
	\label{conditionnumber}
\end{equation}
We are interested in studying the ratio between the condition numbers of the exponential matrices $e^{\bf A}$ and $e^{\bf |A|}$, that is
\begin{equation}
	{\mathscr R}({\bf A}) = \frac{{\mathscr{K}(e^{-{\bf A}})}}{{\mathscr{K}(e^{-|{\bf A}|})}}.
	\label{theratio}
\end{equation}
In particular, we aim to discriminate when ${\mathscr R}({\bf A})$ is greater or less than $1$. In fact, if ${\mathscr R}({\bf A})<1$, the linear system in Eq. \eqref{asymptoticstate} on the signed network is better conditioned than the same system on the unsigned network, and vice versa if ${\mathscr R}({\bf A})>1$. This determines the different response of the nodes to perturbations propagating within the network in the two versions, signed and unsigned, of the network. We also show that ${\mathscr R}({\bf A})$ is tightly related to the global balance $\kappa(G)$ in Eq. \eqref{Globalbalance} and, under some further assumptions, the time series of their values in a time-evolving network are highly correlated (see Sec. \ref{Empirical analysis}). We start observing that the ratio ${\mathscr R}({\bf A})$ between the condition numbers of the matrices $e^{-{\bf A}}$ and $e^{-|{\bf A}|}$, in the Schatten $p-$norm with $p=1$, can be expressed as a function of the global balances of the networks $G$, $-G$ and $-|G|$ as
	\begin{equation}
		{\mathscr R}({\bf A})= \frac{\kappa(G)\cdot \kappa(-G)}{
			\kappa(-|G|)}
		\label{ratioandbalances}
	\end{equation}
	where $-G$ is obtained by flipping the signs of all the edges in $G$, and $-|G|$ is the fully negative network.
In fact, from Eqs. \eqref{Globalbalance} and \eqref{conditionnumber}, we have
	\begin{equation*}
		{\mathscr R}({\bf A})=
		\frac{{\rm tr}[e^{{\bf A}}]\cdot {\rm tr}[e^{-{\bf A}}]}
		{{\rm tr}[e^{|{\bf A}|}]\cdot {\rm tr}[e^{-|{\bf A}|}]}=
		\frac{\frac{{\rm tr}[e^{{\bf A}}]}{{\rm tr}[e^{|{\bf A}|}]} \cdot \frac{{\rm tr}[e^{-{\bf A}}]}{{\rm tr}[e^{|{\bf A}|}]}}{\frac{{\rm tr}[e^{-|{\bf A}|}]}{{\rm tr}[e^{|{\bf A}|}]}}=\frac{\kappa(G)\cdot \kappa(-G)}{\kappa(-|G|)}.
	\end{equation*}

Formula \ref{ratioandbalances} shows that the ratio of the two condition numbers is closely related to the global balance of the network $G$, also through the global balances of the networks associated with it, namely $-G$ and $-|G|$.

We now prove a necessary condition for ${\mathscr R}({\bf A})=1$. First, observe that if $G$ is balanced, that is $\kappa(G)=1$, both $-G$ and  $-|G|$ are antibalanced. Actually, we do not know whether they are balanced or not. We can only say that $\kappa(-G)\leq1$ and $\kappa(-|G|)\leq1$. Nevertheless, we can state \added{the following proposition}:
\begin{proposition}
	\label{proposition1}
If $G$ is a balanced or antibalanced signed network, then ${\mathscr R}({\bf A})=1$.
\end{proposition}
 The proof of this proposition can be found in the Supplementary Material, Sec. 2. We conjecture that this necessary condition can also be extended to a sufficient condition, that is: If ${\mathscr R}({\bf A})=1$ then $G$ is a balanced or antibalanced signed network.\footnote{We have numerically checked that all complete signed networks that can be generated up to a certain order $N$ satisfy this condition as both a necessary and a sufficient condition.}

Observe that, \added{in the proof of} the previous proposition, the ratio ${\mathscr R}({\bf A})$ has been expressed as a function of the properties of the network $G$, namely through the penalized number or weight of even/odd and positive/negative closed walks in $G$.
\added{Specifically, we prove that}
\begin{equation}
{\mathscr R}({\bf A})=\frac{
	\left[
	W^{+}_{\rm even}-W^{-}_{\rm even}
	\right]^{2}
	-
	\left[
	W^{+}_{\rm odd}-W^{-}_{\rm odd}
	\right]^{2}
}
{
	\left[
	W^{+}_{\rm even}+W^{-}_{\rm even}
	\right]^{2}
	-
	\left[
	W^{+}_{\rm odd}+W^{-}_{\rm odd}
	\right]^{2}
}
\label{ratioandwalks}
\end{equation}
\added{where $W^{\pm}_{\rm even}$ and $W^{\pm}_{\rm odd}$ denote the total penalized weight of positive and negative closed walks of even and odd length, respectively.}

In the next Remark we highlight the explicit expression of these four types of closed walks.

\begin{remark}
	The penalized weight of even/odd and positive/negative closed walks in a signed network $G$ can be expressed as
	\begin{equation}
		\begin{split}
			W^{\pm}_{\rm even}&={\rm tr}\left( \cosh |{\bf A}| \pm \cosh {\bf A}\right)=\sum_{i=1}^{N}\left( \cosh \overline{\lambda}_{i} \pm \cosh {\lambda}_{i}\right)\\
			W^{\pm}_{\rm odd}&={\rm tr}\left( \sinh |{\bf A}| \pm \sinh {\bf A}\right)=\sum_{i=1}^{N}\left( \sinh \overline{\lambda}_{i} \pm \sinh {\lambda}_{i}\right)
		\end{split}
		\label{numberofwalks}
	\end{equation}
\added{where ${\lambda}_{N}\leq{\lambda}_{N-1}\leq\dots \leq{\lambda}_{1}$ and $\overline{\lambda}_{N}\leq\overline{\lambda}_{N-1}\leq\dots \leq \overline{\lambda}_{1}$ are the eigenvalues of $\bf A$ and $\bf |A|$, respectively.}
\end{remark}
\begin{remark}
	By Eq. \eqref{numberofwalks}, it follows immediately that $ W^{+}_{\rm even}> W^{-}_{\rm even}$ and $ W^{+}_{\rm even}> W^{+}_{\rm odd}$ for any signed network. In words, weighted positive even closed walks always prevail on the even negative and odd positive ones. 
\end{remark}

\begin{definition}
	A signed graph such that $W^{+}_{\rm even} \cdot W^{-}_{\rm even}> W^{+}_{\rm odd}\cdot W^{-}_{\rm odd}$ is said \textit{even-dominant} and a signed graph such that $W^{+}_{\rm even} \cdot W^{-}_{\rm even}< W^{+}_{\rm odd}\cdot W^{-}_{\rm odd}$ is said \textit{odd-dominant}.
\end{definition}

We can now show that ${\mathscr R}({\bf A})$ is greater or less than $1$ according to the fact that the signed graph is even-dominant or odd-dominant. Let ${\mathscr R}({\bf A})$ be the ratio between the condition numbers of the two matrices $e^{-{\bf A}}$ and $e^{-|{\bf A}|}$, in the Schatten $p-$norm with $p=1$. ${\mathscr R}({\bf A}) <1$ 	if and only if the matrix ${\bf A}$ is the adjacency matrix of a strictly unbalanced even-dominant signed network $G$. ${\mathscr R}({\bf A}) >1$  	if and only if the matrix ${\bf A}$ is the adjacency matrix of a strictly unbalanced odd-dominant signed network $G$. In fact, if we refer for instance to an even-dominant signed network, by expanding the ratio as in Eq. \ref{ratioandwalks}, we have that ${\mathscr R}({\bf A})<1$ if and only if $W^{+}_{\rm even} \cdot W^{-}_{\rm even}- W^{+}_{\rm odd}\cdot W^{-}_{\rm odd}>0$.

Formula \ref{ratioandbalances} shows the close relation between the ratio of the condition numbers in the signed and unsigned networks and the balance of the network $G$ and its related networks $-G$, $|G|$ and $-|G|$. If the network $G$ is balanced or antibalanced, the conditioning of the linear problem in Eq. \eqref{asymptoticstate} is the same in the signed network and in its underlying unsigned network. The ratio ${\mathscr R}({\bf A})$ can be both less or greater than $1$ depending on the topological properties of the network, in particular on the dominance of even closed walks or odd closed walks. The previous considerations provide an almost complete classification of the signed graphs based on the value of the ratio in formula \eqref{ratioandwalks}. If ${\mathscr R}({\bf A})< 1$ the graph is even-dominant strictly unbalanced; if ${\mathscr R}({\bf A})>1$ the graph is odd-dominant strictly unbalanced; if the graph is balanced or antibalanced, ${\mathscr R}({\bf A})=1$.

\subsection{Approximating Global Balance and Condition Numbers  in Correlation Networks}

We show now that the ratio ${\mathscr R}({\bf A})$, for a large set of significant matrices, in particular for the empirical correlation matrices we are interested in, is less than $1$.  We assume from now on that the matrix ${\bf A}$ is positive semidefinite so that all its eigenvalues $\lambda_{i}$ are nonnegative: $0\leq{\lambda}_{N}\leq{\lambda}_{N-1}\leq\dots \leq{\lambda}_{2}\leq{\lambda}_{1}$. We refer to ${\lambda}_{1}$ as the spectral radius and to ${\lambda}_{1}-{\lambda}_{2}$ as the spectral gap.

We first express ${\mathscr R}({\bf A})$ as a function of the intervals between pairs of eigenvalues. The ratio ${\mathscr R}({\bf A})$ between the condition numbers of the matrices $e^{-{\bf A}}$ and $e^{-|{\bf A}|}$, in the Schatten $p-$norm with $p=1$, can be expressed as
	\begin{equation}
		{\mathscr R}({\bf A})= \frac{\left( \sum_{i=1}^{N}e^{-\lambda_{i}} \right)\left( \sum_{i=1}^{N}e^{\lambda_{i}} \right)}{\left( \sum_{i=1}^{N}e^{-\overline{\lambda}_{i}} \right)\left( \sum_{i=1}^{N}e^{\overline{\lambda}_{i}} \right)}=
		\frac{1+\frac{2}{N}\sum_{j<i}\cosh(\lambda_{i}-\lambda_{j})}{1+\frac{2}{N}\sum_{j<i}\cosh(\overline{\lambda}_{i}-\overline{\lambda}_{j})}
		\label{ratio}
	\end{equation}
	where again ${\lambda}_{i}$ and $\overline{\lambda}_{i}, \ i=1,\dots,N$, are the eigenvalues of $\bf A$ and $\bf |A|$, and $\cosh(\cdot)$ denotes the  hyperbolic cosine function.
	The statement follows directly from the identity $\left( \sum_{i=1}^{N}e^{-\lambda_{i}} \right)\left( \sum_{i=1}^{N}e^{\lambda_{i}} \right)=N+2\sum_{j<i}\cosh(\lambda_{i}-\lambda_{j})$.

Formula \ref{ratio} emphasizes the dependence of ${\mathscr R}({\bf A})$ on the intervals $\Delta\lambda_{ij}=\lambda_{i}-\lambda_{j}$ between pairs of eigenvalues. It highlights the relevant role of the spectral gap in large correlation networks based on the correlation matrix ${\bf C}$.

\begin{remark}
	\label{remark4}
	In the presence of a large spectral gap for both the signed and unsigned networks, that is if $\Delta \lambda_{1j}\gg \Delta \lambda_{ij}, 2<i,j\leq N$, and similarly for the $\overline{\lambda}_{i}$'s, the ratio ${\mathscr R}({\bf A})$ can be expressed in the approximated form
	\begin{equation}
		{\mathscr R}({\bf A})\approx
		\frac{1+\frac{2}{N}\sum_{j=2}^{N}\cosh \Delta \lambda_{1j}}{1+\frac{2}{N}\sum_{j=2}^{N}\cosh \Delta \overline{\lambda}_{1j}}\approx
		\frac{1+\frac{2(N-1)}{N}\cosh \lambda_{1}}{1+\frac{2(N-1)}{N}\cosh \overline{\lambda}_{1}}\approx \frac{e^{\lambda_{1}}}{e^{\overline{\lambda}_{1}}}
		\label{approximateratio}
	\end{equation}
	In general, for any signed network, the inequality $\lambda_{1}\leq \overline{\lambda}_{1}$ holds. The spectral radius of the matrix ${\bf A}$ is strictly smaller than that of the matrix $|{\bf A}|$ if and only if the signed network $G$ is strictly unbalanced, that is neither balanced nor antibalanced, i.e. ${\lambda}_{1}<\overline{\lambda}_{1}$ if and only if  $G$ is strictly unbalanced (see \cite{Lambiotte2024}, Theorem 3.9, and \cite{Belardo2019}). This implies that the approximated value in Eq. \eqref{approximateratio} is less than $1$ for strictly unbalanced networks.
	
	Let us observe that, since $\sum_{j=2}^{N}\lambda_{j}=N-\lambda_{1}$ and $\sum_{j=2}^{N}\overline{\lambda}_{j}=N-\overline{\lambda}_{1}$, we have $\sum_{j=2}^{N}\Delta\lambda_{1j}=N(\lambda_{1}-1)<N(\overline{\lambda}_{1}-1)=\sum_{j=2}^{N}\Delta\overline{\lambda}_{1j}$, but this does not guarantees that $\Delta \lambda_{1j}<\Delta \overline{\lambda}_{1j}, \ \forall j$. However, if ${\lambda}_{1}\gg {\lambda}_{j}$ and $\overline{\lambda}_{1}\gg \overline{\lambda}_{j}$, for $j=2,\dots, N$, we can also assume $\Delta {\lambda}_{1j}\approx {\lambda}_{1}$ and $\Delta \overline{\lambda}_{1j}\approx \overline{\lambda}_{1}$. This justifies the second approximation in Eq. \eqref{approximateratio}.
	As a further consequence of the same equation, we can conclude that for correlation networks with large spectral gap the value of the ratio ${\mathscr R}({\bf A})$ can be approximated with the value of the global balance $\kappa(G)$, that is ${\mathscr R}({\bf A})\approx \kappa(G)$. This effectiveness of the approximation is confirmed by the results in the empirical analysis.
\end{remark}

Remark \ref{remark4} shows that, in general, when dealing with correlation networks, the conditioning of the linear system in Eq. \eqref{asymptoticstate} is better on the signed correlation network than on the corresponding unsigned correlation network. The correlation matrices, when interpreted as adjacency matrices of a signed graph, are usually even-dominant.

Our interpretation of this result is that the presence of negative edges, that is negative correlations, mitigates the propagation of a perturbation shock within the network. Therefore, an unbalanced graph can mitigate and absorb the perturbations from one node better than a balanced graph. In a strictly unbalanced graph, we expect that the oscillations produced by a single node cannot be amplified to a large extent due to the lower condition number of the adjacency matrix. Conversely, in a balanced graph, where there are two disjoint communities and all closed walks are positive, a perturbation originating from one node is amplified traveling through the network. This result shows that a balanced graph is more unstable and sensitive to perturbations than an unbalanced one, and
supports the interpretation of network global balance in terms of predictability in transmitting information, as discussed in subsection \ref{predictability}. We conclude this section with a numerical example.

\begin{example}
	Consider the following randomly generated correlation matrix with $N=4$:
	\begin{equation}
		A=
		\left[ \begin{array}{cccc} 
			1 & 0.764 & 0.335 & 0.236 \\
			0.764 &  1 & 0.051 & -0.349  \\
			0.335 & 0.051 &  1 & 0.735  \\
			0.236 & -0.349 & 0.735  &  1  \\
		\end{array} \right].
	\end{equation}
	The network associated with this correlation matrix is a weighted version of the graph $K_4$ (d) in Fig. \ref{fig1}, third line. It is a strictly unbalanced network and its global balance is $\kappa(G)=0.962$. The local balances of the four nodes are collected in the vector $[0.960, 0.952, 0.983, 0.953]$.
	
	Let us first note that by increasing the initial state ${\bf x}_{0}=[1,1,1,1]^{T}$, for example, by the error $\varepsilon{\bf e}_{1}$ with $\varepsilon=2$, the increase in the asymptotic state of node $1$, that is ${\bf x}_{\infty,1}^{\star}-{\bf x}_{\infty,1}=\varepsilon [e^{\bf A}]_{11}$, is equal to $7.641$ for the signed network and to $7.963$ for the unsigned network. The difference for a unitary error, equal to $[e^{|{\bf A}|}]_{11}-[e^{\bf A}]_{11}=0.161$, increases linearly with $\varepsilon$ and the ratio $[e^{\bf A}]_{11}/[e^{|{\bf A}|}]_{11}=0.960$ is exactly the local balance of node $1$.
	The eigenvalues of ${\bf A}$ are $\lambda_1=1.956$ $\lambda_2=1.705$. $\lambda_3= 0.317$ and $\lambda_4=0.022$, while the eigenvalues of $|{\bf A}|$ are $\overline{\lambda}_{1}=2.239$ $\overline{\lambda}_{2}=1.274$. $\overline{\lambda}_{3}=0.439$ and $\overline{\lambda}_{4}=0.047$. 
	The two exact condition numbers are $\mathscr{K}(e^{-{\bf A}})=30.378$ and
	$\mathscr{K}(e^{-{{|\bf A}|}})=30.887$, and their exact ratio is therefore
	${\mathscr R}({\bf A})=0.984$.
	For such a low-dimensional correlation matrix, the difference between the two condition numbers is small. Nevertheless it evidences the fact that the balanced network shows a greater dependence on the initial data than the unbalanced network.
\end{example}
This numerical example can be extended to a statistical analysis of the relationship between the global balance $\kappa(G)$ and the ratio of the two condition numbers ${\mathscr R}({\bf A})$ in the case of randomly generated correlation matrices. We refer to the Supplementary Material, Sec. 3, for further details.

\subsection{Interpretation in terms of systemic risk measure}

The close relationship between global balance and condition number proved in the previous sections prompts further the comparison toward indicators that generalize the condition number of the correlation matrix in the context of systemic risk.

The information contained in the eigenvalues of the correlation matrix can be used to define systemic risk measures associated with the underlying dataset. Two classes of indicators have been identified in the literature, the first exploiting the information contained in the larger eigenvalues and, the second, the information content of the smaller eigenvalues. The two classes provide alternative views on systemic risk.

The first looks at the greatest eigenvalues to study the process of synchronization among the $N$ variables of interest. The eigenvalues of the covariance (or correlation) matrix $ \bf C$ can be viewed as the amount of variance that is explained by the corresponding eigenvectors. 
In fact, if ${\bf \phi}_i$ represents the normalized eigenvector associated with the eigenvalue $\lambda_{i}$, then the following equalities hold:
\begin{equation}
	\lambda_{i}
	=\lambda_{i}{\bf \phi}_{i}^{T}{\bf \phi}_{i}
	= {\bf \phi}_{i}^{T}{\bf C}{\bf \phi}_{i}
	=\frac{1}{T}{\bf \phi}_{i}^{T}{\bf \tilde X}{\bf \tilde X}^{T}{\bf \phi}_{i}
	=\frac{1}{T}\left({\bf \phi}_{i}^{T}{\bf \tilde X} \right)\left({\bf \phi}_{i}^{T}{\bf \tilde X} \right)^{T}
	={\rm Var}\left({\bf \phi}_{i}^{T}{\bf \tilde X} \right).
	\label{eigenvalue_variance}
\end{equation}
Eq. \eqref{eigenvalue_variance} shows that $\lambda_{i}$ represents the portion of the total variance associated with the $i-{th}$ principal portfolio ${\bf \phi}_{i}$, derived from PCA analysis, as defined in \cite{Meucci2010}. Each component $\lambda_{i} {\bf \phi}_{i}{\bf \phi}_{i}^{T}$ represents a new portfolio out of the original stocks, which is orthogonal to the others. Then the corresponding eigenvalue measures its contribution to the total risk.

In particular, if all the empirical eigenvalues are lower than the upper bound defined by the Marchenko-Pastur law (see \cite{Jonnson1982}), then the common components tracking the empirical data are not statistically different from those generated by a random process, and we can conclude that there is no structural correlation and no synchronization among the variables. Conversely, when some of the largest eigenvalues exceed the upper bound, they represent common drivers that convey information about the covariance structure of the empirical data better than what a null random matrix model could do. 

Specifically, if the total risk of the system is represented by $\sum_{i=1}^{N}\lambda_{i}={\rm tr}\, {\bf C}=N$, the fraction
of the total risk associated with the first $M$ principal components, the so-called \textit{Cumulative Risk Fraction} (CRF) or \textit{Absorption Rate}, is defined as
$CRF={\sum_{i=1}^{M}\lambda_{i}}/{\sum_{i=1}^{N}\lambda_{i}}=\frac{1}{N}\sum_{i=1}^{M}\lambda_{i}$ (see \cite{Billio2012}). As a consequence, when the system is highly interconnected, $M$ out of the $N$ principal components can explain most of the volatility in the system.

The second approach focuses on the eigenvalues $\lambda_{i}, i=1,\dots, N$, as they represent the approximation of the algebraic distances of the empirical observations matrix $\bf \tilde{X}$ from the subspace of rank deficient matrices. In particular, $\lambda_{N}\leq \ldots \leq \lambda_{2} \leq \lambda_{1}$ give the distances between $\bf \tilde{X}$ and the closest rank deficient matrices with ranks $N-1, \ldots, 1,0$, respectively. Therefore, if the eigenvalues $\lambda_{i}$ are small, and so are the singular values of $\bf \tilde{X}$, then the distances of the $\bf \tilde{X}$ matrix from the rank deficient matrices are small. A short distance from the rank deficient matrices corresponds to a high \textit{numerical} linear dependence between the assets. A high numerical linear dependence is equivalent to a reduced number of diversification opportunities for the investor and, hence, a higher risk. The class of indicators that best captures this aspect in terms of minimum distances and thus minimum eigenvalues is that of the \textit{Market Rank Indicators} (MRI) (see \cite{Uberti2020a}). We refer here to a possible realization of these indicators, specifically to the arithmetic version  of the Market Rank Indicator (AMRI) defined in terms of power mean as
\begin{equation}
	AMRI=\frac{\lambda_{1}}{\left(\frac{1}{H}\sum_{i=N-H+1}^{N}\lambda_{i}^{p}\right)^{1/p}}
	\label{AMRI}
\end{equation}
where $1\leq H \leq N$ and $p\in {\mathbb N}$. Note that the AMRI reduces to the condition number in norm $2$, $\lambda_{1}/\lambda_{N}$, for $H=1$. This measure has been shown to belong to the broad class of so-called \textit{proper measures of connectedness} (\cite{Uberti2020b}) and have been used to analyze financial datasets in \cite{Uberti2023}.\footnote{Let us observe that, in graphs theory, the quantity $\frac{1}{N}{\rm Tr}\, \bf {C}^{p}=\frac{1}{N}\sum_{i=1}^{N}\lambda_{i}^{p}$ represents the $p-$moment of the adjacency matrix $\bf C$ and, in a positive unweighted network, represents the mean of the number of $p-$cycles around the nodes, that is the average number of $p-$cycles to which a node in the network belongs.}

The local and global balance indices introduced in Eqs. \eqref{Localbalance} and \eqref{Globalbalance} are defined in a way completely independent of both the class of systemic risk measures based on the larger eigenvalues and the class based on the smaller eigenvalues, since they imply a comparison between the correlation matrix and its entry-wise absolute value. However, the interpretation proposed in the previous sections suggests that a bridge can be established between the domain of balance in signed networks and the domain of systemic risk measures for asset correlation networks. Indeed, the relationship of the balance index with the condition numbers of the diffusion problem and with the ratio ${\mathscr R}({\bf A})$ defined in Eq. \eqref{theratio} indicates that they are candidates to be indicators of systemic risk as a relative measure of the effect of correlation signs on the spread of information in the underlying network.\footnote{It is also possible to show that the global balance trivially satisfies the four properties required in \cite{Uberti2020b} for a proper measure of connectedness.}
As a consequence,  they can be used to detect the greater or lesser instability/sensitivity of the portfolio behavior with respect to the maximum instability/sensitivity associated with the balanced one, which typically has a higher condition number and, consequently, is less diversified and more exposed to systemic risk. At this point, we want to explore the ability of this new indicator to reveal systemic events on empirical correlation networks and compare its capabilities with that of other known measures in the literature. 

\section{Empirical analysis}
\label{Empirical analysis}

We test the results in Sections 3 and 4 on a real database of financial returns. We focus the analysis on the weighted version of the correlation-based signed network. However, we compare the results with the binary version of the same network. The binarization is obtained by applying a standard threshold chosen in the interval $(0,1)$; above the threshold, the correlations are set equal to $1$, below the corresponding opposite value, the correlations are set equal to $-1$, and are 0 in the remaining cases. The binary network loses the values of the edge weights, making the behavior of the correlations sharper and more extreme. The chosen threshold is $0.25$. \added{The choice of this symmetric threshold, although simplifying and inherently arbitrary, is inspired by what was done in Harary's seminal work \cite{Harary2002} and allows a first comparison of the results obtained on the real weighted network and its binarization.}

The weighted and the binary signed correlation networks are constructed over $\Delta T$-days windows, rolling at $\Delta t$-days step, that is, the returns are split by using $\Delta t$-days stepped windows of $\Delta T$-days width and the data from each window are used to construct the correlation matrix of that window. The appropriate choice of $\Delta T$ is a critical point when using eigenvalue-based indicators, since a full rank correlation matrix is required.

\subsection{Comparison between the Global Balance and the Market Rank Indicator}

The database contains the daily returns of the stocks of the S\&P500 index, from January 4, 2005 to March 18, 2020. The number of assets is $385$, smaller than the nominal one, since we limited our analysis to the stocks with a complete time series on the whole period. The entire dataset spans $T=4109$ days.
We start with a direct comparison between the behavior of the global balance and that of some systemic risk indicators. The time windows width is $\Delta T=400$ days and a step of $\Delta t=30$ days is applied. We note that $\Delta T$ is larger than the number of assets; this permits to calculate the eigenvalue-based indicators. First, we consider the time series of the global balance and the average correlation in each time window. The correlation between these two time series is generally high. The Pearson correlation is equal to $\rho_{\rm Pearson}=0.5902093$, while rank correlation is equal to $\rho_{\rm Spearman}=0.8771375$. In the Supplementary Material, Sec. 4, we show the time series plots of the global balance and of the average correlation over the period 2005-2020. We now focus on the comparison between the behavior of the global balance and that of the market rank indicator. Their evolution in time are shown in Fig. \ref{fig2}.

Fig. \ref{fig2}, panel (a), represents the temporal evolution of the global balance (black line) and of the market rank indicator (blue, red, and green line) over the interval January 2005-October 2016. The three plots for the market rank indicator refer to different choices of the number of smaller eigenvalues used for its computation: the blue line refers to 10 minimum eigenvalues, the red line to 50 and the green line to 100.
Similarly, in panel (b), for the period November 2016-January 2020. In the definition of MRI, we have assumed $p=3$ in the power mean used in the denominator (see Eq. \eqref{AMRI}). The correlations between the global balance and the MRI for the two intervals and for the different parameters are listed in Table \ref{Table1}. The last row of the Table ($H=N=385$) returns the special case of the cumulative risk fraction indicator (see \cite{Billio2012}).

As expected, these correlations do not change dramatically if we use a different value of $p$ in the definition of MRI: for instance, for the weighted version, for $p=4$ and $H=10$ we get $\rho_{\rm Pearson}=0.5096713$ and $\rho_{\rm Spearman}=0.7500659$. Similarly for other values of $p$ and $k$.

\begin{figure}[H]
	\centering
	\subfloat[]{\includegraphics[width=1.00\textwidth]{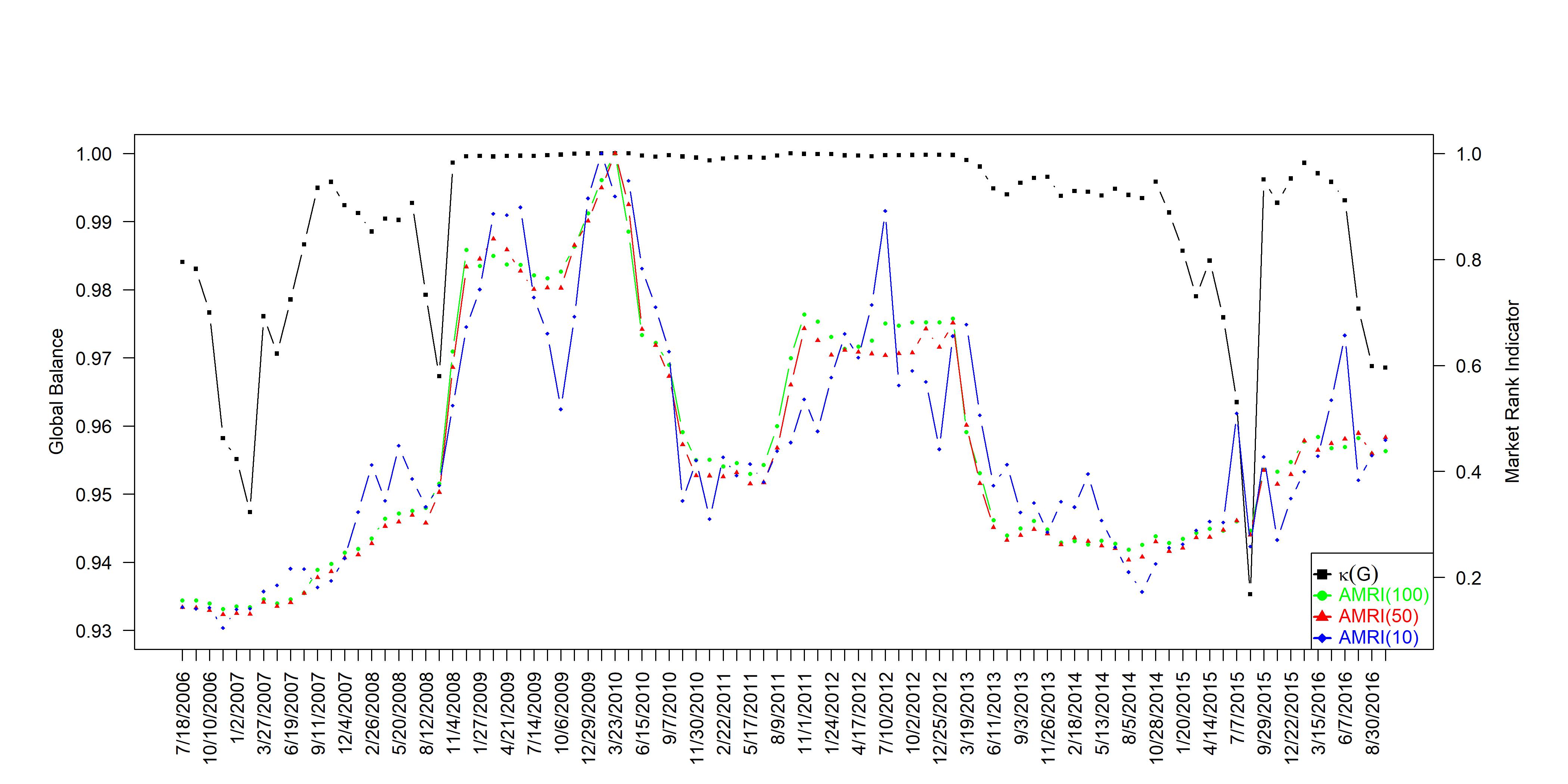}}\\
	\subfloat[]{\includegraphics[width=1.00\textwidth]{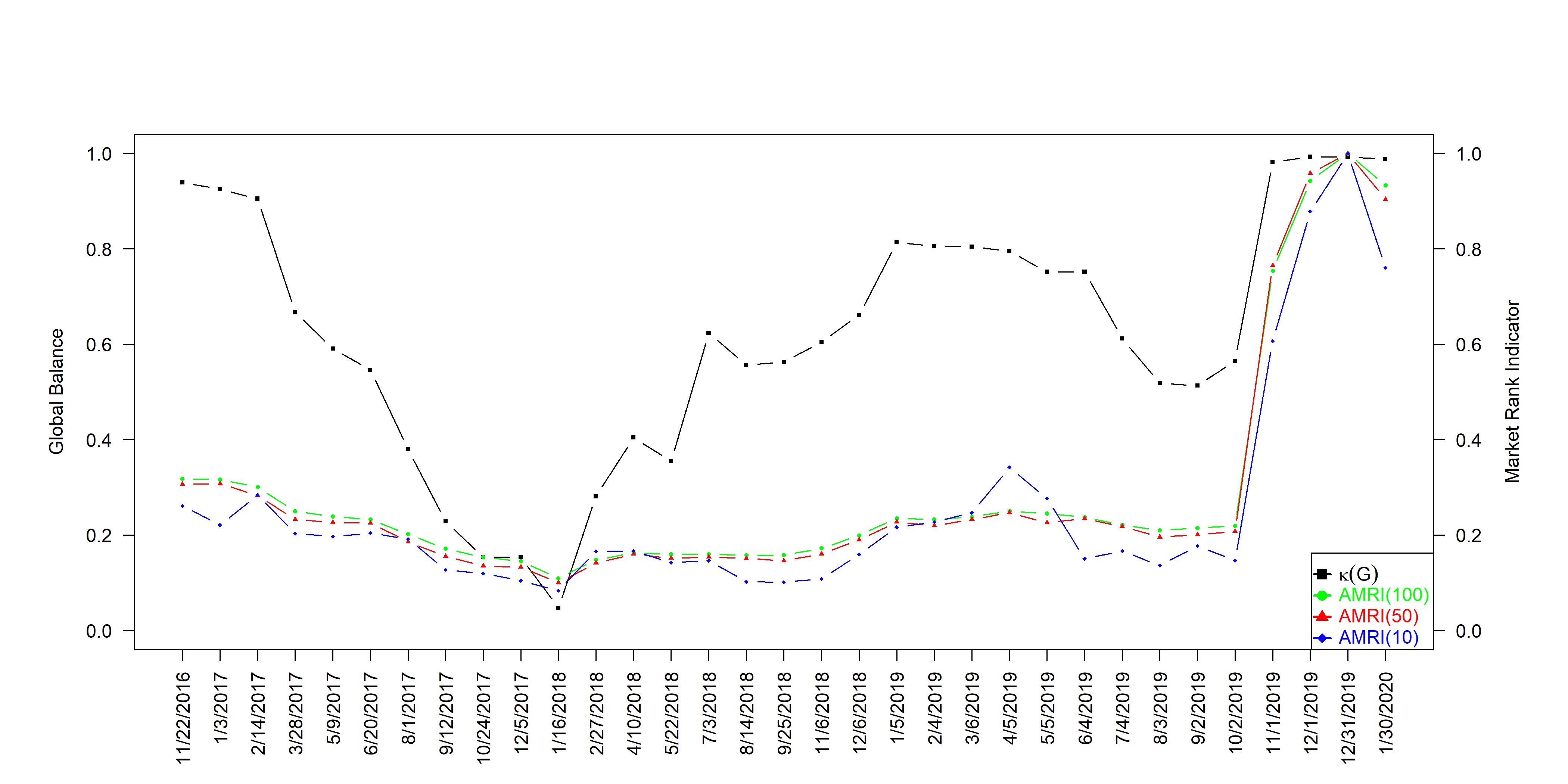}}
	\caption{Time evolution of the global balance (black square dot line) and of the market rank indicator for $p=3$ and $H=10$ (blue kite dot line), $H=50$ (red triangle dot line) and $H=100$ (green circle dot line) for the period (a) 2005-2016 and (b) 2016-2020.}
	\label{fig2} 
\end{figure}

\begin{table}[H]
	\footnotesize
	\begin{center}
		\begin{tabular}{c|c|c|c|c|}
			\cline{2-5}
			& \multicolumn{4}{c|}{\bf Correlations between Global balance and MRI} \tabularnewline
			\cline{2-5}
			& \multicolumn{2}{c|}{\bf 2005-2016} &\multicolumn{2}{c|}{\bf 2016-2020} \tabularnewline
			\cline{1-5}
			\multicolumn{1}{|c|}{\bf H} & Pearson & Spearman & Pearson & Spearman \tabularnewline
			\cline{1-5}
			\multicolumn{1}{|c|}{\bf 10}
			& $0.51043890$
			& $0.75268360$
			& $0.65198390$
			& $0.81116310$
			\tabularnewline \hline
			\multicolumn{1}{|c|}{\bf 50}
			& $0.53938464$
			& $0.81110470$
			& $0.65998890$
			& $0.90441180$
			\tabularnewline \hline
			\multicolumn{1}{|c|}{\bf 100} 
			& $0.55415830$
			& $0.82765060$
			& $0.65877970$
			& $0.89204550$
			\tabularnewline \hline
			\multicolumn{1}{|c|}{\bf 385} 	
			& $0.87786900$
			& $0.89459170$
			& $0.88014410$
			& $0.98061500$
			\tabularnewline \hline
			\hline
		\end{tabular}
	\end{center}
	\caption{Correlations between Global balance and MRI in the two intervals 2005-2016 and 2016-2020, for different values of $H$ in the definition of the MRI.}
		\label{Table1}
\end{table}

We can extend the analysis to the correlations between the AMRI and the ratio ${\mathscr R}(G)$ in Eq. \eqref{ratioandbalances}. These correlations are very close to the previous ones. Indeed, both $\kappa(G)$ and ${\mathscr R}(G)$ produce time series with very similar behaviors, making the two measures almost equivalent for application purposes. This fact can be explained by the presence of a large spectral gap in eigenvalues of the correlation matrices for each time window. In this case, it is possible to apply the approximate formula for the ratio in Eq. \eqref{approximateratio}. This explains the almost-1 correlation between $\kappa(G)$ and ${\mathscr R}(G)$. We observe that the error between the exact value of the ratio in Eq. \eqref{ratio} and its approximate version in Eq. \eqref{approximateratio} is very small. Until October 2016, the mean relative error is $0.033\%$, with a maximum equal to $0.201\%$ in August, 2015. In the period from November 2016 to January 2020, the approximation error is on average $1.22\%$ with a max of $3.91\%$ in October 2017. The maximum is achieved in correspondence with a severe reduction of the spectral gap in 2017, see Fig. \ref{fig3}. We also observe that the network is perfectly balanced in the windows ending in February to May 2010, and starting at the end of 2008.
\begin{figure}[H]
	\centering
	\includegraphics[width=0.80\textwidth]{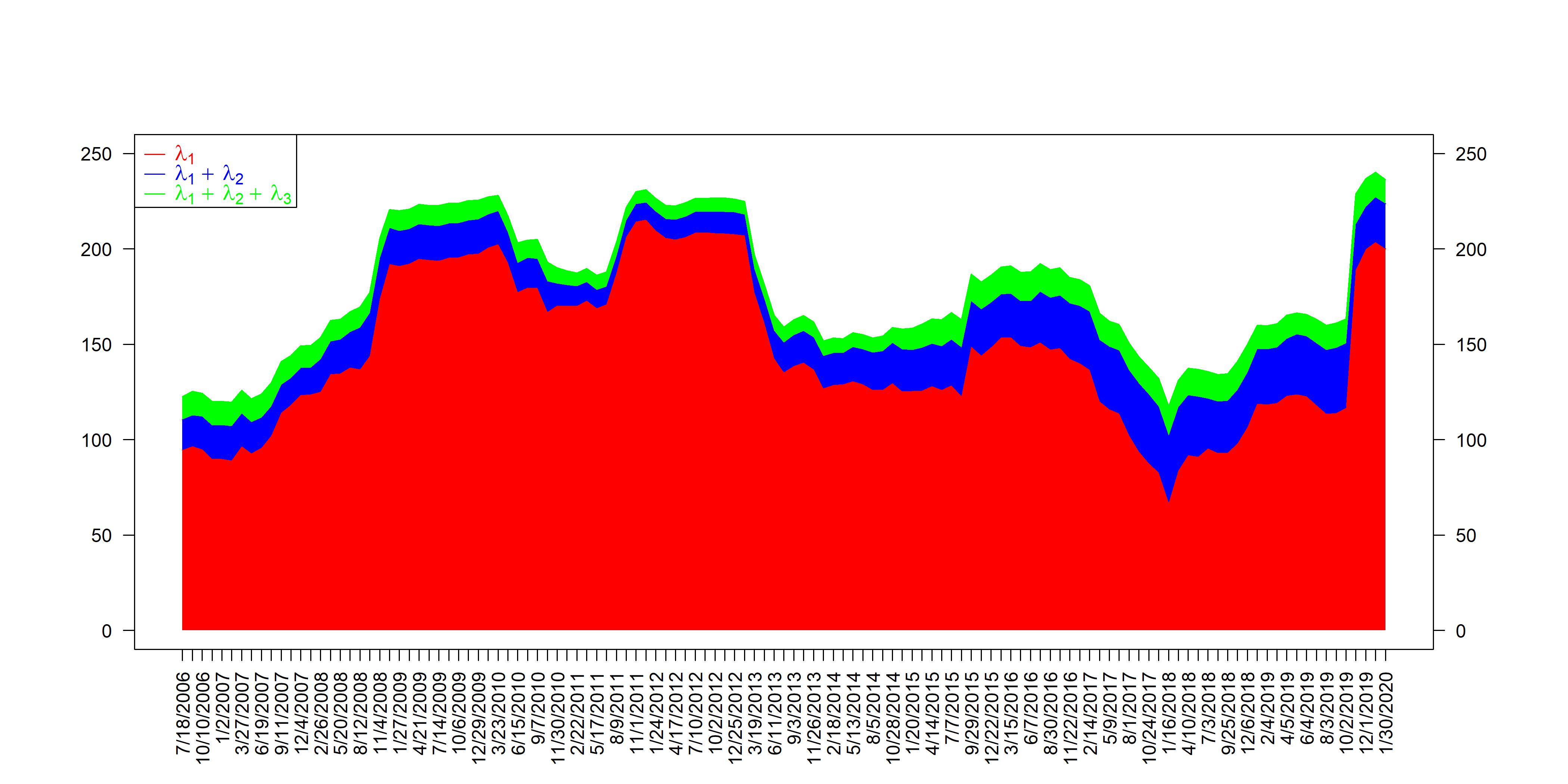}
	\caption{Time series of the largest eigenvalues of the correlation matrix: $\lambda_1$ in red, $\lambda_1+\lambda_2$ in blue and $\lambda_1+\lambda_2+\lambda_3$ in green.}
	\label{fig3}
\end{figure}
The previous analysis required rather long time windows due to the large number of assets in the entire dataset and the condition $T\geq N$. This means that correlations are computed over long time intervals that do not allow local variations to be captured, implying a loss of resolution.
Therefore, we repeat the previous analysis on a subset of assets from the same dataset, consisting of 50 randomly chosen assets, using 100-days time windows, always shifted by 30 days-length window. The MRI indicator has been computed using $H=2$, $H=5$ and $H=10$ smallest eigenvalues.

For the selected subset of assets, we confirm the strong correlation between the global balance and the average correlation: $\rho_{\rm Pearson}=0.6297643$ and $\rho_{\rm Spearman}=0.8912235$.

Similarly, the correlations between the global balance and, for instance, the AMRI for $H=2$ are $\rho_{\rm Pearson}=0.5307341$ and $\rho_{\rm Spearman}=0.7685456$, on the period 2005-2016 and $\rho_{\rm Pearson}=0.4994806$ and $\rho_{\rm Spearman}=0.9441659$, on the period 2016-2020. The correlation between the global balance and the Cumulative Risk Fraction are $\rho_{\rm Pearson}=0.9096708$ and $\rho_{\rm Spearman}=0.9203074 $, on the period 2005-2016 and $\rho_{\rm Pearson}=0.9297902$ and $\rho_{\rm Spearman}=0.991976$, on the period 2016-2020. To test the robustness of our findings, we repeated the exercise over many different subsets of 50 assets randomly chosen from the original index, obtaining similar results (see Fig. \ref{fig4}).
\begin{figure}[H]
	\centering
	\subfloat[]{\includegraphics[width=0.50\textwidth]{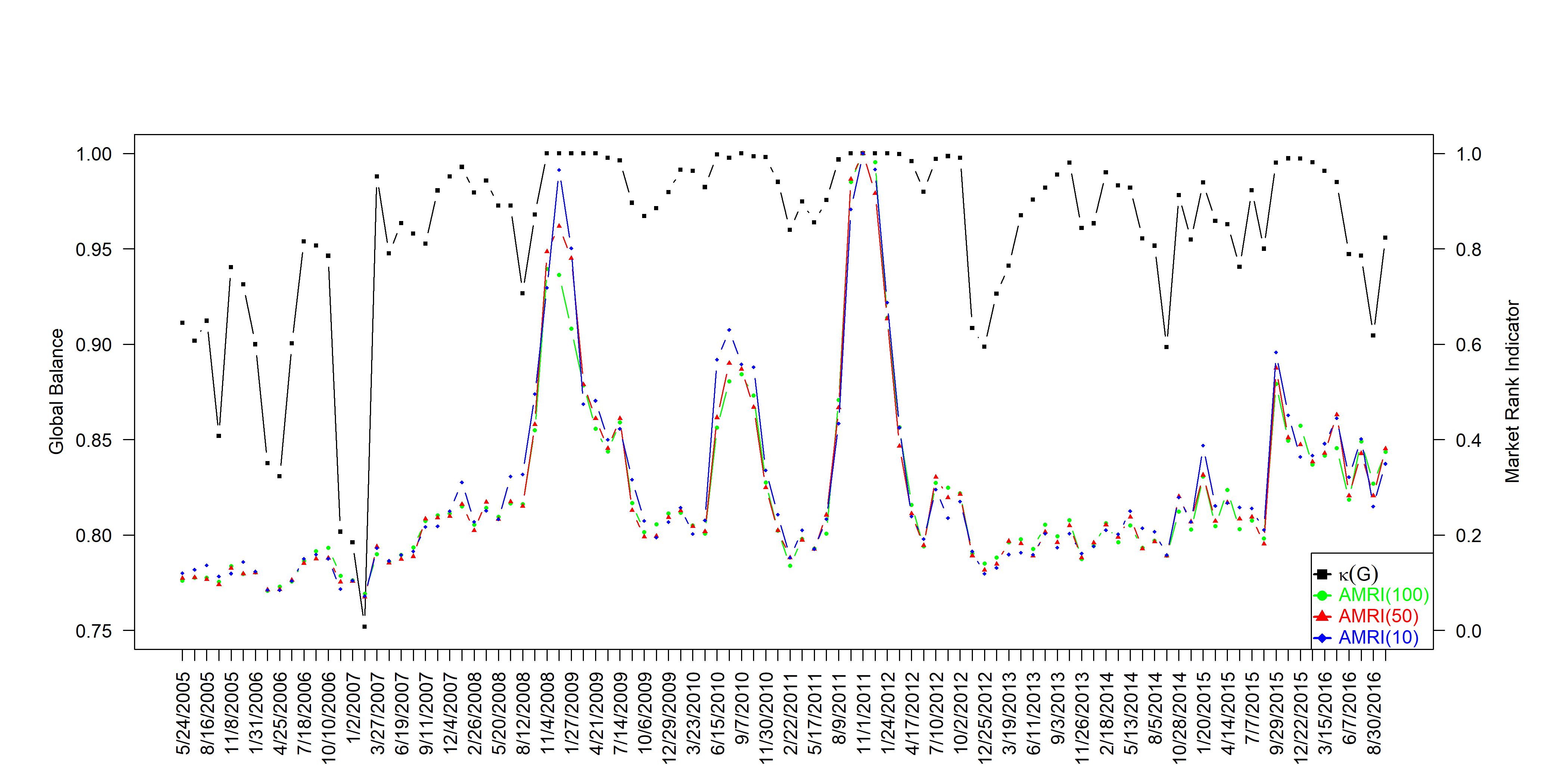}}
	\subfloat[]{\includegraphics[width=0.50\textwidth]{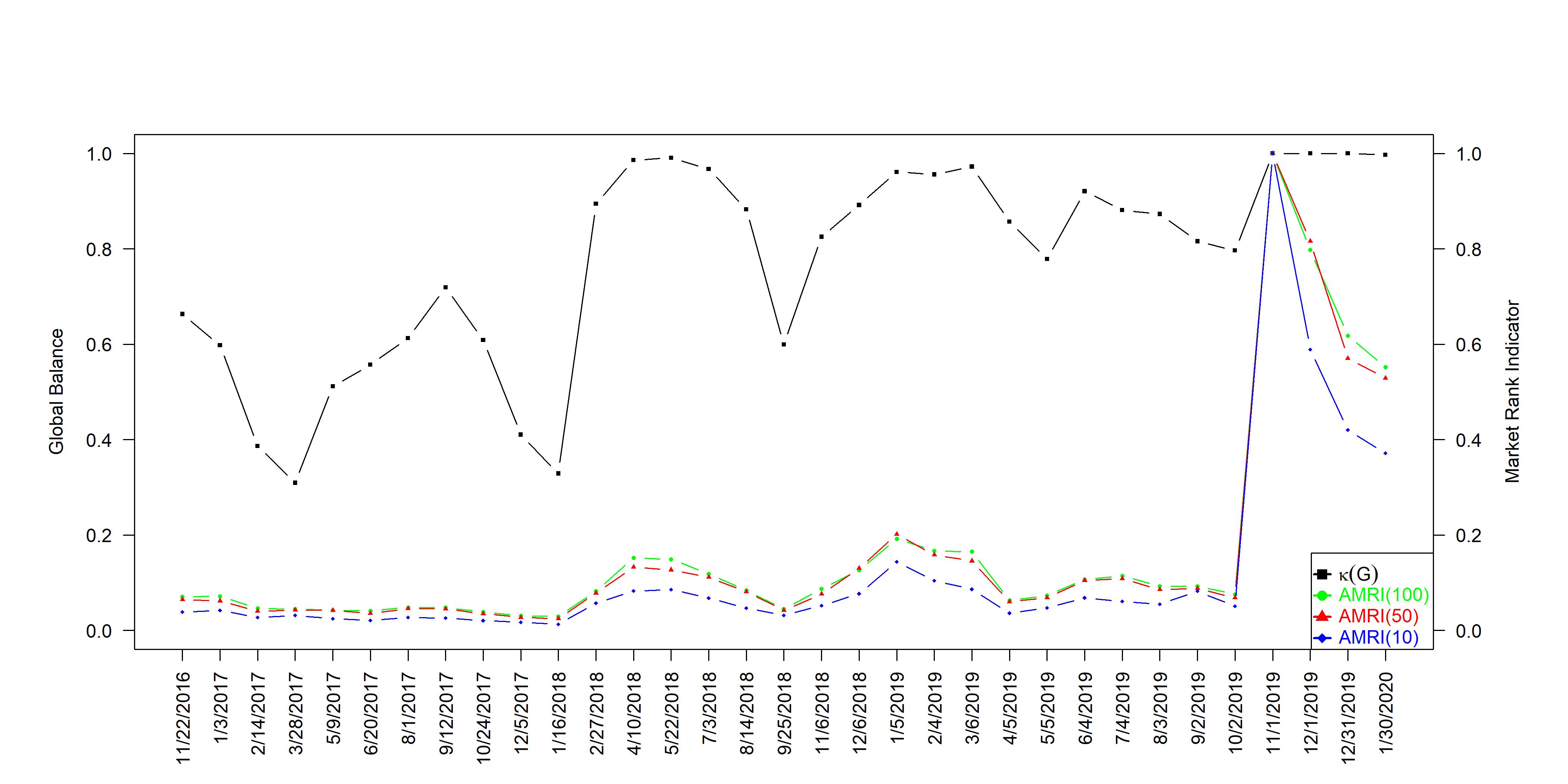}}
	\caption{Time evolution of the global balance (black square dot line) and of the market rank indicator for $p=3$ and $H=2$ (blue kite dot line), $H=5$ (red triangle dot line) and $H=10$ (green circle dot line) for the period (a) 2005-2016 and (b) 2016-2020. The figures refer to a subset of $50$ randomly selected assets.}
	\label{fig4} 
\end{figure}
We can finally conclude that there is a high correlation between the global balance and the considered systemic risk measures.

\subsection{Effectiveness of the Global Balance as a Systemic Risk Measure}

We perform now an empirical analysis to assess the effectiveness of the global balance in detecting systemic events.
In order to identify systemic events, we refer to the definition in \cite{Uberti2023}. A systemic risk event occurs if the average return of all the $N$ securities falls below a threshold $\tau$ for $\Delta T$ consecutive days. We adopt a threshold $\tau=-0.01$ or $\tau=-0.005$ and sliding windows of $\Delta T=20$ days. \vspace{-0.5cm}
\begin{figure}[H]
	\centering
	\subfloat[]{\includegraphics[width=0.50\textwidth]{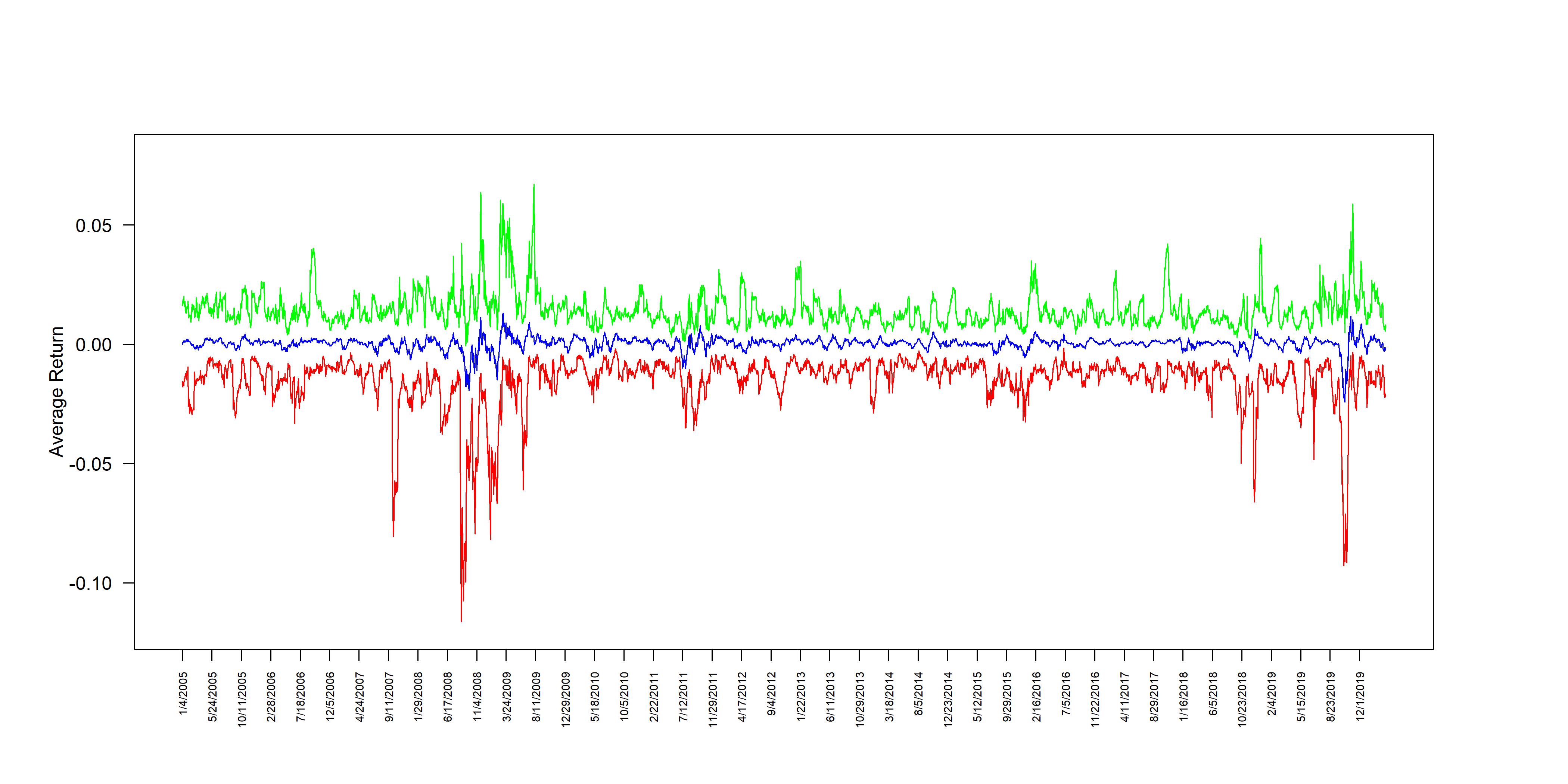}}
	\subfloat[]{\includegraphics[width=0.50\textwidth]{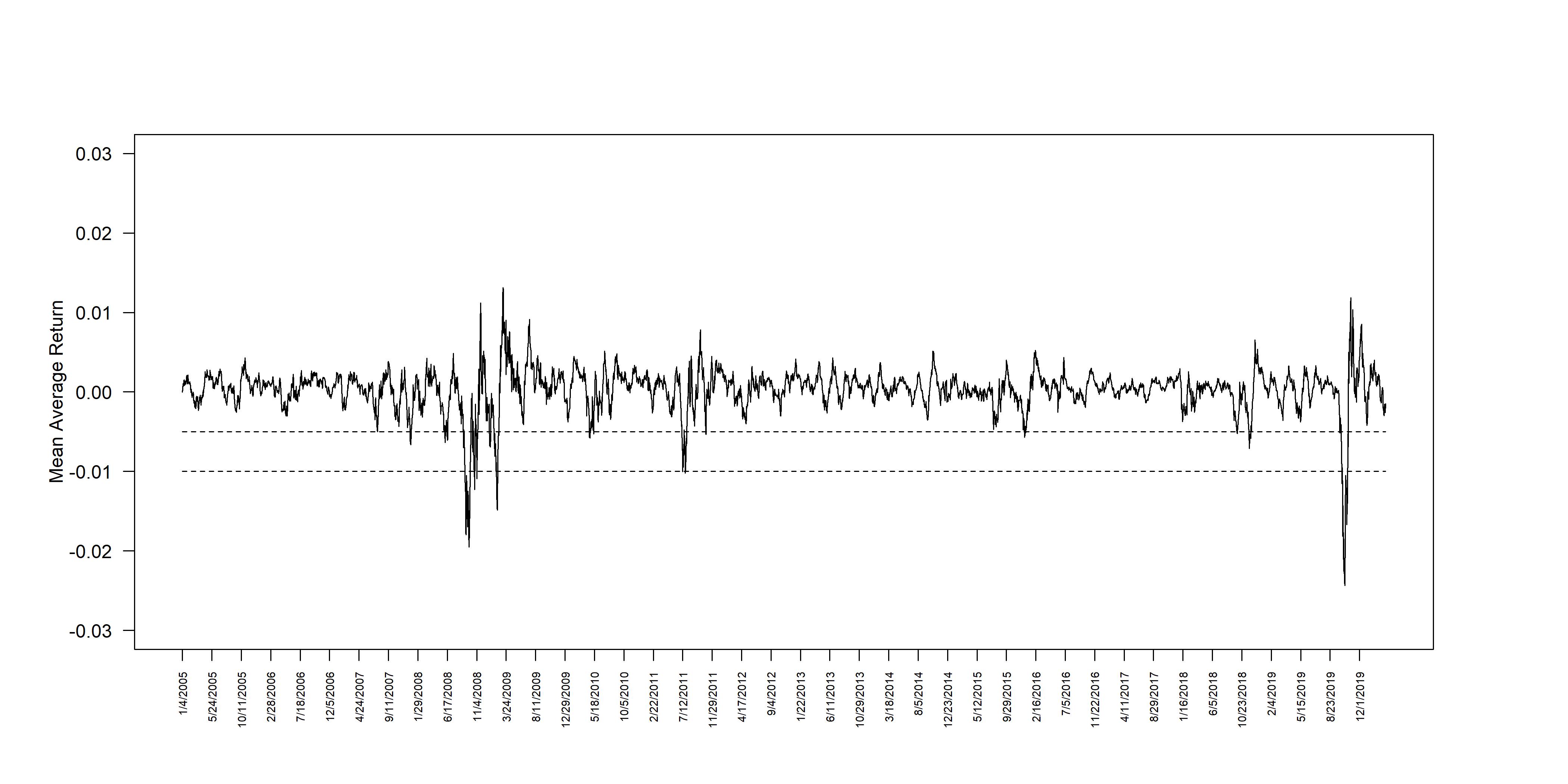}}
	\caption{Panel (a): Average returns of the S\&P500 database in the period 2005-2020. The average returns of each asset have been computed over a $20$-days-wide sliding window with $1$-day step. The blue line represents the mean over all the assets in the same window of the average returns. The green and the red lines represent the maximum and the minimum, respectively, of the average returns in the same window. Panel (b) focuses on the mean of the average returns and highlights crisis events when the line goes below the two possible threshold represented by the two horizontal dashed lines.}
	\label{fig5} 
\end{figure}
In Fig, \ref{fig5}, panel (a), we plot the average returns of the S\&P500 database from January 4, 2005 to March 18, 2020. The average returns of each asset have been computed over a $20$-days-wide sliding window with $1$-day step. The blue line represents the mean of the average returns over all the $385$ assets in the same window. The green and the red lines represent the maximum and the minimum value, respectively, of the average returns in the same window. Panel (b) focuses on the mean of the average returns and highlights crisis events when the line goes below the two possible thresholds represented by the two horizontal dashed lines.

In this way, we can clearly identify systemic crisis events when the mean of the average returns over all the assets goes below $\tau$, for both $\tau=-0.005$ and $\tau=-0.010$. In particular, the crisis events below $\tau=-0.010$ ranges in the time intervals September-November 2008, February 2009, July 2011, October 2019.

We study the joint distribution of the average returns and of the global balance in the sliding windows. The range of the indicator $\kappa(G)$ has been divided into $n$ bins $B_1, B_2 , \dots , B_{n}$ and the conditional distributions of the average returns $P\left( \langle X_{it}\rangle |\, \kappa(G)\in B_{k}\right)$ have been computed on each bin. Specifically, in a first experiment the interval $[0,1]$ is divided into $5$ bins of equal width while, in a second scenario, the five intervals are $(0,0.5]$, $(0.5,0.8]$, $(0.8,0.9]$, $(0.9,0.99]$ and $(0.99,1.00]$. The necessity of testing what happens with intervals of different lengths, very small when close to 1, is a peculiarity in the framework of systemic risk. Since systemic risk events are, hopefully, rare compared to the periods in which the market behaves normally, a systemic risk indicator should discriminate a few events within a multitude.

\begin{figure}[h]
	\centering
	\subfloat[]{\includegraphics[width=0.50\textwidth]{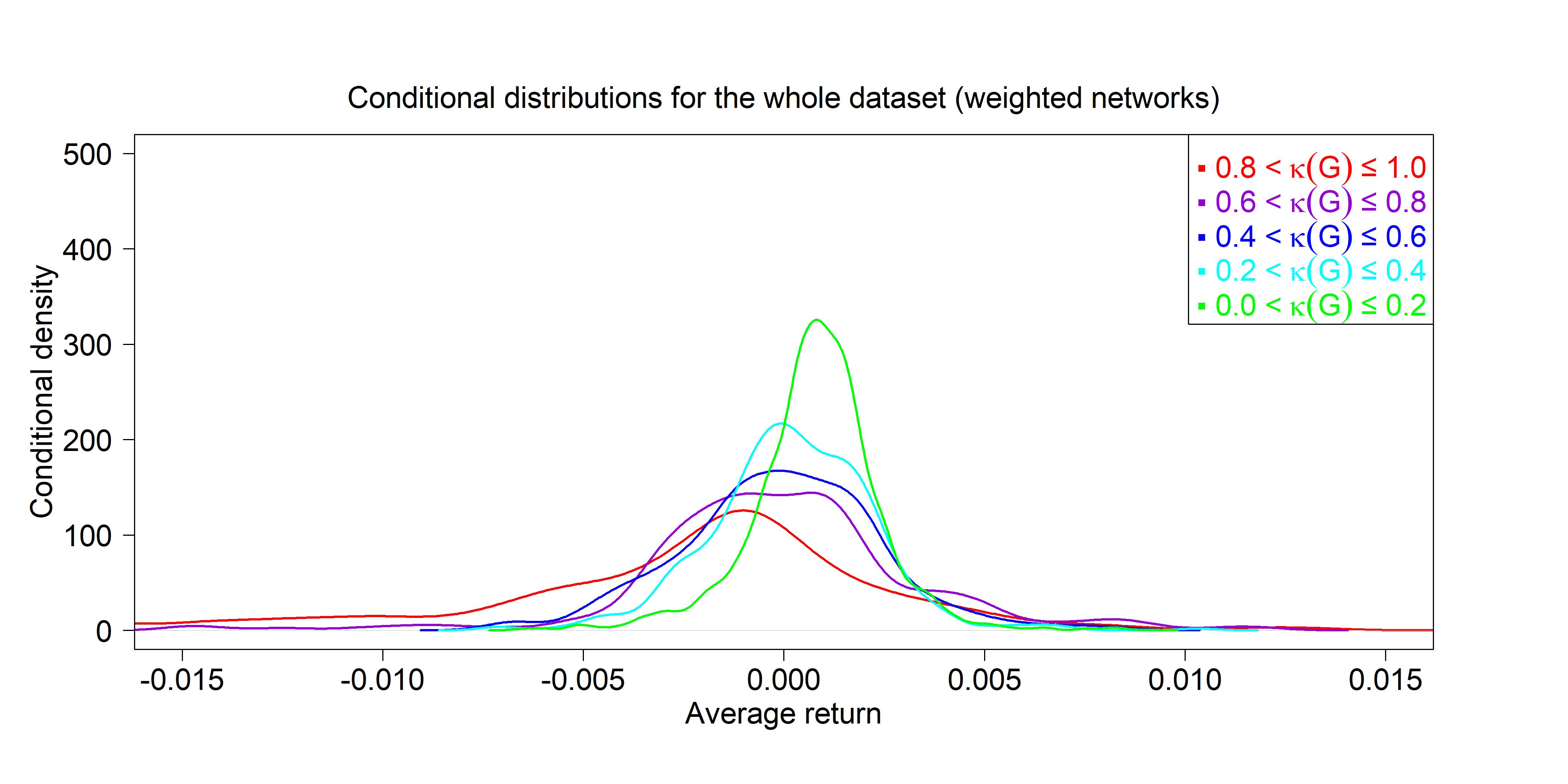}}
	\subfloat[]{\includegraphics[width=0.50\textwidth]{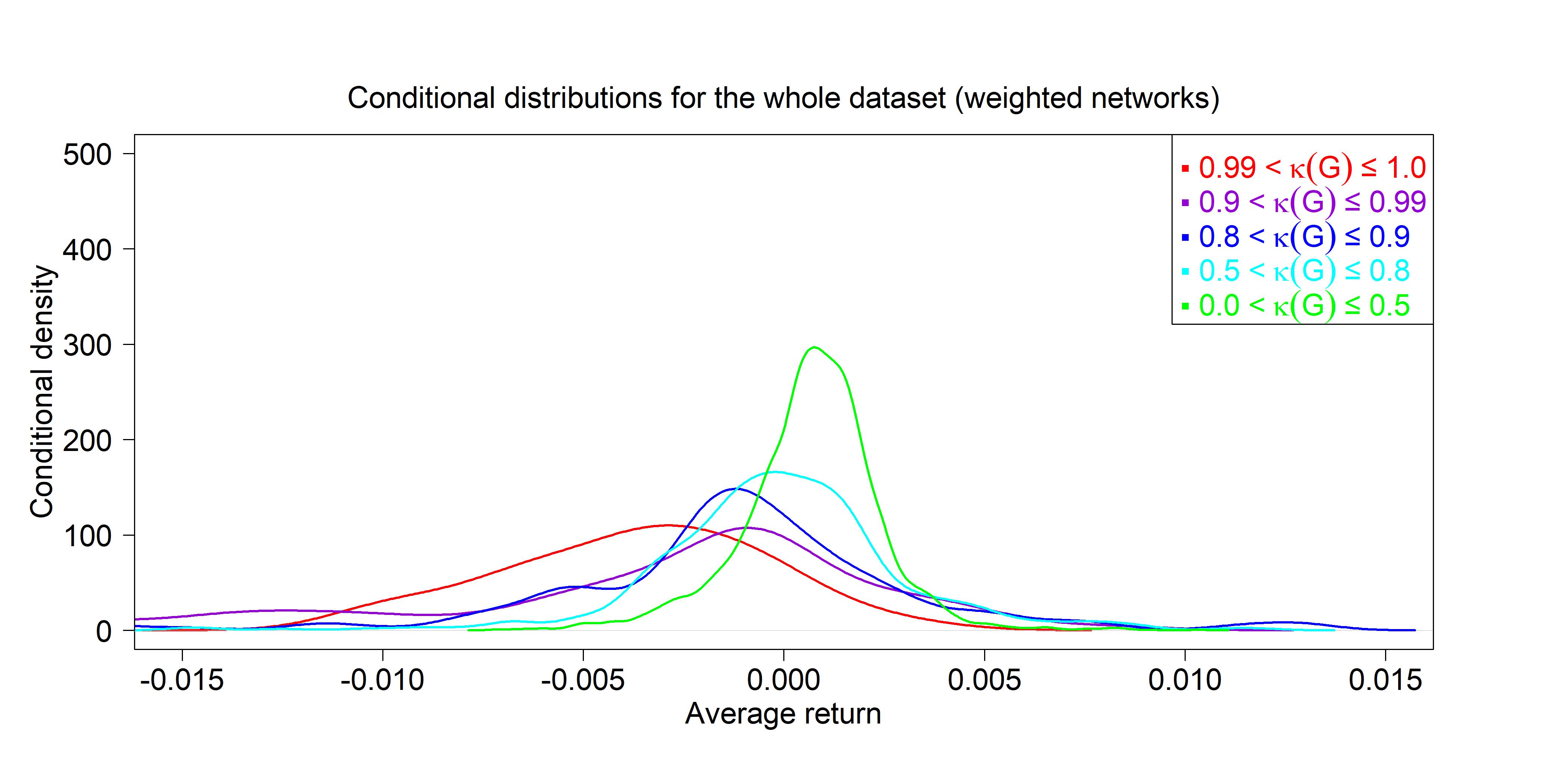}}
	\\
	\subfloat[]{\includegraphics[width=0.50\textwidth]{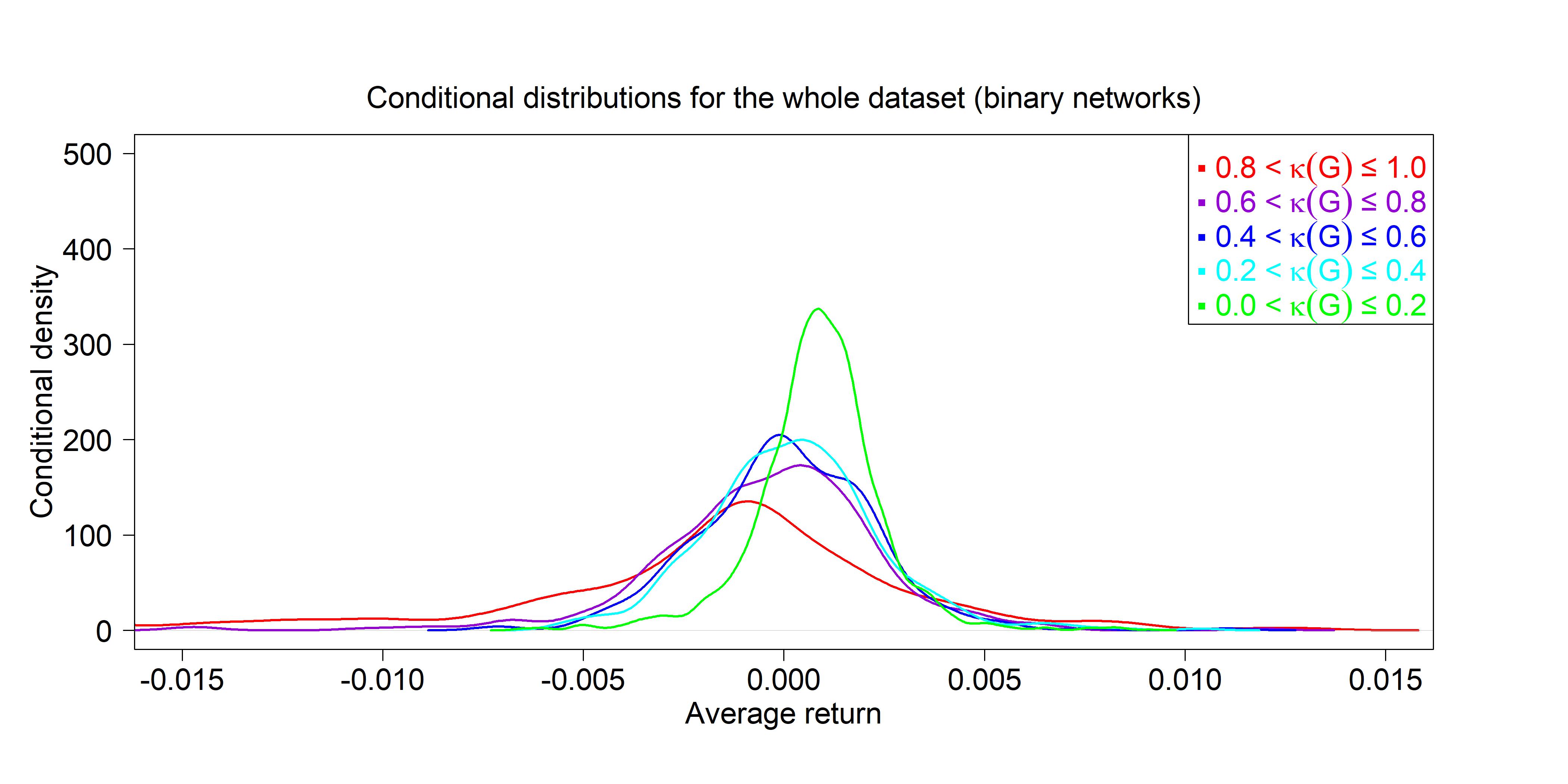}}
	\subfloat[]{\includegraphics[width=0.50\textwidth]{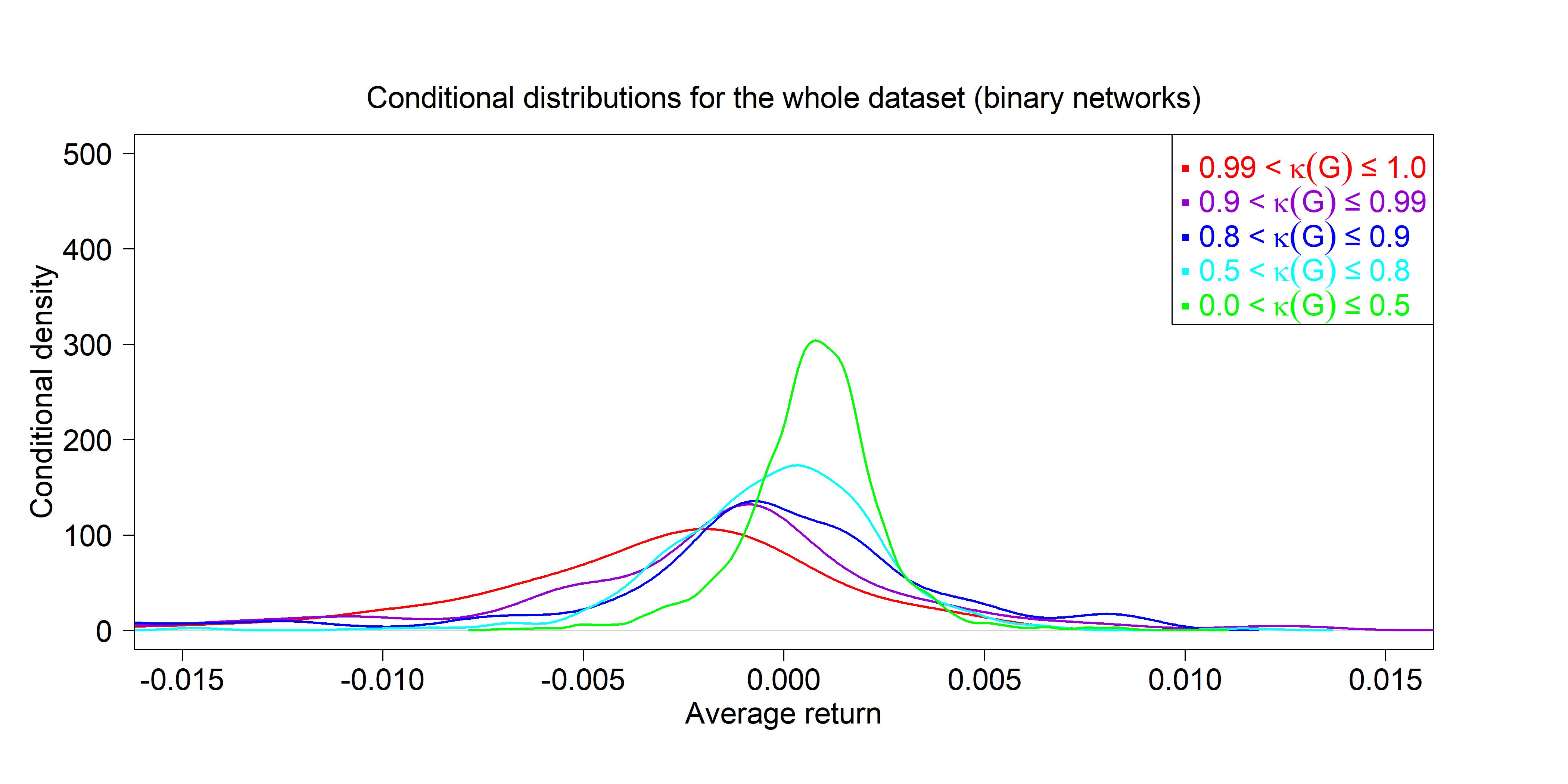}}
	\caption{Conditional distributions of the $20-$days average returns with respect to the value of $\kappa(G)$, for the whole network of $385$ assets. In panels (a) and (b), the densities are computed on the weighted version of the correlations networks, while, in panels (c) and (d), the densities are computed on the binary version of the correlation networks. Panels (a) and (c) refer to equal width intervals and panels (b) and (d) to the partition described in the text.}
	\label{fig6}
\end{figure}

Figure \ref{fig6} displays the densities of the average returns conditioned to the values of $\kappa(G)$, obtained by using a Gaussian kernel smoothing for the whole dataset of $385$ assets (panels (a) and (b) for the weighted version of the correlation networks and panels (c) and (d) for the binary version). When $\kappa(G)$ increases, the densities move leftwards and flatten. More precisely, three effects can be recognized in relation to the increase of $\kappa(G)$: the average of the conditional distribution decreases; the standard deviation increases; the VaR\footnote{The Value at Risk, $p$ VaR, measures the loss that will not be exceeded with a given confidence probability $p$ over a certain time horizon (see, for instance, \cite{Manganelli2001}). We adopt $p=0.05$.} of the average returns increases. Fig. \ref{fig7} shows the boxplots of the same distributions conditioned on the two different partitions of the $[0,1]$ interval for the global balance in the weighted and binary version. The mean of the average returns within each interval decreases as $\kappa(G)$ increases, while the standard deviation increases. This is particularly noticeable by considering the extreme intervals in Fig. \ref{fig7}, panels (b) and (d), which refer to the events in the interval $\kappa(G) \in (0.99,1.00]$. The values of the mean of the conditional distributions in Fig. \ref{fig7} are collected in Table \ref{Table2}. The means of the smoothed distributions in Fig. \ref{fig6} are reported in the Supplementary Material, Sec. 5. Both are characterized by a strictly negative monotonic behavior as $\kappa(G)$ increases. Finally, we computed the VaR within each band. For instance, for the whole network in the weighted version and with intervals of equal width the VaR's are: $0.02384863$ for $0<\kappa(G)\leq 0.2$;  $0.0281525$, for  $0.2<\kappa(G)\leq 0.4$;  $0.03010301$, for $0.4<\kappa(G)\leq 0.6$; $0.03521984$, for $0.6<\kappa(G)\leq 0.8$; and $0.05100622$ for $0.8<\kappa(G)\leq 1.0$. \vspace{-0.5cm}
\begin{figure}[H]
	\centering
	\subfloat[]{\includegraphics[width=0.45\textwidth]{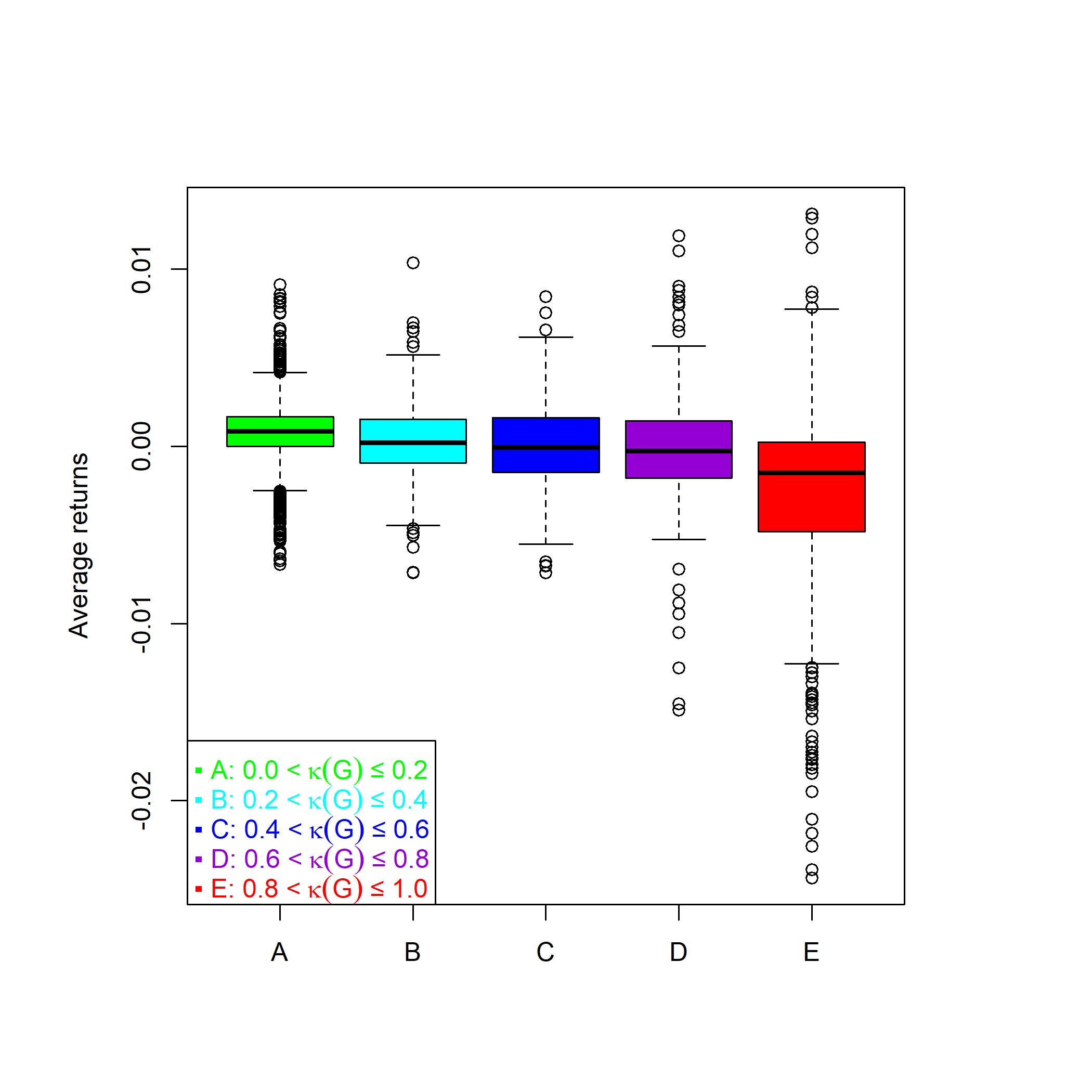}}
	\subfloat[]{\includegraphics[width=0.45\textwidth]{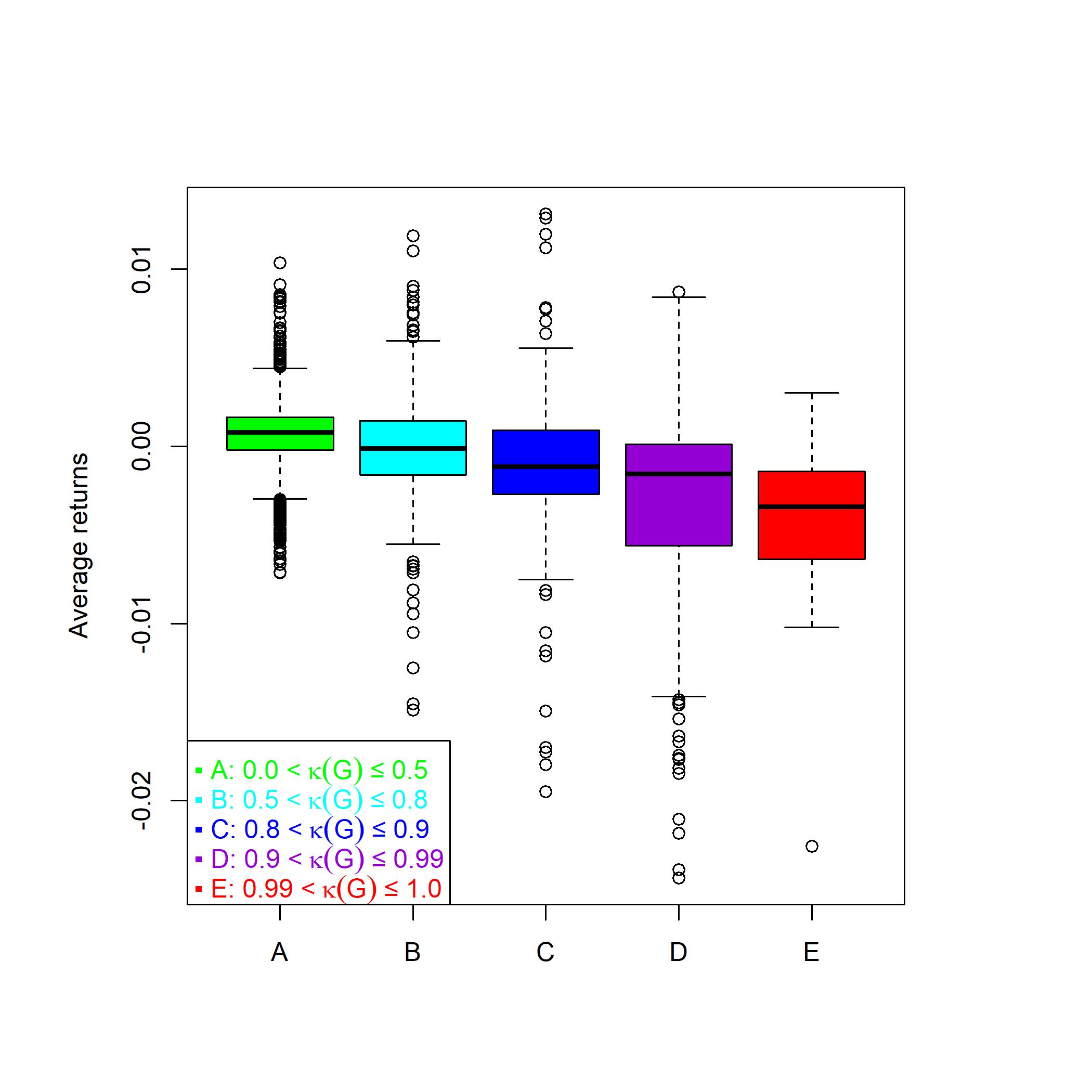}} \vspace{-0.4cm}
	\\
	\subfloat[]{\includegraphics[width=0.45\textwidth]{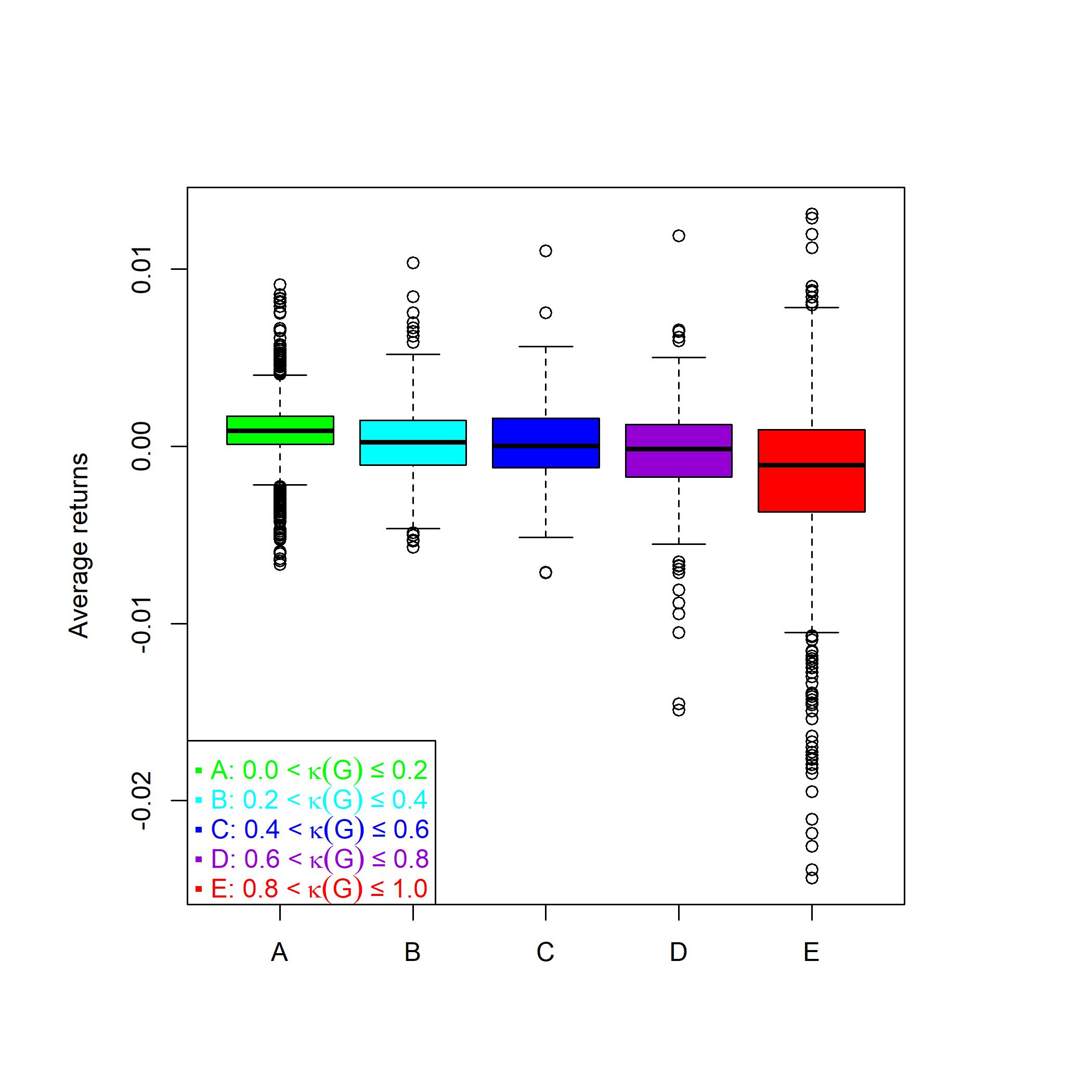}}
	\subfloat[]{\includegraphics[width=0.45\textwidth]{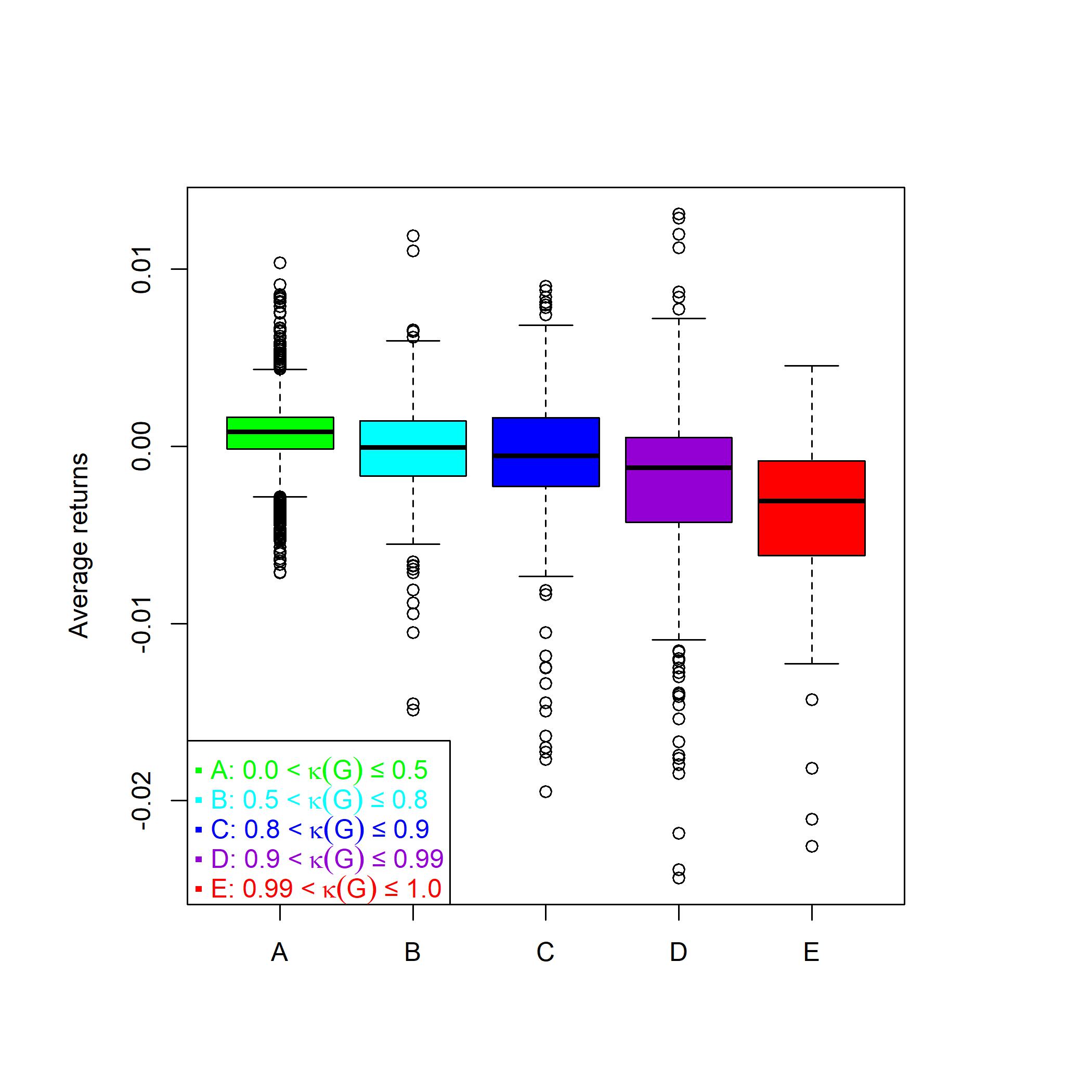}}
	\caption{Boxplots for the $20-$days average returns conditioned on the value of $\kappa(G)$, for the whole network of $385$ assets. Panels (a) and (b) refer to the weighted version of the correlations networks, while panels (c) and (d) to the binary version of the correlations networks. Panels (a) and (c) refer to equal width intervals and panels (b) and (d) to the partition described in the text.}
	\label{fig7} 
\end{figure}
\vspace{-1cm}

\begin{table}[H]
	\footnotesize
	\begin{center}
		\begin{tabular}{c|c|c|c|c|c|}
			\cline{2-6}
			& \multicolumn{5}{c|}{Mean of the average returns} \tabularnewline \cline{1-6}
			\multicolumn{1}{|l|}{\bf Interval} & $0<\kappa(G)\leq 0.2$ & $0.2<\kappa(G)\leq 0.4$ & $0.4<\kappa(G)\leq 0.6$ & $0.6<\kappa(G)\leq 0.8$ & $0.8<\kappa(G)\leq 1.0$  \tabularnewline \hline
			\multicolumn{1}{|c|}{\bf Weighted}
			& $0.0007890292$
			& $0.0002017107$
			& $-0.00008924373$
			& $-0.00009901411$
			& $-0.002686201$
			\tabularnewline \hline
			\multicolumn{1}{|c|}{\bf Binary}
			& $0.0008529778$
			& $0.0002181775$
			& $0.00006867374$
			& $-0.0004070862$
			& $-0.00202211$
			\tabularnewline \hline
			\hline
			\multicolumn{1}{|l|}{\bf Interval} & $0<\kappa(G)\leq 0.5$ & $0.5<\kappa(G)\leq 0.8$ & $0.8<\kappa(G)\leq 0.9$ & $0.9<\kappa(G)\leq 0.99$ & $0.99<\kappa(G)\leq 1.0$  \tabularnewline \hline
			\multicolumn{1}{|c|}{\bf Weighted} 
			& $0.0006679086$
			& $-0.00009905363$
			& $-0.001337312$
			& $-0.003448823$
			& $-0.004218843$
			\tabularnewline \hline
			\multicolumn{1}{|c|}{\bf Binary} 
			& $0.0007200273$
			& $-0.0002525449$
			& $-0.001012558$
			& $-0.002203236$
			& $-0.003917621$
			\tabularnewline \hline
		\end{tabular}
	\end{center}
	\caption{Mean of the average returns in the different balance bands. Weighted and binary refer to the two versions of the correlation networks.}
	\label{Table2}
\end{table}
This evidence shows that the global balance is a meaningful systemic risk measure. Indeed, a large value of $\kappa(G)$ signals that the market is experiencing worse performance and that the probability of big losses is high. We test the robustness of these numerical results on a database (ESX50) independent of the one analyzed here. We show the relative results in the Supplementary Material, Sec. 6.

\section{Conclusion}
\label{Sect. Conclusion}
This paper provides a bridge between the realm of balance in signed networks and that of systemic risk measures for asset correlation networks.
The link between the two frameworks is built through the definition of a discrete diffusive process that drives the spread of the information on the network. On the one side, the process permits to derive the global balance indicator in an alternative way with respect to the standard combinatoric approach. On the other side, the steady state of the diffusive process is the solution of a linear system where the coefficients are the elements of the exponential matrix. Then the relation between the condition number of the exponential matrix and the global balance of the network follows in a straightforward way. 
The structural predictability of a network refers to the monotone behavior of the nodes in response to a perturbation at one or more nodes, and is equivalent to the numerical stability of a linear system.
We prove interesting properties that shed light on the relationship between the global balance and the condition numbers of the signed and unsigned problems. We also introduce an almost complete characterization of signed networks that consider the number of possible odd/even positive/negative closed walks.
By replacing the adjacency matrix with the correlation matrix of financial returns, we apply the global balance index to systemic risk detection.
This step is natural considering that there exists a whole class of systemic risk measures proposed in the literature that are generalizations of the condition number of the correlation matrix.   
The application to two different databases of real financial data supports and confirms our idea. The global balance of the correlation network can be used as a systemic risk indicator. The main advantage of this approach is that it does not require a full rank correlation matrix. Consequently, the global balance of the correlation network can be calculated also when the number of observations is smaller than the number of the financial assets. 
This scenario is of great interest in the framework of systemic risk for two main reasons. 
First, ideally the number of securities constituting a good approximation of the whole financial index is supposed to be very large, and our methodology allows to deal with a large number of assets. Second, our approach allows to work with short estimation windows containing only more recent data, which are necessary to obtain a systemic risk measure with a good descriptive content and a potential predictive power.
There is one further advantage that characterizes our proposal. The global balance indicator is naturally obtained as the aggregation of the local balance indices. Hence, these local indices measure the contribution of the single node in the correlation network to the global balance. While standard approaches to systemic risk usually measures the overall risk of the economic system, the approach based on the balance is basically local. Our future research will focus on the local balance to investigate the role of each asset to the general stability of the market.

\textbf{Data availability}
Data will be made available on request. The code used to compute the indicators used in this article can be found in the following repository:

\noindent 
\href{https://github.com/fernandodiazdiaz/Global\_balance-and\_Systemic\_Risk/tree/main}{https://github.com/fernandodiazdiaz/Global\_balance-and\_Systemic\_Risk/tree/main}

\textbf{Acknowledgments}
We thank the editor and anonymous reviewers for helpful comments and suggestions. FDD acknowledges funding from Ministerio de Ciencia e Innovacion, Agencia Estatal de Investigacion Program for Units of Excellences Maria de Maeztu (CEX2021$-$001164-M$/$10.13039$/$501100011033), from MDM$-2017-0711-20-2$ funded by MCIN$/$AEI$/$10.13039$/$\\501100011033 and by FSE invierte en tu futuro, as well as project APASOS (No. PID2021 $-$122256NB$-$C22).
RG and PU acknowledge financial support from the European Union – NextGenerationEU. Project PRIN 2022 “Networks:decomposition, clustering and community detection'' code: 2022NAZ0365 - CUP H53D23002510006. PB and RG are members of GNAMPA-INdAM.

\bibliographystyle{elsarticle-num}
\bibliography{References}

\section*{\Large Supplementary material}

\setcounter{section}{0}

\section{Another possible choice of the coefficient $\alpha(k)$ in Eqs. \eqref{diffusion1} and \eqref{diffusion2}}

We show here another analytically tractable choice for the coefficient $\alpha(k)$ in Eq. (2) of the main text. For simplicity, we only consider the limit $t\to\infty$. We assume throughout the derivation that the spectral radius of $\bf A$ is smaller than one. We begin with
\begin{equation*}
	{\bf G} = {\bf I}+\sum_{j=1}^\infty \left(\prod_{k=0}^{j-1}\alpha(k)\right){\bf A}^j,
\end{equation*}
with
\begin{equation*}
	\alpha(k)=\frac{(k+2)^2}{(k+2)^2-1}=\frac{(k+2)^2}{(k+1)(k+3)}.
\end{equation*}
We first compute the product. Define
\begin{equation*}
	P(j)=\prod_{k=0}^{j-1}\alpha(k)
	=\prod_{k=0}^{j-1}\frac{(k+2)^2}{(k+1)(k+3)}.
\end{equation*}
Trivial algebraic manipulations show that
\begin{equation*}
	P(j)  = \frac{2\,((j+1)!)^2}{\,j!(j+2)!} = \frac{2(j+1)}{j+2}.
\end{equation*}
Substituting \(P(j)\) into $\bf G$ and doing some algebra, we get to:
\begin{align*}
	{\bf G} = \textbf{I} + 2\sum_{j=1}^\infty \left(1 - \frac{1}{j+2} \right){\bf A}^j= \textbf{I} + 2{\bf A}(\textbf{I}-{\bf A})^{-1} - 2\sum_{j=1}^\infty \frac{{\bf A}^j}{j+2}.
\end{align*}
We evaluate the remaining sum:
\begin{equation*}
	\sum_{j=1}^\infty \frac{{\bf A}^j}{j+2} = {\bf A}^{-2}\sum_{j=3}^\infty \frac{{\bf A}^j}{j} = {\bf A}^{-2}\Bigl(-\ln(\textbf{I}-{\bf A})-{\bf A}-\frac{{\bf A}^2}{2}\Bigr),
\end{equation*}
where we have used the Taylor expansion of the logarithm. 
Substituting back into the expression for \({\bf G}\):
\begin{equation*}
	{\bf G} = 2\textbf{I} + 2{\bf A}(\textbf{I}-{\bf A})^{-1}+2{\bf A}^{-1} +2{\bf A}^{-2}\ln(\textbf{I}-{\bf A}).
\end{equation*}

\section{Proof of Proposition \ref{proposition1}}
\setcounter{proposition}{0}
\begin{proposition}
	If $G$ is a balanced or antibalanced signed network, then ${\mathscr R}({\bf A})=1$. 
\end{proposition}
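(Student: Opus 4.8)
The plan is to reduce the statement to a purely spectral identity by invoking Proposition \ref{proposition2}, and then to use Acharya's theorem to collapse the relevant sums of exponentials. Writing $\lambda_i$ for the eigenvalues of $\mathbf{A}$ and $\overline{\lambda}_i$ for those of $|\mathbf{A}|$, the first bookkeeping step is to record how the spectra transform under the two sign operations appearing in the quotient. Since $-G$ has adjacency matrix $-\mathbf{A}$ and $-|G|$ has adjacency matrix $-|\mathbf{A}|$, while passing to the unsigned graph leaves $|\mathbf{A}|$ unchanged in all three cases, I would express
\begin{equation*}
\kappa(-G)=\frac{\sum_i e^{-\lambda_i}}{\sum_i e^{\overline{\lambda}_i}}, \qquad \kappa(-|G|)=\frac{\sum_i e^{-\overline{\lambda}_i}}{\sum_i e^{\overline{\lambda}_i}},
\end{equation*}
using that the eigenvalues of $-\mathbf{A}$ and $-|\mathbf{A}|$ are $-\lambda_i$ and $-\overline{\lambda}_i$ respectively, and that the common denominator $\sum_i e^{\overline{\lambda}_i}$ is shared by all three balance indices.

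In the \emph{balanced} case, Acharya's theorem gives $\{\lambda_i\}=\{\overline{\lambda}_i\}$ as multisets, whence $\kappa(G)=1$ immediately; moreover $\sum_i e^{-\lambda_i}=\sum_i e^{-\overline{\lambda}_i}$, so that $\kappa(-G)=\kappa(-|G|)$. Substituting into the formula of Proposition \ref{proposition2} then yields $\mathscr{R}(\mathbf{A})=\kappa(G)\,\kappa(-G)/\kappa(-|G|)=1$.

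In the \emph{antibalanced} case I would instead apply Acharya's theorem to $-G$, which is balanced by definition: this gives that $-\mathbf{A}$ and $|\mathbf{A}|$ share the same spectrum, i.e.\ $\{-\lambda_i\}=\{\overline{\lambda}_i\}$, equivalently $\{\lambda_i\}=\{-\overline{\lambda}_i\}$. The first identity gives $\sum_i e^{-\lambda_i}=\sum_i e^{\overline{\lambda}_i}$, hence $\kappa(-G)=1$; the second gives $\sum_i e^{\lambda_i}=\sum_i e^{-\overline{\lambda}_i}$, hence $\kappa(G)=\kappa(-|G|)$. Again Proposition \ref{proposition2} collapses to $\mathscr{R}(\mathbf{A})=\kappa(G)/\kappa(-|G|)=1$.

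The argument is essentially mechanical once the spectra are tracked correctly, so there is no deep obstacle. The one point that requires care is applying Acharya's theorem to the correct auxiliary graph---to $G$ itself in the balanced case but to $-G$ in the antibalanced case---and keeping straight that only the signed matrix changes under sign-flipping, while the unsigned matrix $|\mathbf{A}|$ (and hence the denominator $\sum_i e^{\overline{\lambda}_i}$) remains fixed throughout. A final sanity check worth including is that $\kappa(-G)$ and $\kappa(-|G|)$ need not individually equal $1$ in the balanced case (only their ratio does), which is consistent with the remark preceding the statement that one can only assert $\kappa(-G)\le 1$ and $\kappa(-|G|)\le 1$.
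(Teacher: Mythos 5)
Your proof is correct, and it takes a genuinely different route from the paper's. The paper proves Proposition \ref{proposition3} combinatorially: it expands ${\rm tr}[e^{\pm{\bf A}}]$ and ${\rm tr}[e^{\pm|{\bf A}|}]$ into penalized weights of positive/negative closed walks, splits them by parity, and derives the identity
\begin{equation*}
{\mathscr R}({\bf A})=\frac{\left[W^{+}_{\rm even}-W^{-}_{\rm even}\right]^{2}-\left[W^{+}_{\rm odd}-W^{-}_{\rm odd}\right]^{2}}{\left[W^{+}_{\rm even}+W^{-}_{\rm even}\right]^{2}-\left[W^{+}_{\rm odd}+W^{-}_{\rm odd}\right]^{2}},
\end{equation*}
from which balance ($W^{-}_{\rm even}=W^{-}_{\rm odd}=0$) and antibalance ($W^{-}_{\rm even}=W^{+}_{\rm odd}=0$) each force numerator and denominator to coincide. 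Your argument instead works at the spectral level: Acharya's theorem applied to $G$ gives $\{\lambda_i\}=\{\overline{\lambda}_i\}$ in the balanced case, and applied to $-G$ (balanced by the very definition of antibalance, with $|-{\bf A}|=|{\bf A}|$) gives $\{\lambda_i\}=\{-\overline{\lambda}_i\}$ in the antibalanced case; your bookkeeping of $\kappa(-G)$ and $\kappa(-|G|)$ and the cancellation in Proposition \ref{proposition2} are all accurate, including the correct observation that $\kappa(-G)$ and $\kappa(-|G|)$ need not individually be $1$ when $G$ is balanced. Two remarks on the trade-off. First, your route is shorter and only uses the easy direction of Acharya's theorem (balanced $\Rightarrow$ cospectral, which follows from the switching similarity ${\bf A}={\bf D}|{\bf A}|{\bf D}^{-1}$ with ${\bf D}$ diagonal $\pm 1$; the paper explicitly notes the theorem survives the passage from ${\bf A}$ to the weighted ${\bf W}$, so the weighted correlation setting is covered); but the paper's walk decomposition is not wasted effort, since the quantities $W^{\pm}_{\rm even/odd}$ are exactly what is reused to define even-/odd-dominance and to prove Proposition \ref{proposition4}, so the combinatorial proof buys the classification machinery that follows. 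Second, your argument can be compressed further: since ${\mathscr R}({\bf A})=\bigl(\sum_i e^{\lambda_i}\bigr)\bigl(\sum_i e^{-\lambda_i}\bigr)/\bigl[\bigl(\sum_i e^{\overline{\lambda}_i}\bigr)\bigl(\sum_i e^{-\overline{\lambda}_i}\bigr)\bigr]$ and the product $\bigl(\sum_i e^{\lambda_i}\bigr)\bigl(\sum_i e^{-\lambda_i}\bigr)$ is invariant under the substitution $\lambda_i\mapsto-\lambda_i$, either spectral identity ($\{\lambda_i\}=\{\overline{\lambda}_i\}$ or $\{\lambda_i\}=\{-\overline{\lambda}_i\}$) makes numerator and denominator equal at once, without routing through Proposition \ref{proposition2} at all.
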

\begin{proof}
	First recall that, in general, the $ij$- element of the $k$ power of the adjacency matrix returns the number of signed walks between nodes $i$ and $j$, that is the difference between the number of positive and negative walks of length $k$. Moreover, a closed walk is said to be positive (negative) if the product of the edge signs is positive (negative). Then observe that ${\rm tr}[e^{{\bf A}}]=\sum_{i=1}^{N}[\sum_{k}\frac{1}{k!}{\bf A}^{k}]_{ii}=\sum_{i=1}^{N}[\sum_{k}\frac{1}{k!}(W^{+}_{k,i}-W^{-}_{k,i})]$, where $W^{\pm}_{k,i}$ is the number (weight) of positive and negative closed walks, respectively, of length $k$ starting from node $i\in V$ and ending at the same node. 
	Then ${\rm tr}[e^{{\bf A}}]=\sum_{k}\frac{1}{k!}(\sum_{i=1}^{N}W^{+}_{k,i}-\sum_{i=1}^{N}W^{-}_{k,i})=\sum_{k}\frac{1}{k!}(W^{+}_{k}-W^{-}_{k})$, where $W^{\pm}_{k}$ is the \textit{total} number (weight) of positive and negative walks of length $k$ starting from any node and ending at the same node, summed over all the nodes in the network $G$. Similarly, for the network $|G|$ we have ${\rm tr}[e^{{\bf |A|}}]=\sum_{k}\frac{1}{k!}(W^{+}_{k}+W^{-}_{k})$. Therefore:

	\begin{align*}
		{\mathscr R}({\bf A})
		&=\frac{{\rm tr}[e^{{\bf A}}]\cdot {\rm tr}[e^{-{\bf A}}]}
		{{\rm tr}[e^{|{\bf A}|}]\cdot {\rm tr}[e^{-|{\bf A}|}]}
		=\frac{{\rm tr}[\sum_{k}\frac{1}{k!}{\bf A}^{k}]\cdot {\rm tr}[\sum_{k}\frac{1}{k!}(-{\bf A})^{k}]}
		{{\rm tr}[\sum_{k}\frac{1}{k!}|{\bf A}|^{k}]\cdot {\rm tr}[\sum_{k}\frac{1}{k!}(-|{\bf A}|)^{k}]}\\
		&=\frac{\sum_{k}\frac{1}{k!}\left[W^{+}_{k}-W^{-}_{k}\right]
			\cdot \sum_{k}\frac{(-1)^{k}}{k!}\left[W^{+}_{k}-W^{-}_{k}\right]}
		{\sum_{k}\frac{1}{k!}\left[W^{+}_{k}+W^{-}_{k}\right]\cdot \sum_{k}\frac{(-1)^{k}}{k!}\left[W^{+}_{k}+W^{-}_{k}\right]}\\
		&=\frac{
			\left[
			\sum_{k}\frac{1}{2k!}\left(W^{+}_{2k}-W^{-}_{2k}\right)
			+
			\sum_{k}\frac{1}{(2k+1)!}\left(W^{+}_{2k+1}-W^{-}_{2k+1}\right)
			\right]
		}
		{
			\left[
			\sum_{k}\frac{1}{2k!}\left(W^{+}_{2k}+W^{-}_{2k}\right)
			+
			\sum_{k}\frac{1}{(2k+1)!}\left(W^{+}_{2k+1}+W^{-}_{2k+1}\right)
			\right]
		}\cdot \\
		&\quad \ \frac{
			\left[
			\sum_{k}\frac{1}{2k!}\left(W^{+}_{2k}-W^{-}_{2k}\right)
			-
			\sum_{k}\frac{1}{(2k+1)!}\left(W^{+}_{2k+1}-W^{-}_{2k+1}\right)
			\right]
		}
		{
			\left[
			\sum_{k}\frac{1}{2k!}\left(W^{+}_{2k}+W^{-}_{2k}\right)
			-
			\sum_{k}\frac{1}{(2k+1)!}\left(W^{+}_{2k+1}+W^{-}_{2k+1}\right)
			\right]
		}\\
		&=\frac{
			\left[
			\sum_{k}\frac{1}{2k!}\left(W^{+}_{2k}-W^{-}_{2k}\right)
			\right]^{2}
			-
			\left[
			\sum_{k}\frac{1}{(2k+1)!}\left(W^{+}_{2k+1}-W^{-}_{2k+1}\right)
			\right]^{2}
		}
		{
			\left[
			\sum_{k}\frac{1}{2k!}\left(W^{+}_{2k}+W^{-}_{2k}\right)
			\right]^{2}
			-
			\left[
			\sum_{k}\frac{1}{(2k+1)!}\left(W^{+}_{2k+1}+W^{-}_{2k+1}\right)
			\right]^{2}
		}\\
		&=\frac{
			\left[
			W^{+}_{\rm even}-W^{-}_{\rm even}
			\right]^{2}
			-
			\left[
			W^{+}_{\rm odd}-W^{-}_{\rm odd}
			\right]^{2}
		}
		{
			\left[
			W^{+}_{\rm even}+W^{-}_{\rm even}
			\right]^{2}
			-
			\left[
			W^{+}_{\rm odd}+W^{-}_{\rm odd}
			\right]^{2}
		}\\
	\end{align*}
	where $W^{\pm}_{\rm even}$ and $W^{\pm}_{\rm odd}$ denote the total penalized weight of positive and negative closed walks of even and odd length, respectively. For a balanced graph, by definition, there are no negative closed walks and $W^{-}_{\rm even}=W^{-}_{\rm odd}=0$. For an antibalanced graph, by definition, every even closed walk is positive and every odd closed walk is negative, so $W^{-}_{\rm even}=W^{+}_{\rm odd}=0$. The claim follows.
\end{proof}	

\section{Results for random correlation matrices}\label{secA1}

We provide here some exact results on the asymptotic behavior of the sums in Eq. \eqref{ratio} in the case of purely random correlation matrices.
Specifically, we prove two analytical results about the two sums involved in the condition number on the signed network $G$, that is $\mathscr{K}(e^{-{\bf A}})=\left( \sum_{j=1}^{N}e^{-\lambda_{j}} \right)\left( \sum_{j=1}^{N}e^{\lambda_{j}} \right)$ for independent normal variables $\tilde{X}_{it}$ with zero mean and unit variance.

\begin{proposition}
	\label{propositionA1}
	Let $\tilde{X}_{it}$ be independent normal variables with zero mean and unit variance and let $\lambda_{1}>\dots >\lambda_{n}>0$ the eigenvalues of their correlation matrix $\bf C$. Then, for $N\to +\infty$, $T\to +\infty$, $N/T\to 1$
	\begin{equation*}
		\sum_{i=1}^{N}e^{\lambda_{i}} \sim I_{2}(2) e^2 N \sim 5.090679 N
	\end{equation*}
	and
	\begin{equation*}
		\sum_{i=1}^{N}e^{-\lambda_{i}} \sim \frac{I_{0}(2)+I_{1}(2)}{e^2}N \sim 0.523778 N
	\end{equation*}
	where $I_{n}(z)=\left( \frac{z}{2} \right)^{n}\sum_{k=0}^{\infty}\frac{\left(\frac{z^2}{4}\right)^k}{k! \Gamma(n+k+1)}$ are the modified Bessel functions of the first kind.
\end{proposition}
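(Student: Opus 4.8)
The plan is to replace the two discrete spectral sums by integrals against the limiting spectral density and then to recognise the resulting integrals as Bessel function representations. Since the $\tilde{X}_{it}$ are i.i.d.\ standard normal and $N/T\to 1$, the Marchenko--Pastur law guarantees that the empirical spectral distribution of $\mathbf{C}=\frac{1}{T}\tilde{\mathbf{X}}\tilde{\mathbf{X}}^{T}$ converges (weakly, almost surely) to the Marchenko--Pastur distribution with ratio parameter $c=1$, whose density is supported on $[0,4]$ and reads
\begin{equation}
\rho(\lambda)=\frac{1}{2\pi\lambda}\sqrt{(4-\lambda)\lambda}=\frac{1}{2\pi}\sqrt{\frac{4-\lambda}{\lambda}},\qquad \lambda\in[0,4].
\end{equation}
Writing $\frac{1}{N}\sum_{i=1}^{N}e^{\pm\lambda_{i}}$ as a linear spectral statistic, the first step is to argue that it converges to $\int_{0}^{4}e^{\pm\lambda}\rho(\lambda)\,d\lambda$, so that both asymptotics reduce to evaluating these two integrals.

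For the evaluation I would use the substitution $\lambda=2(1-\cos\theta)$, $\theta\in[0,\pi]$, under which $\sqrt{(4-\lambda)\lambda}=2\sin\theta$ and $d\lambda=2\sin\theta\,d\theta$, so that the awkward weight collapses to
\begin{equation}
\rho(\lambda)\,d\lambda=\frac{\sin^{2}\theta}{\pi(1-\cos\theta)}\,d\theta=\frac{1+\cos\theta}{\pi}\,d\theta.
\end{equation}
Then $\int_{0}^{4}e^{\pm\lambda}\rho\,d\lambda=\frac{e^{\pm 2}}{\pi}\int_{0}^{\pi}e^{\mp 2\cos\theta}(1+\cos\theta)\,d\theta$, and each piece is exactly a Bessel integral representation $I_{n}(z)=\frac{1}{\pi}\int_{0}^{\pi}e^{z\cos\theta}\cos(n\theta)\,d\theta$. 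Using $I_{n}(-z)=(-1)^{n}I_{n}(z)$ this yields $\int_{0}^{4}e^{\lambda}\rho\,d\lambda=e^{2}\bigl(I_{0}(2)-I_{1}(2)\bigr)$ and $\int_{0}^{4}e^{-\lambda}\rho\,d\lambda=e^{-2}\bigl(I_{0}(2)+I_{1}(2)\bigr)$. The second formula is already in the claimed form; for the first, the recurrence $I_{\nu-1}(z)-I_{\nu+1}(z)=\frac{2\nu}{z}I_{\nu}(z)$ at $\nu=1$, $z=2$ gives $I_{0}(2)-I_{1}(2)=I_{2}(2)$, producing $e^{2}I_{2}(2)$ as required, and the stated numerical values follow.

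The main obstacle is the first step: the test function $e^{\lambda}$ is unbounded, so weak convergence of the spectral distribution alone does not immediately transfer to convergence of $\frac{1}{N}\sum_{i=1}^{N}e^{\lambda_{i}}$. To close this gap I would invoke the known behaviour of the extreme eigenvalues of a pure-noise sample covariance matrix, namely that for i.i.d.\ Gaussian entries the largest eigenvalue converges almost surely to the upper spectral edge $\lambda_{+}=(1+\sqrt{c})^{2}=4$ and no eigenvalues escape the support. Consequently the relevant eigenvalues lie, asymptotically, in a fixed compact interval on which $e^{\pm\lambda}$ is bounded and continuous, and on such an interval weak convergence of the empirical spectral distribution suffices to pass to the limit. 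This legitimises the reduction to the two integrals above and completes the argument.
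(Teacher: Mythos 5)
Your proof is correct, but it takes a genuinely different route from the paper's. The paper works entirely on the moment side: citing Jonsson, it uses the asymptotics $M_{k}=\frac{1}{N}\sum_{i}\lambda_{i}^{k}\sim\binom{2k}{k}\frac{1}{k+1}$ (the Catalan numbers, i.e.\ the Marchenko--Pastur moments at ratio $1$), expands the exponential in its Taylor series, exchanges the two sums, and recognizes $\sum_{k}(-1)^{k}\frac{(2k)!}{(k!)^{3}(k+1)}=e^{-2}\left(I_{0}(2)+I_{1}(2)\right)$ directly from the series definition of $I_{n}$ given in the statement (only the $e^{-\lambda}$ case is written out; the $e^{+\lambda}$ case is left analogous). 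You instead integrate $e^{\pm\lambda}$ against the limiting Marchenko--Pastur density, collapse the weight via $\lambda=2(1-\cos\theta)$, and invoke the integral representation $I_{n}(z)=\frac{1}{\pi}\int_{0}^{\pi}e^{z\cos\theta}\cos(n\theta)\,d\theta$ together with the recurrence $I_{0}(2)-I_{1}(2)=I_{2}(2)$; both computations are two faces of the same identity, since the MP moments are exactly the Catalan numbers. Your route has a concrete advantage in rigor: you explicitly close the gap that weak convergence of the empirical spectral distribution does not control the unbounded test function $e^{\lambda}$, by invoking Bai--Yin-type almost-sure convergence of the largest eigenvalue to the edge $\lambda_{+}=4$, which confines the spectrum to a compact set; the paper's term-by-term exchange of the limit $N\to\infty$ with the infinite Taylor series faces the same uniformity issue and leaves it implicit, so your argument is in this respect more complete. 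What the paper's route buys in exchange is brevity and self-containedness: it needs no integral representation or substitution, only fixed-$k$ moment asymptotics and the Bessel series as defined in the proposition, and it lands on the numerical constants immediately.
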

\begin{proof}
	The $k-$moment $M_{k}$ of the graph $G$ shows the following asymptotic behavior (see \cite{Jonnson1982})
	\begin{equation*}
		M_{k}=\frac{1}{N}\sum_{i=1}\lambda_{i}^{k} \sim \binom{2k}{k}\frac{1}{k+1}=\frac{(2k)!}{k!(k+1)!}
	\end{equation*}
	Then we have
	\begin{equation*}
		\begin{split}
			\sum_{i=1}^{N}e^{-\lambda_{i}}&=
			\sum_{i=1}^{N}\sum_{k=0}^{\infty} (-1)^k\frac{1}{k!}\lambda_{i}^{k}
			=\sum_{k=0}^{\infty}(-1)^k\frac{1}{k!}\sum_{i=1}^{N}\lambda_{i}^{k}\\
			&=N\sum_{k=0}^{\infty}(-1)^k\frac{1}{k!}M_{k}
			=N\sum_{k=0}^{\infty}(-1)^k\frac{1}{k!}\frac{(2k)!}{k!(k+1)!}\\
			&=N\sum_{k=0}^{\infty}(-1)^k\frac{(2k)!}{(k!)^3(k+1)}
			=N \frac{I_{0}(2)+I_{1}(2)}{e^2}
		\end{split}
	\end{equation*}
\end{proof}
As a consequence, the condition number grows, under these assumptions, as $2.67 N^2$, that is  $\mathscr{K}(e^{-{\bf A}})\sim O(N^2)$. A similar result can hardly be replicated for the absolute value matrix, i.e., for the network $|G|$. Nonetheless, we can investigate the statistical correlation between the ratio of the two condition numbers ${\mathscr R}({\bf A})=\mathscr{K}(e^{-{\bf A}})/\mathscr{K}(e^{-|{\bf A}|})$ and the global balance $\kappa(G)$. In particular, we generate random correlation matrices\footnote{Random matrices are generated by \textsc{randcorr} package in R.} of increasing size $N=5$, $N=10$, $N=20$, $N=50$, and we consider for each dimension 100 repetitions.
In Fig. \ref{fig8}, we report the scatterplot of the values of  ${\mathscr R}({\bf A})$ and $\kappa(G)$.   
As can be seen, the values of the two indicators correlate very strongly and increasingly as the network size, i.e. the number of assets involved, increases.
\begin{figure}[H]
	\centering
	\subfloat[]{\includegraphics[width=0.45\textwidth]{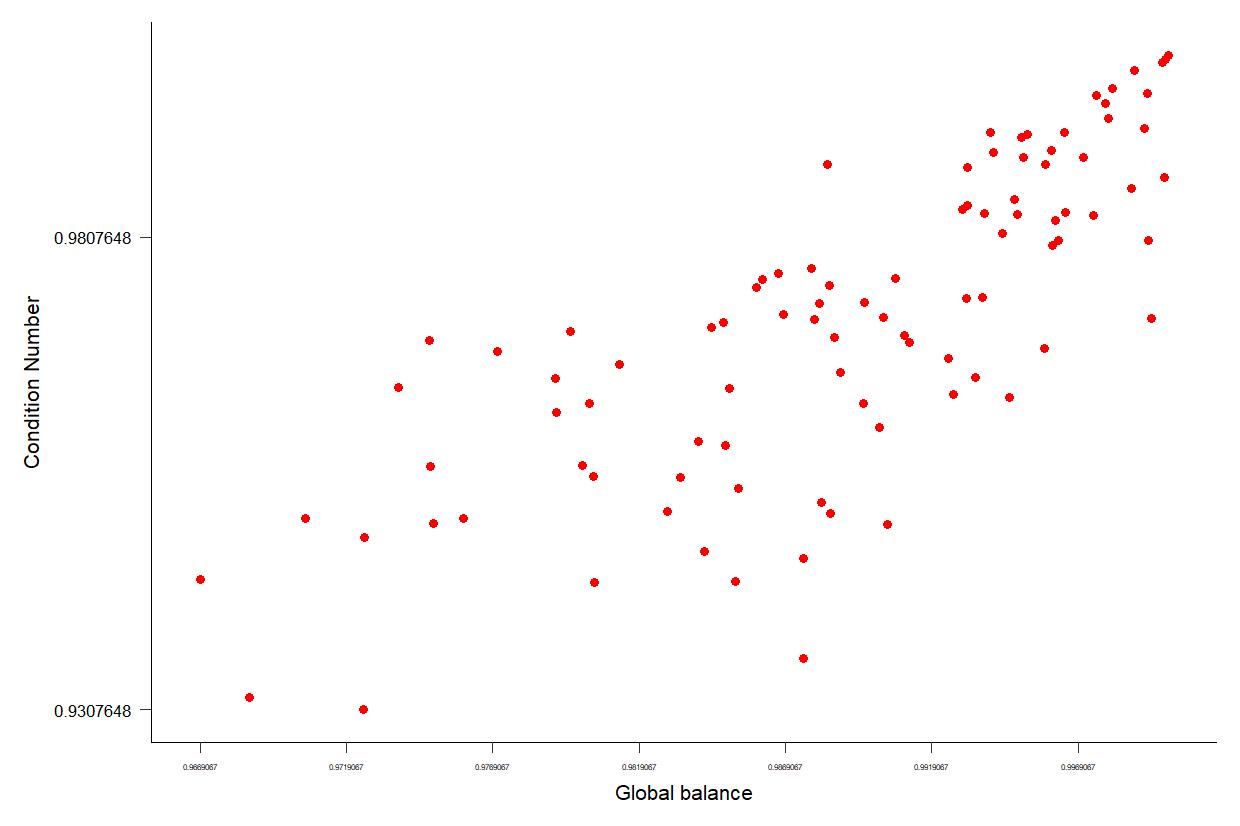}}
	\subfloat[]{\includegraphics[width=0.45\textwidth]{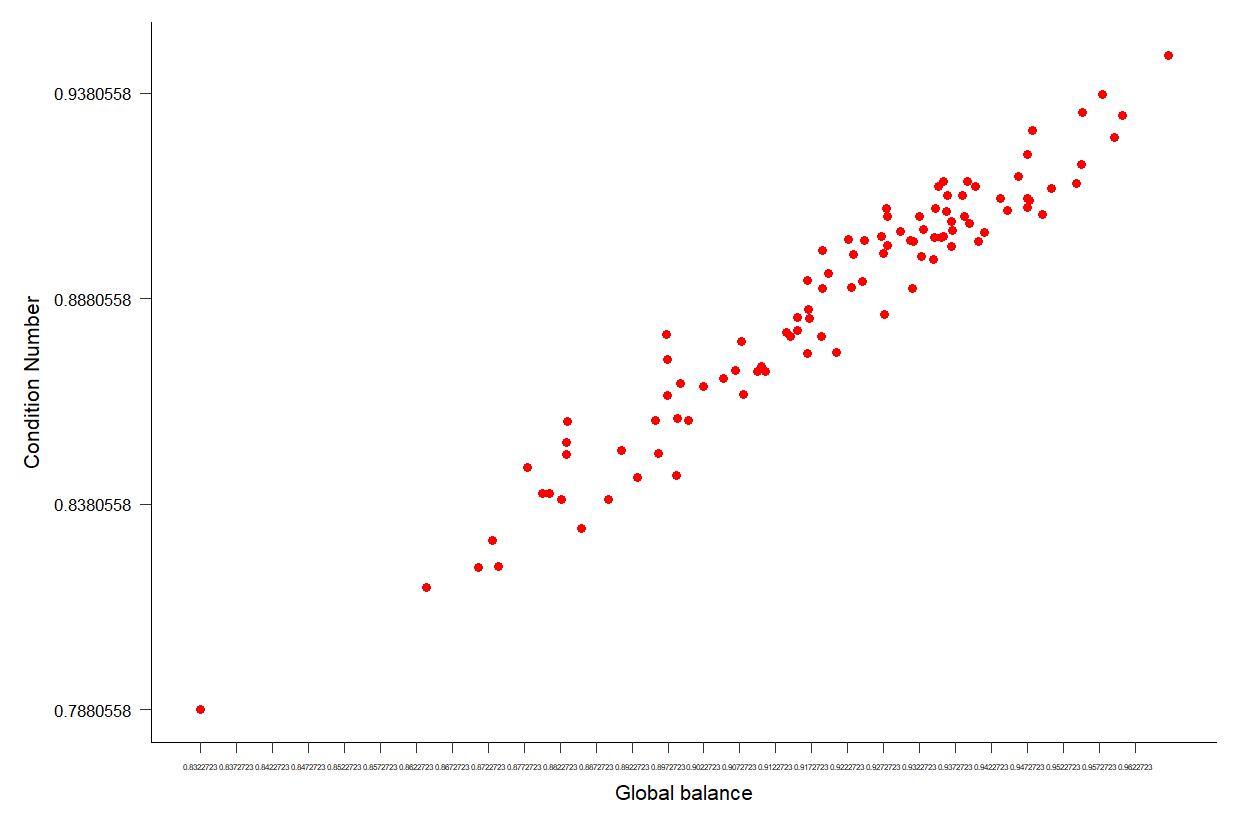}}\\
	\subfloat[]{\includegraphics[width=0.45\textwidth]{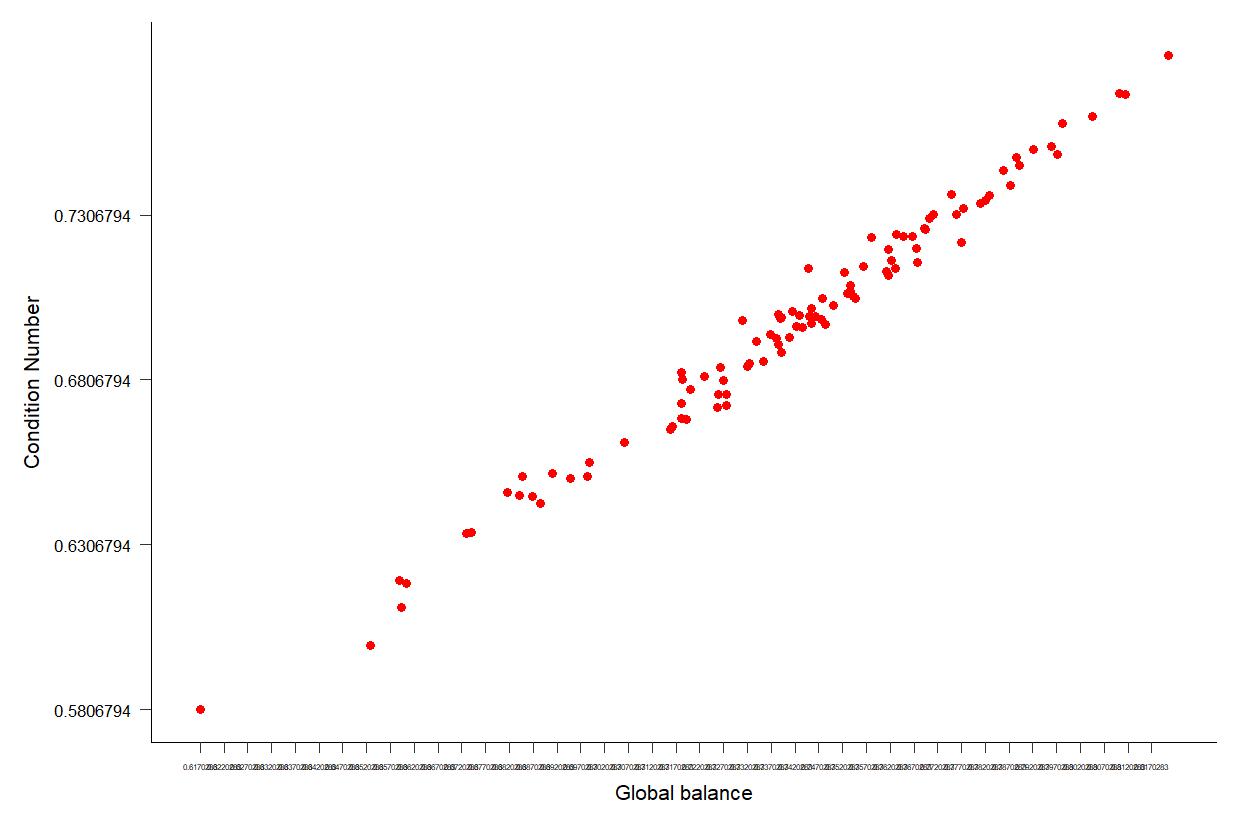}}
	\subfloat[]{\includegraphics[width=0.45\textwidth]{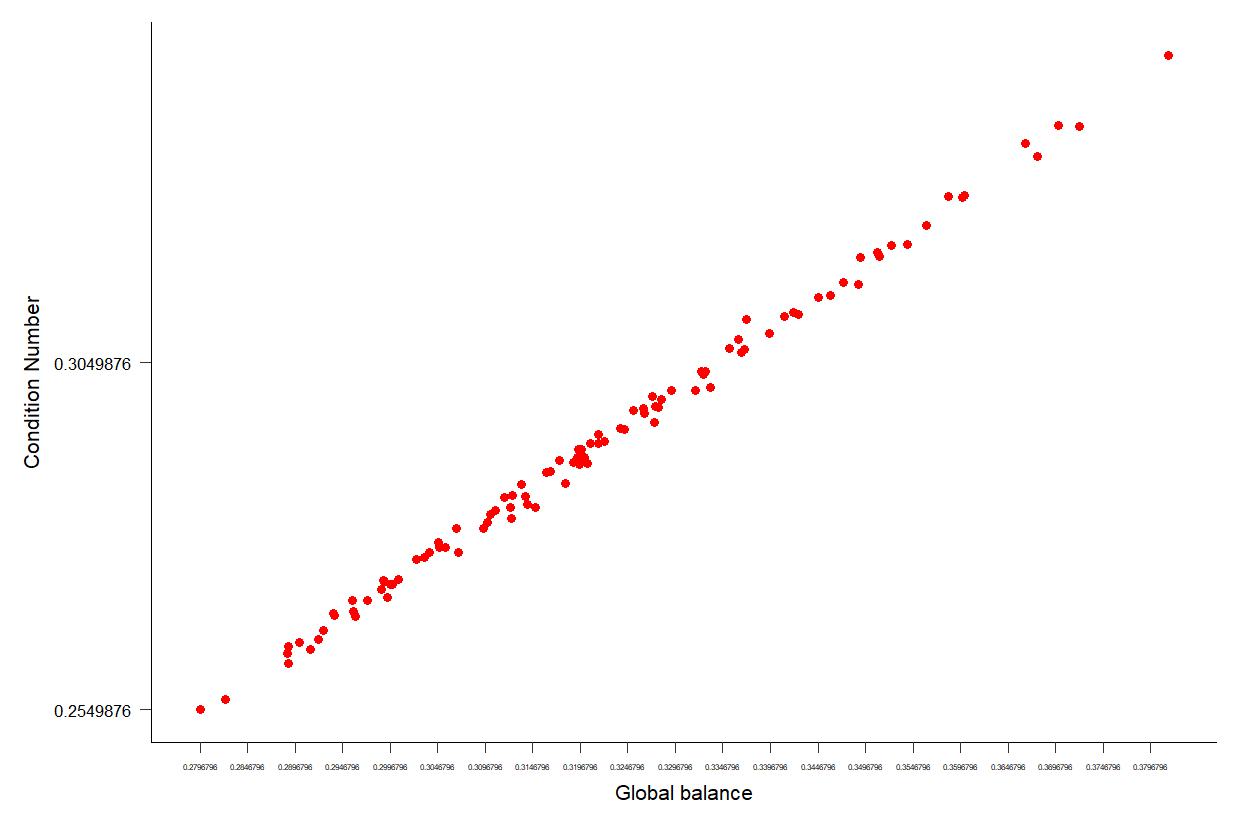}}
	\caption{Increasing correlation between the ratio ${\mathscr R}({\bf A})$ and the global balance $\kappa(G)$ for (a) $N=5$, $\rho=0.8546633$,  (b) $N=10$, $\rho=0.9693817$,  (c) $N=20$, $\rho=0.9918888$,  (d) $N=50$, $\rho=0.9984123$}
	\label{fig8} 
\end{figure}

\section{Global balance and average correlation}\label{secA2}

In Fig. \ref{fig9}, we show the temporal evolution of the global balance $\kappa(G)$ and of the average correlation for the S\&P500 dataset. In Fig. \ref{fig9} we have split the entire interval into two sub-intervals, 2005-2016 in panel (a) and 2016-2020 in panel (b), in order to highlight with higher resolution the information carried by the indicators in the two different periods. In Fig. \ref{fig10}, we zoom in on the period 2008-2014, when the global balance undergoes small relative changes.
\begin{figure}[H]
	\centering
	\subfloat[]{\includegraphics[width=0.80\textwidth] 
		{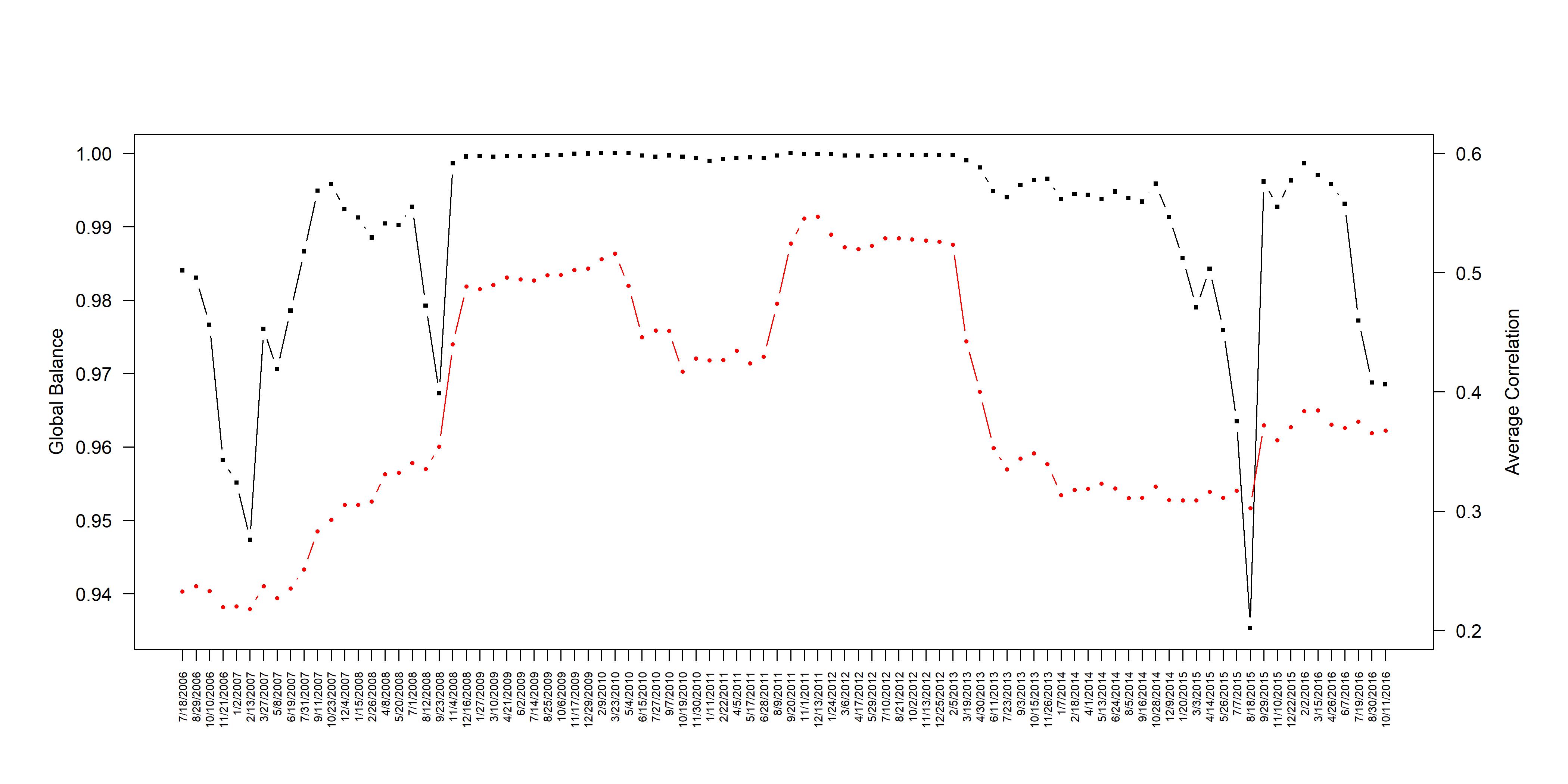}}\\
	\subfloat[]{\includegraphics[width=0.80\textwidth]{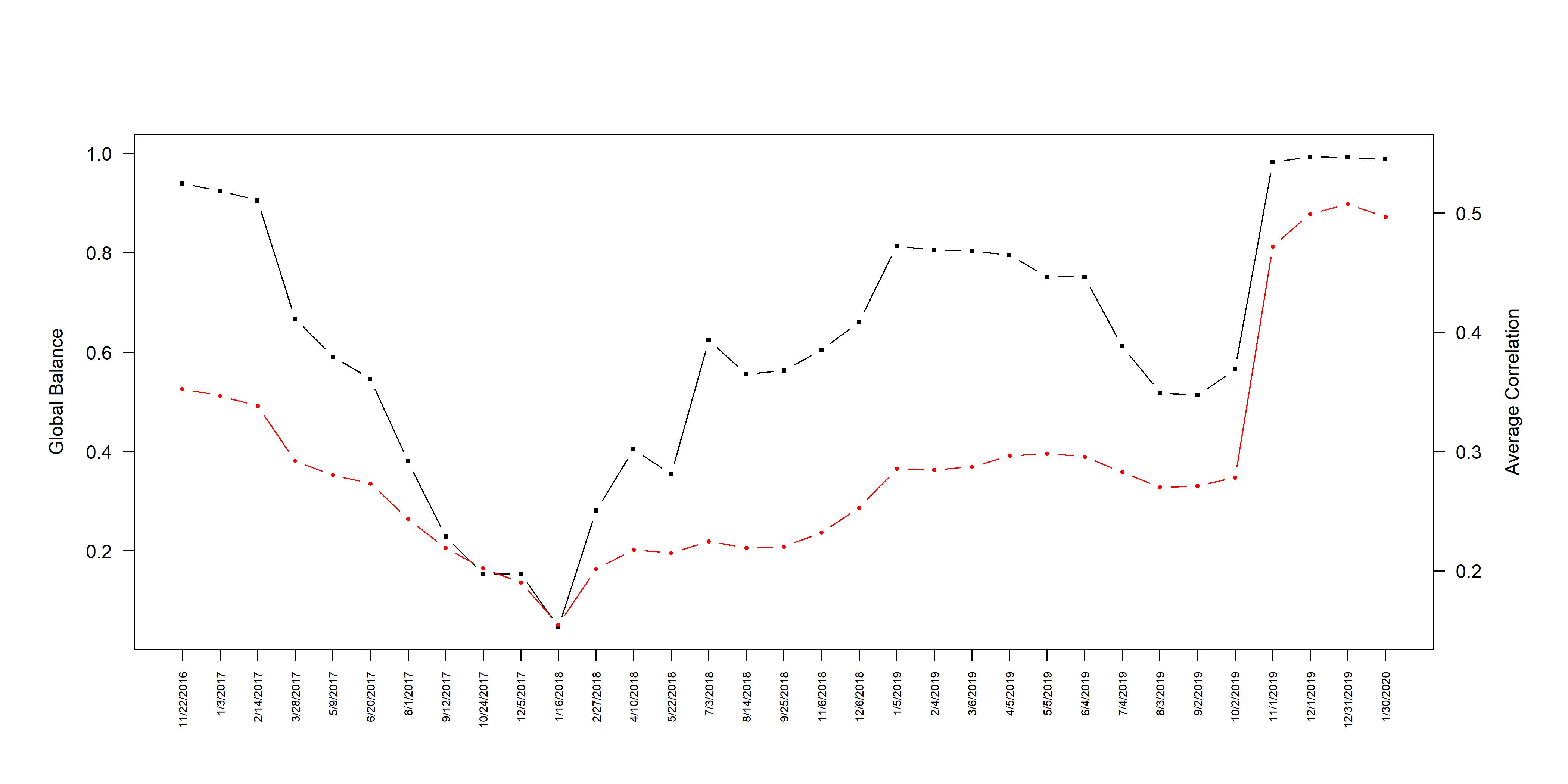}}\\
	\caption{Global Balance (square dots black line) and Average Correlation (circular dots red lines) for the S\&P500 daily returns with $\Delta T=400$ and $\Delta t=30$, for the period (a) 2005-2016 and (c) 2016-2020.}
	\label{fig9} 
\end{figure}

\begin{figure}[H]
	\centering
	\includegraphics[width=0.80\textwidth]{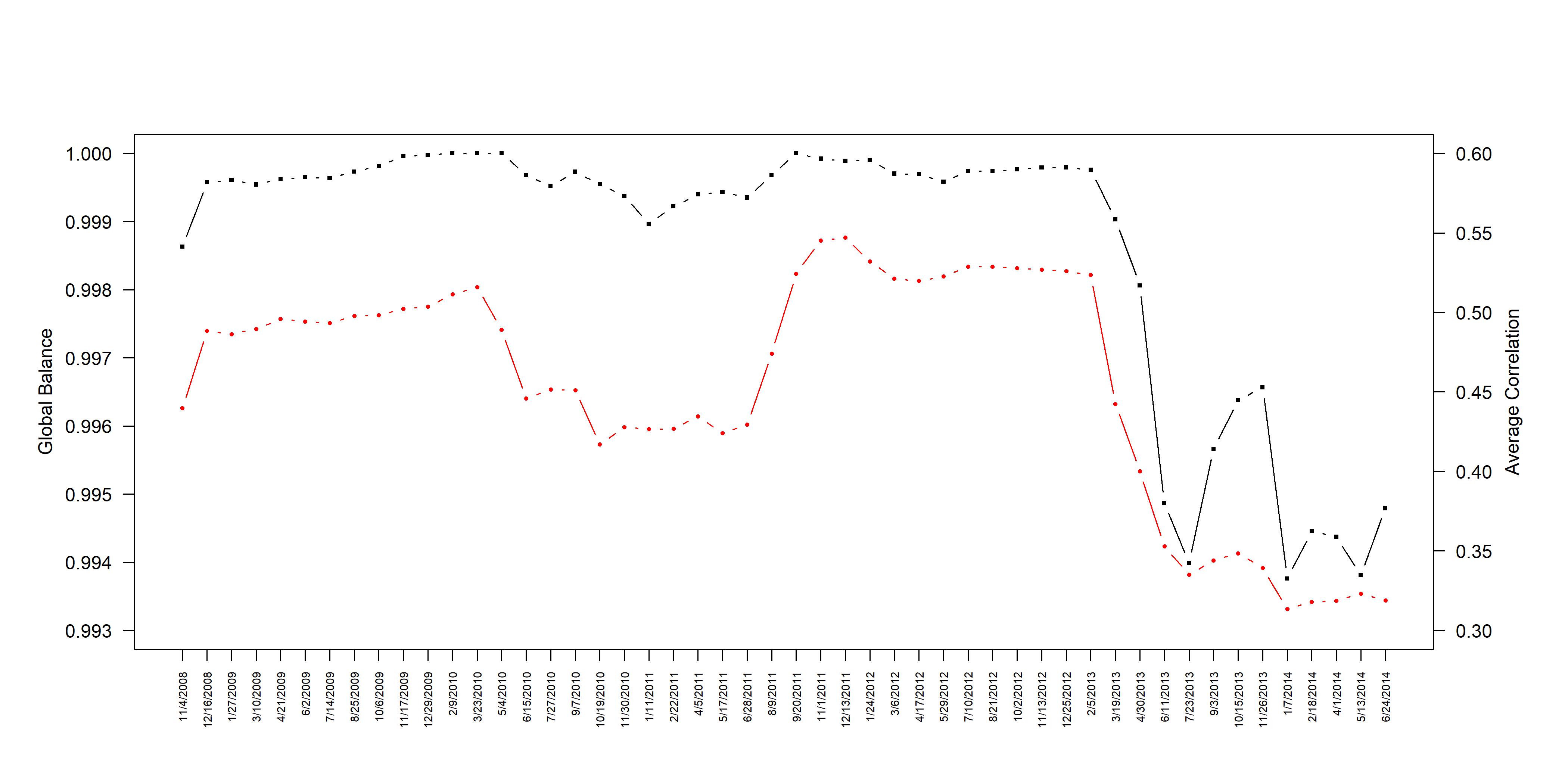}
	\caption{Global Balance (square dots black line) and Average Correlation (circular dots red lines) for the S\&P500 daily returns with $\Delta T=400$ and $\Delta t=30$, zoom on the period 2008-2014.}
	\label{fig10} 
\end{figure}

In Figs. \ref{fig11} and \ref{fig12}, we show the temporal evolution of the global balance $\kappa(G)$ and of the average correlation for the subset of $50$ assets in the S\&P500 dataset, considered in the main text, again over the whole time interval and the partitioned time interval, respectively.
\begin{figure}[H]
	\centering
	\includegraphics[width=0.80\textwidth]{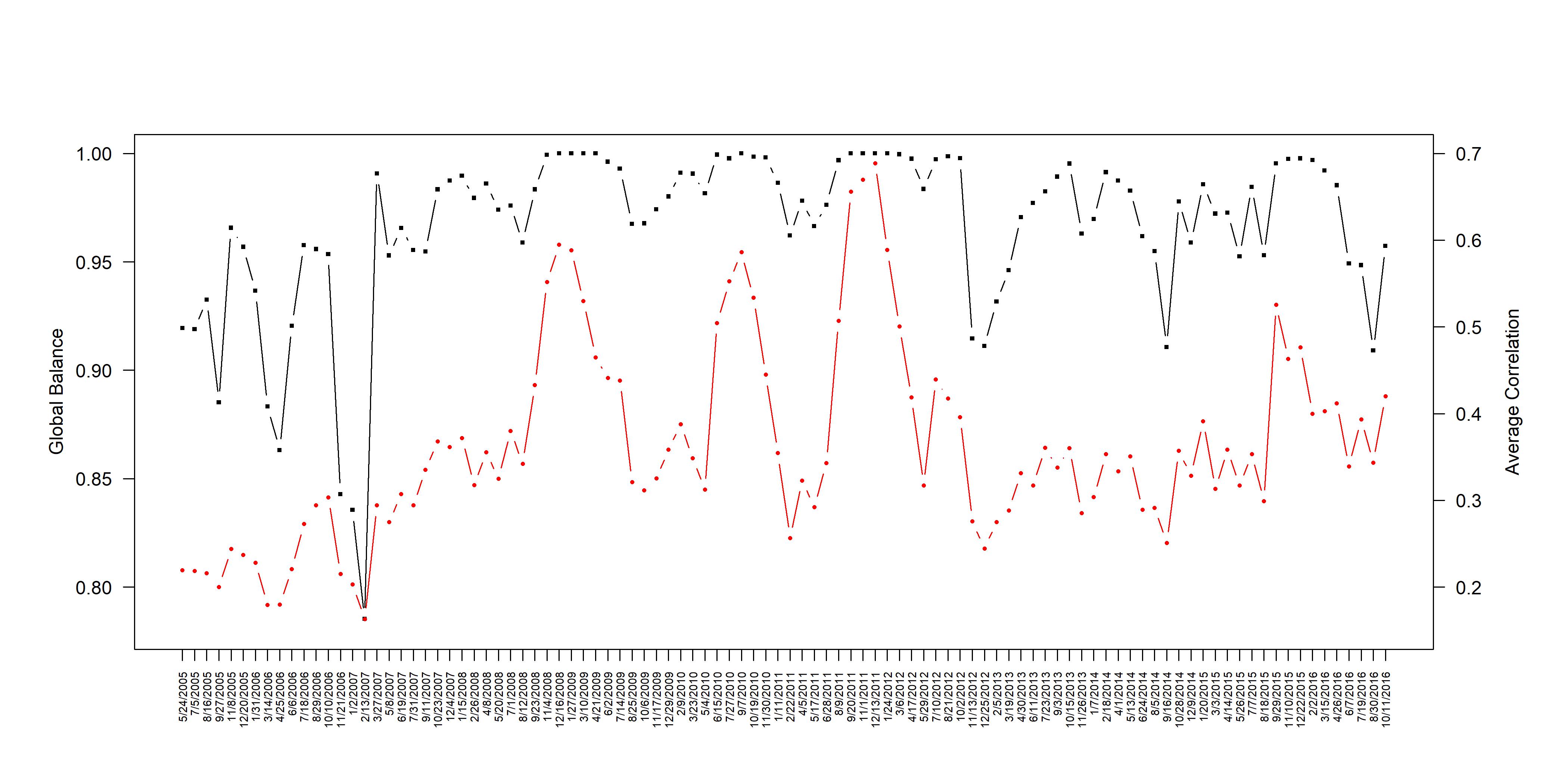}
	\caption{Global Balance (square dots black line) and Average Correlation (circular dots red lines), a sample of $50$ assets from S\&P500 daily returns, $\Delta T=100$ and $\Delta t=30$, for the period 2005-2016.}
	\label{fig11} 
\end{figure}
\begin{figure}[H]
	\centering
	\includegraphics[width=0.80\textwidth]{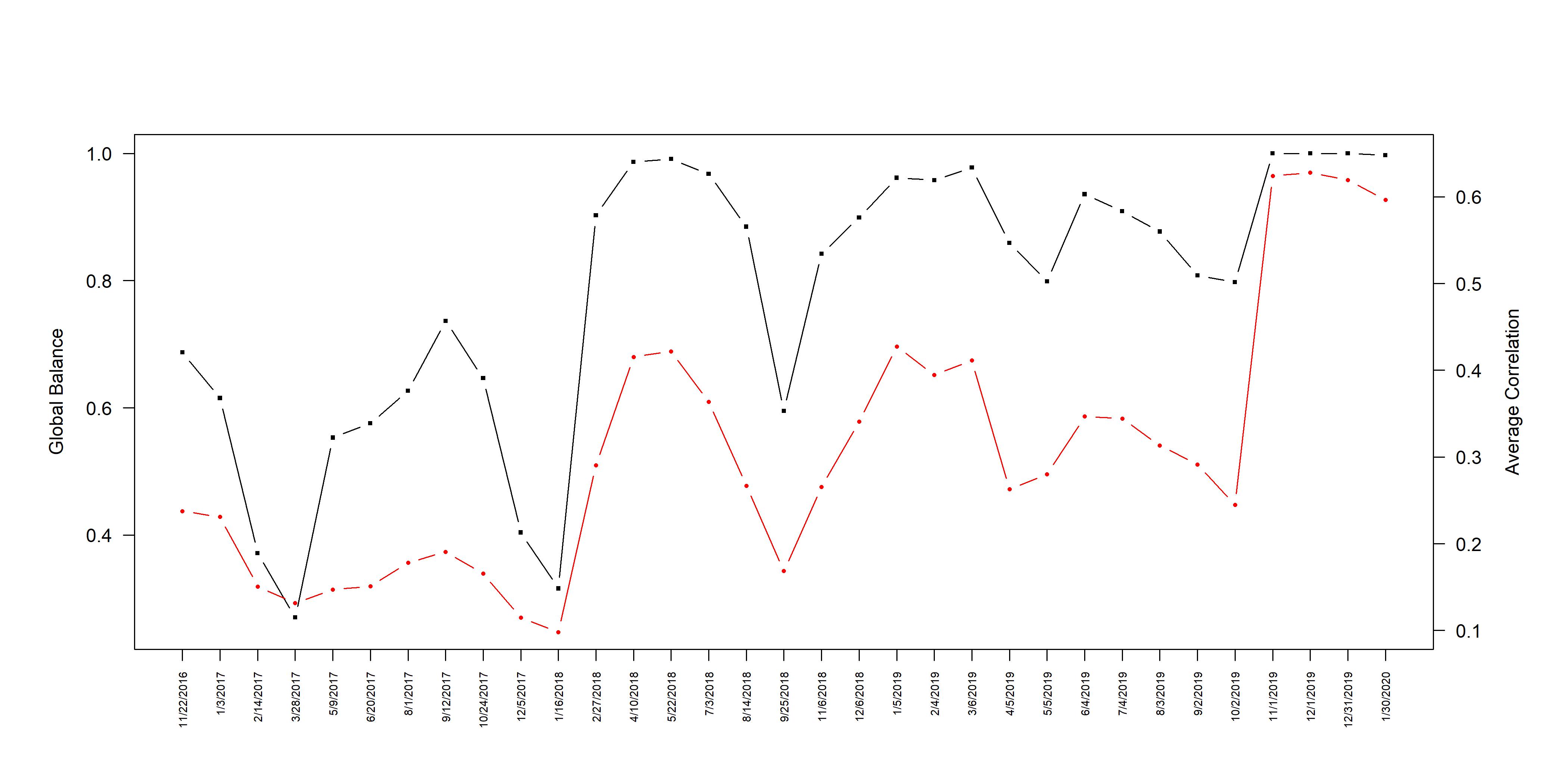}
	\caption{Global Balance (square dots black line) and Average Correlation (circular dots red lines), a sample of $50$ assets from S\&P500 daily returns, $\Delta T=100$ and $\Delta t=30$, for the period 2016-2020.}
	\label{fig12} 
\end{figure}

\section{Mean of the smoothed conditional distributions of the average returns.}\label{secA3}
\begin{table}[H]
	\footnotesize
	\begin{center}
		\begin{tabular}{c|c|c|c|c|c|}
			\cline{2-6}
			& \multicolumn{5}{c|}{Mean of the smoothed conditional distributions} \tabularnewline \cline{1-6}
			\multicolumn{1}{|l|}{\bf Interval} & $0<\kappa(G)\leq 0.2$ & $0.2<\kappa(G)\leq 0.4$ & $0.4<\kappa(G)\leq 0.6$ & $0.6<\kappa(G)\leq 0.8$ & $0.8<\kappa(G)\leq 1.0$  \tabularnewline \hline
			\multicolumn{1}{|c|}{\bf Weighted}
			& $0.001228899$
			& $0.001604856$
			& $0.0006486389$
			& $-0.001496206$
			& $-0.005628693$
			\tabularnewline \hline
			\multicolumn{1}{|c|}{\bf Binary}
			& $0.001228899$
			& $0.00232083$
			& $0.001939107$
			& $-0.001496206$
			& $-0.005628693$
			\tabularnewline \hline
			\hline
			\multicolumn{1}{|l|}{\bf Interval} & $0<\kappa(G)\leq 0.5$ & $0.5<\kappa(G)\leq 0.8$ & $0.8<\kappa(G)\leq 0.9$ & $0.9<\kappa(G)\leq 0.99$ & $0.99<\kappa(G)\leq 1.0$  \tabularnewline \hline
			\multicolumn{1}{|c|}{\bf Weighted} 
			& $0.001604856$
			& $-0.001496206$
			& $-0.003188417$
			& $-0.00783892$
			& $-0.00977485$
			\tabularnewline \hline
			\multicolumn{1}{|c|}{\bf Binary} 
			& $0.001604856$
			& $-0.001496206$
			& $-0.005236051$
			& $-0.005628693$
			& $-0.009021566$
			\tabularnewline \hline
		\end{tabular}
	\end{center}
	\caption{Mean of the average returns in the different balance bands. Weighted and binary refer to the two versions of the correlation networks.}
	\label{Table3}
\end{table}

\section{Eurostoxx 50 index analysis}\label{secA4}
We check the robustness of our results using a different database than the one used in the main text. We refer here to the constituents of the eurostoxx 50 index (ESX50). The cleaned dataset contains the daily returns of $42$ stocks from January 5, 2005 to September 17, 2020. We perform the test over windows of $45$ days. In Fig. \ref{fig13} we plot the mean, maximum and minimum average returns of the constituents, panel (a), and a zoom on the mean, panel (b). Due to the larger size of the time windows, we detect less but more severe events. In this new European dataset, we specifically identify the systemic crisis associated with COVID-19 in January 2020 with a threshold $\tau=-0.010$.
Fig. \ref{fig14} and Table \ref{Table4} illustrate the conditional smoothed distributions, the boxplots and the corresponding mean values for the weighted version of the correlation networks. We do not show the results for the binary network for the present example. The numerical results confirm the previous findings, reinforcing the interpretation of the global balance as an effective systemic risk indicator.
\begin{figure}[H]
	\centering
	\subfloat[]{\includegraphics[width=0.80\textwidth]{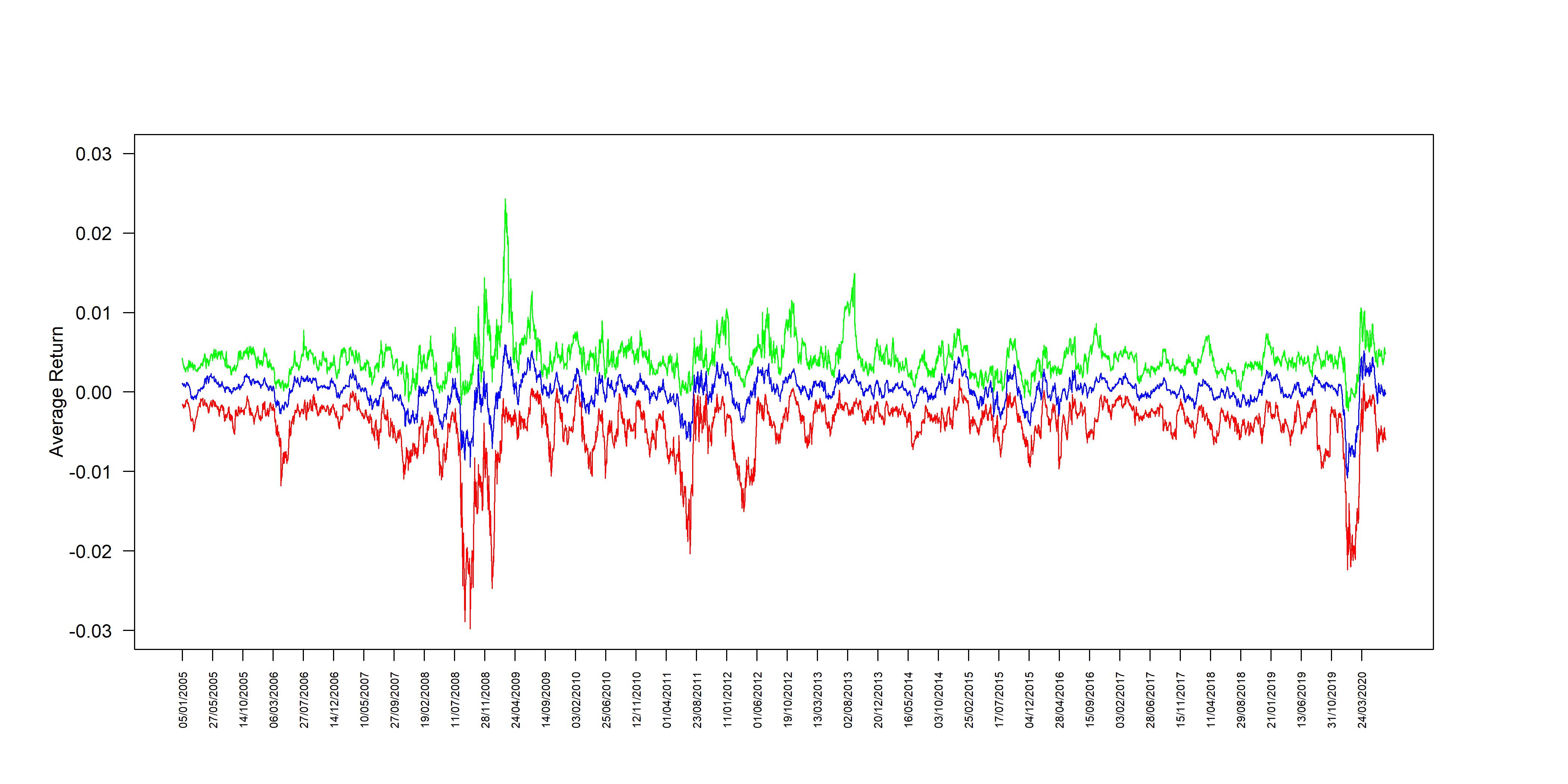}}\\
	\subfloat[]{\includegraphics[width=0.80\textwidth]{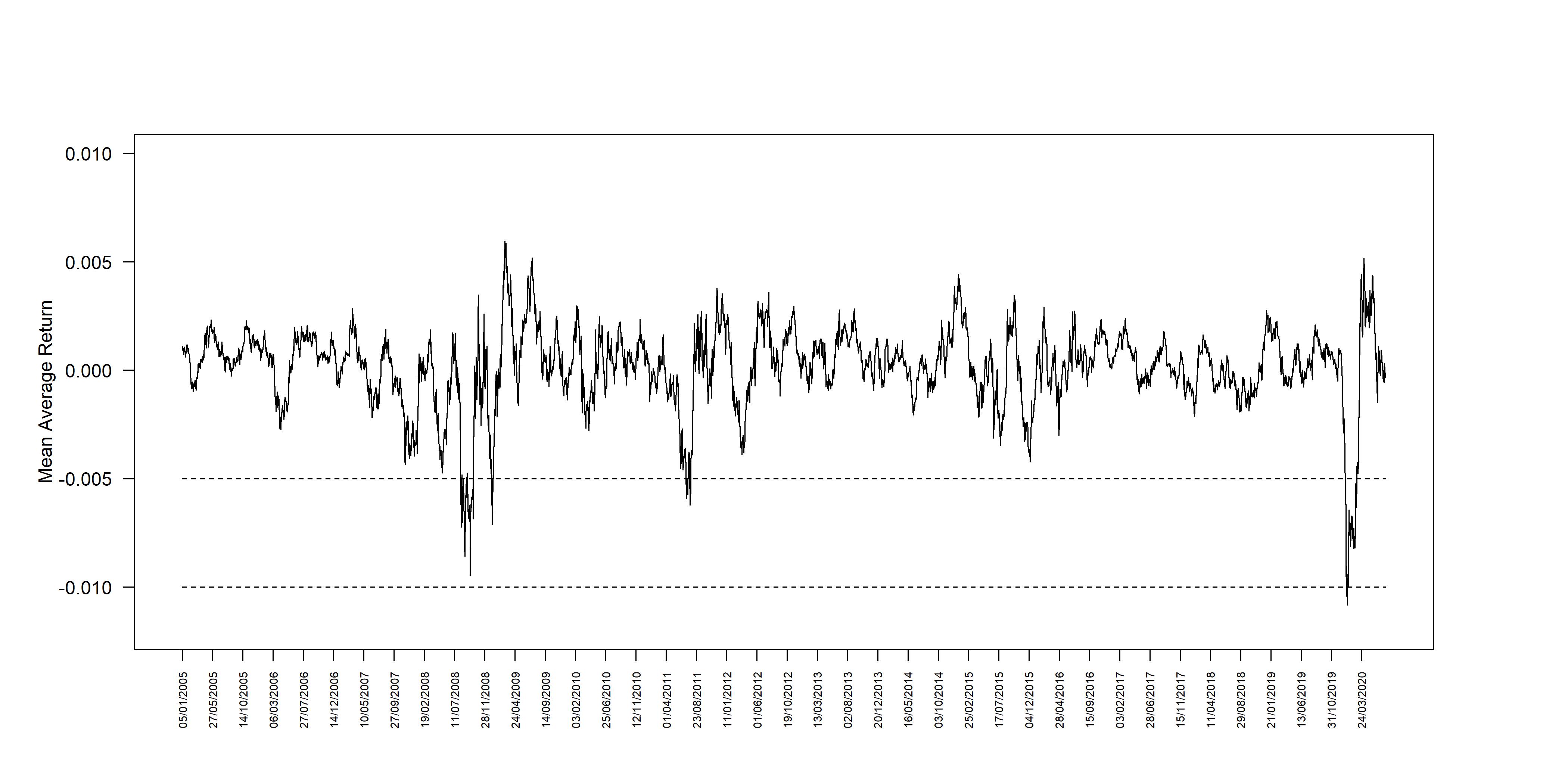}}
	\caption{Panel (a): Average returns of the eurostoxx 50 index (ESX50) in the period 2005-2020. The average returns of each asset have been computed over a $20$-days-wide sliding window with step $1$ day. The blue line represents the mean over all the assets in the same window of the average returns. The green and the red lines represent the max and the min, respectively, of the average returns in the same window. Panel (b) focuses on the mean of the average returns and highlight crisis events when the line goes below the two possible threshold represented by the two horizontal dashed lines.}
	\label{fig13} 
\end{figure}

\begin{figure}[H]
	\centering
	\subfloat[]{\includegraphics[width=0.80\textwidth]{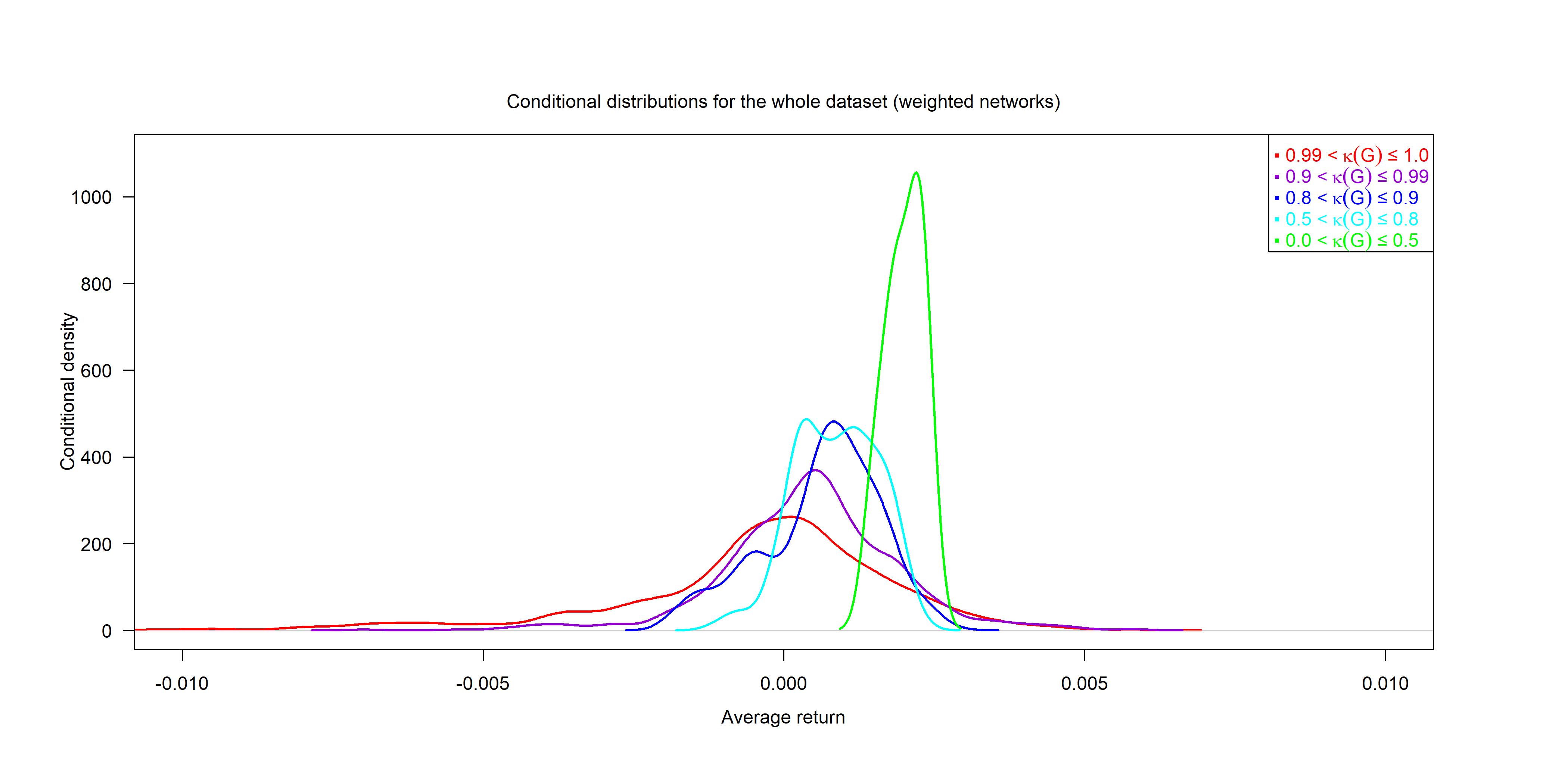}}\\
	\subfloat[]{\includegraphics[width=0.42\textwidth]{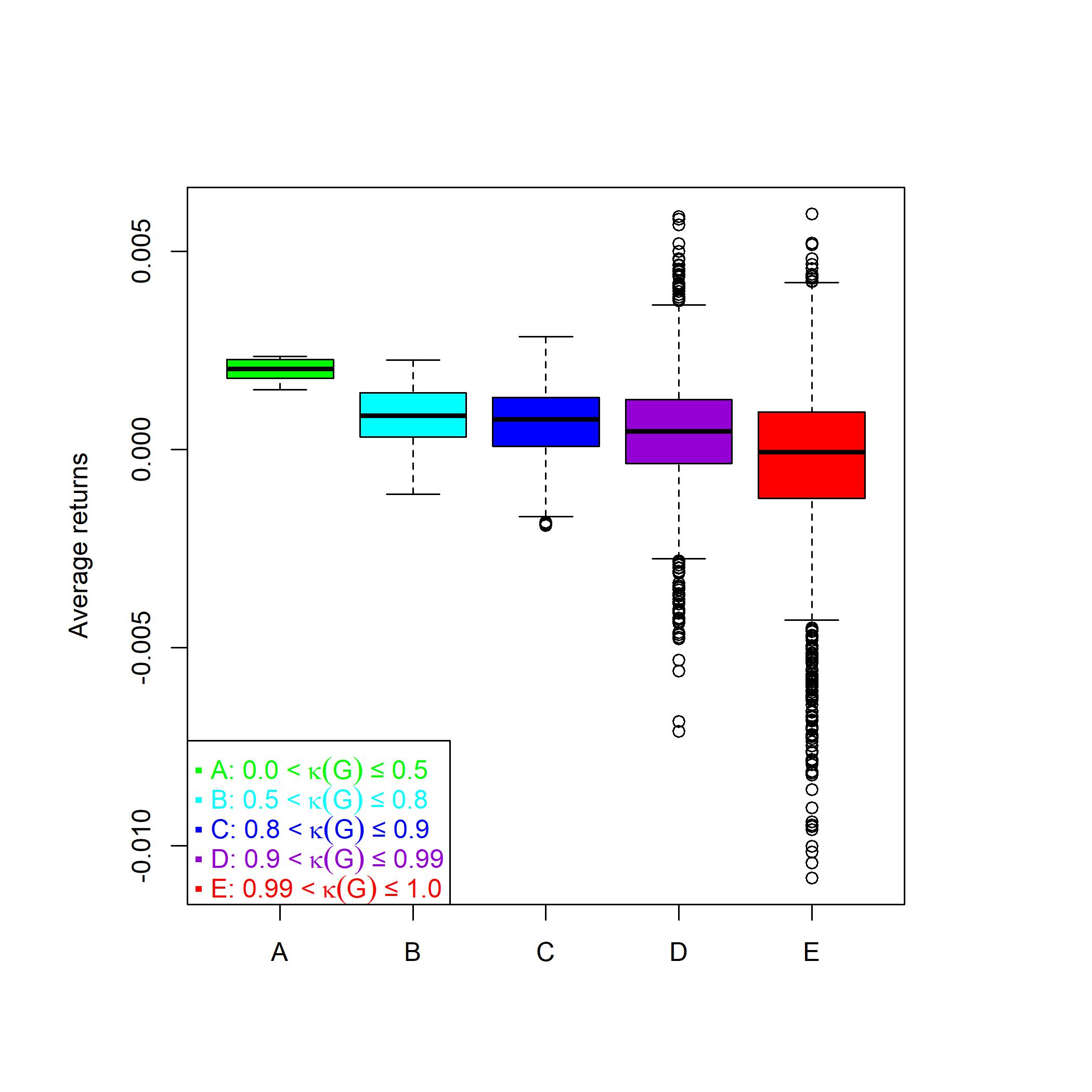}}
	\caption{Densities, panel (a), and boxplot, panel (b), of the conditional distributions of the mean average returns of the 42 constituents of the ESX dataset.}
	\label{fig14} 
\end{figure}

\begin{table}[H]
	\footnotesize
	\begin{center}
		\begin{tabular}{c|c|c|c|c|c|}
			\hline
			\multicolumn{1}{|l|}{\bf Interval} & $0<\kappa(G)\leq 0.5$ & $0.5<\kappa(G)\leq 0.8$ & $0.8<\kappa(G)\leq 0.9$ & $0.9<\kappa(G)\leq 0.99$ & $0.99<\kappa(G)\leq 1.0$  \tabularnewline \hline
			\multicolumn{1}{|c|}{\bf Mean density} 
			& $0.001932035$
			& $0.000561916$
			& $0.0004691143$
			& $-0.0006158585$
			& $-0.002433815$
			\tabularnewline \hline
			\multicolumn{1}{|c|}{\bf Mean values} 
			& $0.001999696$
			& $0.0008673605$
			& $0.000616936$
			& $0.0004399794$
			& $-0.0003903168$
			\tabularnewline \hline
		\end{tabular}
	\end{center}
	\caption{Mean of the real average returns (mean values) and mean of the smoothed densities (mean density) in the different balance bands for the ESX dataset in the weighted version of the correlation networks.}
	\label{Table4}
\end{table}

\end{document}